\let\cite\citep % \cite is then \citep of natbib
\DeclareMathAlphabet{\mathpzc}{OT1}{pzc}{m}{it}
\renewcommand*\env@matrix[1][\arraystretch]{%
  \edef\arraystretch{#1}%
  \hskip -\arraycolsep
  \let\@ifnextchar\new@ifnextchar
  \array{*\c@MaxMatrixCols c}}
\newcommand{\defeq}{\vcentcolon=}
\newcommand{\rdefeq}{=\vcentcolon}
\renewcommand\P{\mathcal{P}}
\newcommand\M{\mathcal{M}}
\newcommand\RR{\mathbb{R}}
\newcommand\CC{\mathbb{C}}
\newcommand\C{\mathcal{C}}
\renewcommand\1{\textbf{1}}
\newcommand\id{\textit{id}}
\newcommand\T{\mathcal{T}}
\renewcommand\H{\mathcal{H}}
\newcommand\A{\mathcal{A}}
\newcommand\E{\mathcal{E}}
\renewcommand\L{\mathcal{L}}
\newcommand\U{\mathcal{U}}
\newcommand\SO{\mathcal{SO}}
\newcommand\SL{\mathcal{SL}}
\newcommand\Q{\mathcal{Q}}
\newcommand\K{\mathcal{K}}
\newcommand\W{\mathcal{W}}
\newcommand\vphi{\varphi}
\newcommand\sC{\mathsf{C}}
\newcommand\sT{\mathsf{T}}
\newcommand\sW{\mathsf{W}}
\newcommand\sS{\mathsf{S}}
\newcommand\sM{\mathsf{M}}
\renewcommand\epsilon{\varepsilon}
\newcommand\rarrow{\rightarrow}
\newcommand\aut{\mathfrak{aut}}
\newcommand\LieG{\mathfrak{g}}
\newcommand\so{\mathfrak{so}}
\renewcommand\sl{\mathfrak{sl}}
\renewcommand\t{\tilde}
\renewcommand\b{\bar }
\newcommand\w{\wedge}
\renewcommand\d{\partial}
\newcommand\s{\sigma}
\renewcommand\-{^{-1}}
\newcommand\Ad{\text{Ad}}
\renewcommand\id{\text{id}}
\renewcommand\1{\mathds{1}}
\newcommand{\Rmnum}[1]{\expandafter\@slowromancap\romannumeral #1@}
\newcommand{\leqnomode}{\tagsleft@true\let\veqno\@@leqno}
\newcommand{\reqnomode}{\tagsleft@false\let\veqno\@@eqno}
\DeclareMathOperator{\Diff}{Diff}
\DeclareMathOperator{\Aut}{Aut}
\DeclareMathOperator{\Tr}{Tr}
\DeclareMathOperator{\im}{Im}
\newtheorem{thm}{Theorem}
\newtheorem{prop}[thm]{Proposition}
\theoremstyle{definition}
\begin{document}

%%%%%%%%%%%%%%%%%%%%%%%%%%
% standard LaTeX
\title{Twisted gauge fields}
\author{J. François${\,}^{a}$}
\date{}

\maketitle
\begin{center}
\vskip -0.8cm
\noindent
${}^a$ Service de Physique de l'Univers, Champs et Gravitation, Universit\'e de Mons -- UMONS\\
20 Place du Parc, B-7000 Mons, Belgique.
\end{center}
%%%%%%%%%%%%%%%%%%%%%%%%%%

\begin{abstract}
We propose a generalisation of the notion of associated bundles to a principal bundle constructed via group action cocycles rather than via mere representations of the structure group. We devise a notion of connection generalising Ehresmann connection on principal bundles, giving rise to the appropriate covariant derivative on sections of these twisted associated bundles (and on twisted tensorial forms). We study the action of the group of vertical automorphisms on the objects introduced (active gauge transformations). We also provide  the gluing properties of the local representatives (passive gauge transformations). The latter are generalised gauge fields: They satisfy the gauge principle of physics, but are of a different geometric nature than standard Yang-Mills fields. We also examine the conditions under which this new geometry coexists and mixes with the standard one. We show that (standard) conformal tractors and Penrose's twistors can be seen as  simple instances of this general picture. We also indicate that the twisted geometry arises naturally in the definition and study of anomalies in quantum gauge field theory. 
\end{abstract}

\textit{Keywords} : Differential geometry, group action cocycles, twisted connections, generalised gauge fields.

\vspace{1.5mm}

%%%%%%%%%%% MAIN TEXT %%%%%%%%%%%%%%%%%

\tableofcontents

\section{Introduction}  %%%%%%%%%%%%%%%%%%%%%%%%%%%%%%%%%%%%%%%%%%%%%%%%%%

Classical gauge field theory is founded on the differential geometry of connections on fibered spaces: Ehresmann (principal) connections  underlies Yang-Mills type theories, relevant for particle physics, while Cartan connections are the foundation for gauge theories of gravitation. 
%Let us very briefly remind how is it so. % for the reader who is not already a consummate expert.
Since Wigner, it is admitted that given a symmetry (either spatio-temporal of internal) identified as  a Lie group $H$, different kinds of (fundamental) particules correspond to different (irreducible) representations $(\rho, V)$ of $H$. But particle are actually manifestations of (quantized) fields. 
So,  one considers a principal bundle $\P(\M, H)$ over spacetime $\M$, to which are associated - for each representation - bundles $E$ whose sections $s: \M \rarrow E \in \Gamma(E)$ describe matter fields of different kinds (to be further quantized). 
%Connections on $\P$, of either Ehresmann or Cartan type, give Yang-Mills or gravitational gauge potentials on $\M$ and induce covariant differentiation on $\Gamma(E)$ that represents the minimal coupling of matter fields with gauge interactions. 
Ehresmann connections on $\P$ give Yang-Mills potentials on $\M$ and induce covariant differentiation on $\Gamma(E)$ that represents the minimal coupling of matter fields to gauge interactions. 
Similarly, Cartan connections on $\P$ give gravitational potentials on $\M$, and in many interesting cases also induce covariant differentiation representing the coupling of matter fields to gravity. 

A gauge field theory is specified by choosing a Lagrangian. This choice is constrained by a list of desiderata that might be long and in part guided by empirical data. But at the top of the list is that the Lagrangian should satisfy the gauge principle: it must be invariant (or quasi-invariant) under local (point dependent) transformations of the field variables. These gauge transformations are induced by the group of vertical automorphisms of the underlying principal bundle $\Aut_v(\P)$, which is a normal subgroup of its group of automorphisms $\Aut(\P)$, itself the subgroup of $\Diff(\P)$ that preserve the fibration  structure and projects as diffeomorphisms of spacetime.  We have the short exact sequence, $\Aut_v(\P) \xrightarrow{\iota} \Aut(\P) \xrightarrow{\pi} \Diff(\M)$. The gauge principle should then be understood as a direct, though abstract, extension of the principle of general covariance at the heart of General Relativity (GR), and the requirement of gauge invariance as a rather natural generalisation of the requirement of diffeomorphism invariance. 
 Our most successful theories of fundamental physics, from GR to the Standard Model (SM), are gauge field theories of this kind. 
%\medskip

Various generalisations, some far reaching, of the above differential geometric framework with their associated notions of connection have been proposed,  sometimes with physical motivations, often with relevant physical applications. The rise to popularity of supersymmetry and supergravity e.g. inspired the study of differential super-geometry, super-bundles and  super-connections \cite{Rogers2007}. The development of derivation-based noncommutative geometry (NCG) opened the possibility to give a geometrico-algebraic interpretation to the potential of the  the scalar field in the electroweak model when the latter is unified with the gauge potential in a noncommutative connection \cite{Dubois-Violette_kerner_Madore1990a, Dubois-Violette-Kerner-Madore1990b}. The same feat is made possible by defining a special class of connections on Lie algebroids \cite{Masson-Lazz, Masson-Lazz2013}, a framework  presenting the advantage of being closer to standard differential geometry now familiar to most. Famously, NCG à la Connes - using spectral triplets/actions - has ambitions matching its abstraction since it was advocated as having the ressources to  explain in a unified way various features of the SM as well as naturally incorporating GR with it \cite{Connes-Lott1991, Connes-Marcolli, Chamseddine-et-al2007}. See \cite{FLMasson} for a short review on  formulations of gauge theories, and \cite{vanSuijlekom} for application of NGC  in physics. 
\medskip

Here we put forward what we believe is an original  generalisation of  connections on principal bundles.
The main new ingredient is a cocycle for the action of the structure group $H$ on $\P$, from which a new notion of \emph{twisted} associated bundles is defined. A corresponding  notion of connection on $\P$ is needed, that generalises Ehresmann's and induces a good notion of covariant differentiation on sections of these twisted  bundles, and more generally on the space of twisted tensorial forms - whose subspace of degree $0$ is isomorphic with the space of twisted sections. 
We develop this picture  in extensive computational details. 
The expert differential geometer might find this repetitive, but we believe it benefits  the broader potential readership. 
  
  We also take care to verify that the construction is well-behaved under bundle morphisms (the construction is functorial), which means in particular  that there is a well-behaved right action of $\Aut_v(\P)$ on the new objects, defining their active gauge transformations. 
The local picture on $\M$ is also detailed, where the local representatives of the above global objects are seen as generalised - or twisted -  gauge fields. Indeed, they provide the means to naturally  implement the gauge principle, while being of a different geometric nature than Yang-Mills fields.

The conditions necessary for the above twisted geometry to coexist with the standard one are examined, and %at the risk of some repetitiveness 
we provide the complete explicit treatment of this \emph{mixed} geometry. In particular, mixed gauge fields appear in the local picture. 
The building of twisted/mixed gauge theories is then briefly sketched. 
We also attempt to identify a subclass of twisted/mixed connections that seems to be a reasonable generalisation of Cartan connections. 

Our proposal is   conservative, but it has relevant contacts with the literature both in mathematics and physics. We indeed indicate how conformal tractors and Penrose' local twistors can be seen as simple - and in a precise sense, degenerate - examples of the general framework to be developed here. As an aside, conformal gravity is shown to be an instance of mixed gauge theory hiding in plain sight. Finally, we point out that the twisted geometry we advocate appears naturally in the study of anomalies in quantum gauge field theory.

\section{Twisted functions and twisted associated vector bundles}%%%%%%%%%%%%%%%%%%%%%%%%%%%%%
\label{Twisted functions and twisted associated vector bundles}%%%%%%%%%%%%%%%%%%%%%%%%%%%%%%%

Given $\P$ a $H$-principal fiber bundle and $G$ a Lie group, consider the smooth map:
\begin{align}
\label{cocycle}
C : \P \times H &\rarrow G \notag \\
       (p, h) & \mapsto C_p(h), \qquad \text{s.t} \quad C_p(hh')=C_p(h) \,C_{ph}(h').  
\end{align}
This is known as a \emph{group action cocycle} \cite{Cartan-Eilenberg1956} (see also the mathematical literature concerned with abstract ergodic theory \cite{Zimmer1982, Zimmer1984}). %\footnote{A fact the author discovered only after the present paper was essentially completed.}
From the very definition follows:
\begin{align}
\begin{split}
C_p(h')=C_p(e)\,C_p(h') \quad &\rarrow \quad C_p(e)=e'. \\
C_p(h)=C_p(h)\,C_{ph}(e) \quad &\rarrow \quad C_{ph}(e)=e'.\\
C_p\big(hh\-\big)=C_p(e)=e'=C_p(h)\, C_{ph}\big(h\-\big) \quad &\rarrow \quad C_p(h)\-=C_{ph}\big(h\-\big).  \\
C_p\big(h\-h\big)=e'=C_p(h\-)\, C_{ph\-}(h) \quad &\rarrow \quad C_p\big(h\-\big) = C_{ph\-}(h)\-.
\end{split}
\end{align}
Notice that the defining relation \ref{cocycle} can be seen as an equivariance relation on $\P$,
\begin{align}
\label{HeqC}
%\begin{split}
C_{ph}(h')=C_p(h)\- C_p(hh'),   \quad \text{or} \quad
R^*_h C(h')= C(h)\- C(hh').  
%\end{split}
\end{align} 
%{\color{red} Write the infinitesimal version via Lie derivative. Use it to simplify proof top page 5.}

The  differential of this map is,
%\begin{align*}
$dC_{|(p, h)}=dC(h)_{|p} + dC_{p| h}: T_p\P \oplus T_hH \rarrow T_{C_p(h)}G$,
%\end{align*}
where $\ker dC(h)_{|p}=T_hH$ and $\ker dC_{p|h}=T_p\P$, with by definition:
\begin{align*}
dC(h)_{|p}(X_p)&=\tfrac{d}{dt}C_{\phi_t}(h) |_{t=0}, \qquad \phi_t \text{ the flow of } X\in \Gamma(T\P) \text{ and } \phi_{t=0}=p,\\
dC_{p|h}(Y_h)&=\tfrac{d}{dt} C_p(\vphi_t)|_{t=0},\qquad \vphi_t \text{ the flow of } Y\in \Gamma(TH) \text{ and } \vphi_{t=0}=h.
\end{align*}
Notice that $C_p(h)\-dC_{|(p,h)} : T_p\P \oplus T_hH \rarrow  T_{e'}G=\text{Lie}G$.

\subsection{Twisted functions} %%%%%%%%%%%%%%%%%%%%%%%%%%%%%%%%%%%%%%%%%%%%%%%%%%
\label{Twisted functions}%%%%%%%%%%%%%%%%%%%%%%%%%%%%%%%%%%%%%%%%%%%%%%%%%%%%%%%

Given a representation $(\rho, V)$ of $G$, define the space $\Omega_{\text{eq}}^0\big(\P, C(H)\big)$ of $V$-valued $C$-twisted equivariant smooth functions on $\P$ as:
\begin{align}
\label{equiv-function}
\vphi : \P &\rarrow V \qquad \qquad \text{s.t} \quad  R^*_h \vphi = \rho\left[ C(h)\-\right] \vphi.    \\[-2mm]
       p &\mapsto \vphi(p).  \notag
\end{align}
This is a well behaved space under the right action of $H$ on $\P$ since on the one hand $\vphi(phh')=\rho\left[ C_p(hh')\-\right] \vphi(p)$, and on the other hand:
\begin{align*}
\vphi(phh')=\rho\left[  C_{ph}(h')\-\right] \vphi(ph)&=\rho\left[  C_{ph}(h')\-\right] \rho\left[  C_{p}(h)\-\right] \vphi(p)\\
&=\rho\left[  C_{ph}(h')\-  C_p(h)\-\right] \vphi(p)=\rho\left[ \left( C_p(h) C_{ph}(h')\right)\-  \right] \vphi(p)=\rho\left[ C_p(hh')\-\right] \vphi(p).
\end{align*}

\subsection{Twisted associated vector bundles} %%%%%%%%%%%%%%%%%%%%%%%%%%%%%%%%%%%%%%%%%%%%%%%%%%
\label{Twisted associated vector bundles}%%%%%%%%%%%%%%%%%%%%%%%%%%%%%%%%%%%%%%%%%%%%%%%%%%%%%%%

We generalise the usual construction of associated bundles to $\P(\M, H)$ via representations of the structure group $H$.  
Define a right $H$-action on $\P \times V$ by,
\begin{align*}
    (\P \times V) \times H &\rarrow \P \times V \\
      \big( (p, v), h \big) & \mapsto \left( ph, \rho\big[C_p(h)\-\big]v \right)
\end{align*}
This is well defined since we have on the one hand $\left((p, v), hh'\right) \mapsto \left( phh', \rho\left[C_p(hh')\-\right]v\right) $, and on the other hand we find
$\left(\left( ph, \rho\left[C_p(h)\-\right]v \right)\!,\, h'\right) \mapsto  \left( (ph)h', \rho\left[C_{ph}(h')\-\right]\left(\rho\left[C_p(h)\-\right]v\right)\right) = \left( phh', \rho\left[\left(C_p(h)C_{ph}(h')\right)\-\right]v\right) $. Declare equivalent the paires in $\P\times H$ that are related by this right action,  note $\sim$ the equivalence relation, and $[p, v]$ the equivalence class of $(p, v)$. We have then the $C$-twisted associated bundle $E^C = \P \times_{\rho(C)} V \coloneqq \P \times V\,/\!\sim$. 
%{\color{red} Noter $\times_{\rho(C)}$ plutôt, so that rep still seen ? I use $\Ad_{C(H)}$ in Cartan... }
Denoting $\Gamma\big(E^C\big)$ the space of sections of the $C$-twisted associated bundle, as in the usual case we have the isomorphism $\Gamma\big(E^C\big) \simeq \Omega_{\text{eq}}^0\big(\P, C(H)\big)$. 

The question then arise as to find  natural differential operators on $\Gamma\big(E^C\big) \simeq \Omega_{\text{eq}}^0\big(\P, C(H)\big)$, and in particular a good notion of covariant differentiation. 
This requires the appropriate notion of connection.

\section{Twisted connection and covariant differentiation} %%%%%%%%%%%%%%%%%%%%%%%%%%%%%%%%%%%%%%%%%%%%%%%%%%
\label{Twisted connection and covariant differentiation} %%%%%%%%%%%%%%%%%%%%%%%%%%%%%%%%%%%%%%%%%%%%%%%%%%%%

\subsection{Vertical vector fields} %%%%%%%%%%%%%%%%%%%%%%%%%%%%%%%%%%%%%%%%%%%%%%%%%%
\label{Vertical vector fields}  %%%%%%%%%%%%%%%%%%%%%%%%%%%%%%%%%%%%%%%%%%%%%%%%%%

The space $\Gamma(T\P)$  of vector fields on $\P$ is a subalgebra of the derivations of smooth functions on $\P$, $\Omega^0(\P)$. The right action of $H$ on $\P$ gives the canonical space of vertical vector fields $\Gamma(V\P)$, that naturally act on any equivariant function of $\P$. In particular, for $X^v \in \Gamma(V\P)$, $X\in$ Lie$H$, and $\vphi \in \Omega_{\text{eq}}^0\big(\P, C(H)\big)$ we have:
\begin{align*}
\left(X^v \vphi \right)(p)=\tfrac{d}{d\tau} \vphi\left(pe^{\tau X}\right)\big|_{\tau=0}
=\tfrac{d}{d\tau} \, \rho\left[C_p\left(e^{\tau X}\right)\-\right]\vphi(p)\big|_{\tau=0}
=-\tfrac{d}{d\tau} \, \rho\left[C_p\left(e^{\tau X}\right) \right]\vphi(p)\big|_{\tau=0}
= -\rho_*\left[ \tfrac{d}{d\tau} C_p\left(e^{\tau X}\right)\big|_{\tau=0} \right]\vphi(p).
\end{align*}
Where $\rho_*$ is the induced representation of  Lie$G$. The relation $X^v(\vphi) = -\rho_*\left[ \tfrac{d}{d\tau}C \left(e^{\tau X}\right)\big|_{\tau=0} \right]\vphi$, is nothing but the infinitesimal $C$-equivariance of $\vphi$, eq \eqref{equiv-function}.
\bigskip

For later use, we also derive an identity flowing from the action of $[X^v, Y^v]$ and $[X, Y]^v$ on $\vphi$. First, we simply have  $[X, Y]^v \vphi=-\rho_*\left[\tfrac{d}{dt} C_p\left(e^{t [X, Y]} \right)\big|_{t=0}\right]\vphi$. Then, 
\begin{align*}
\left(X^v (Y^v \vphi)\right) (p) &= \left(- X^v \tfrac{d}{d\tau}\, \rho\left[  C\left( e^{\tau Y}\right) \right]\big|_{\tau=0} \vphi \right)(p)
					    =-\tfrac{d}{d\s}\tfrac{d}{d\tau}\, \rho\left[  C_{pe^{\s X}}\left( e^{\tau Y} \right)\right]  \vphi\left(pe^{\s X}\right) \big|_{\tau=0,\, \s=0}\, , \\
					    &= -\tfrac{d}{d\s}\tfrac{d}{d\tau}\, \rho\left[ C_p\left( e^{\s X} \right)\- C_p\left( e^{\s X}e^{\tau Y}\right) \right]  \rho\left [ C_p\left( e^{\s X} \right)\-\right] \big|_{\tau=0,\, \s=0}\,\vphi(p)\, ,\\
					    &= -\tfrac{d}{d\s}\, \rho\left[ C_p\left( e^{\s X} \right)\- \right] \tfrac{d}{d\tau}\, \rho \left[C_p\left( e^{\s X}e^{\tau Y}\right) \right] \big|_{\tau=0} \, \rho\left[ C_p\left( e^{\s X} \right)\-\right] \big|_{\s=0} \, \vphi(p)\, ,\\
					    &= \left(   -\tfrac{d}{d\s}\, \rho\left[C_p\left( e^{\s X} \right)\- \right]\big|_{\s=0} \, \tfrac{d}{d\tau}\, \rho\left[ C_p\left( e^{\tau Y} \right) \right]\big|_{\tau=0}    -   \tfrac{d}{d\s}\tfrac{d}{d\tau}\, \rho \left[C\left( e^{\s X}e^{\tau Y}\right)\right] \big|_{\s=0\, \tau=0}  \right.  \\
					    &\hspace{7cm} \left. -       \tfrac{d}{d\tau}\, \rho\left[ C_p\left( e^{\tau Y} \right)\right]\big|_{\tau=0}\, \tfrac{d}{d\s} \,\rho\left[C_p\left( e^{\s X}\right)\-\right]\big|_{\s=0}       \right) \vphi(p).
\end{align*}

\noindent By the same process
\begin{align*}
\left(Y^v (X^v \vphi)\right) (p) &=\left(   -\tfrac{d}{d\tau}\, \rho\left[C_p\left( e^{\tau Y} \right)\- \right]\big|_{\tau=0} \, \tfrac{d}{d\s}\, \rho\left[ C_p\left( e^{\s X} \right) \right]\big|_{\s=0}    -  \tfrac{d}{d\tau}\tfrac{d}{d\s}\, \rho \left[C\left( e^{\tau Y}e^{\s X}\right)\right] \big|_{\s=0\, \tau=0}  \right.  \\
					 &\hspace{7cm}    \left. -       \tfrac{d}{d\s} \, \rho\left[ C_p\left( e^{\s X} \right)\right]\big|_{\s=0}\, \tfrac{d}{d\tau} \,\rho\left[C_p\left( e^{\tau Y}\right)\-\right]\big|_{\tau=0}       \right) \vphi(p).
\end{align*}
Combining the two results, and since $[X, Y]^v=[X^v, Y^v]$, we obtain the identity,
\begin{align}
\label{Id1}
\left[X, Y\right]^v\vphi(p) = \left[ X^v, Y^v\right]\vphi(p) \quad \Rightarrow \quad
\tfrac{d}{dt}C_p\left(e^{t[X, Y]} \right)\big|_{t=0} =  \tfrac{d}{d\s} \tfrac{d}{d\tau} \left( C_p\left(e^{\s X}e^{\tau Y} \right) - C_p\left(e^{\tau Y} e^{\s X}\right) \right)\big|_{\s=0,\, \tau=0} .
\end{align}

\smallskip

\subsection{Connection} %%%%%%%%%%%%%%%%%%%%%%%%%%%%%%%%%%%%%%%
\label{Connection} %%%%%%%%%%%%%%%%%%%%%%%%%%%%%%%%%%%%%%%%%%

In the usual case, a covariant derivative on sections $\Gamma(E)\simeq \Omega_{\text{eq}}^0(P, V)$ of a bundle $E$ associated to $\P$ is obtained through the definition of an horizontal distribution $H\P$ that complements the  canonical vertical distribution $V\P$ on $\P$, so that $\forall p \in \P$, $T_p\P= V_p\P \oplus H_p\P$ and $R_{h*} H_p\P=H_{ph}\P$. This is Ehresmann's far reaching notion of a connection. Most often, such an horizontal distribution is defined as the kernel of a connection $1$-form $\omega \in \Omega^1(\P, \text{Lie}H)$ ($\forall p \in \P, H_p\P\defeq \ker\omega_p$) satisfying the algebraic axioms - equivalent to the previous geometric axioms: $\omega_p(X_p^v)=X$, for $X \in \text{Lie}H$, and $R^*_h\omega_{ph}=\Ad_{h\-} \omega_p$. The covariant derivative of a section $\vphi \in \Omega_{\text{eq}}^0(P, V)$ along an horizontal vector field $X^h \in \Gamma(H\P)$ is then $X^h \vphi$, and satisfies $R^*_h X^h\vphi= \rho(h\-) X^h \vphi$ , so that $X^h\vphi \in \Omega_{\text{eq}}^0(P, V)$.% as expected. 

The curvature of the connection form is defined as $\Omega(X,Y) \defeq d\omega(X^h, Y^h)$, for $X,Y \in \Gamma(T\P)$. It is clearly a tensorial $2$-form ($\Ad$-equivariant and horizontal): $\Omega \in \Omega_{\text{tens}}^2(\P, \Ad)$. It also happens to satisfy Cartan's structure equation: $\Omega=d\omega+\tfrac{1}{2}[\omega, \omega]$. The curvature is the motivating example for defining the exterior covariant derivative on equivariant forms, $D : \Omega^\bullet_{\text{eq}}(\P, \rho) \rarrow \Omega^\bullet_{\text{tens}}(\P, \rho)$.
Denoting the horizontal projection by $|^h : \Gamma(T\P) \rarrow \Gamma(H\P)$, one indeed defines  $D\defeq d\circ |^h$, so that on the one hand $\Omega = D \omega$, and on the other hand $X^h\vphi=D\vphi(X)$. The exterior covariant derivative thus generalises the covariant derivative of sections. It happens that on tensorial forms, it can be expressed in terms of the connection $1$-form : $D\alpha =d\alpha + \rho_*(\omega)\alpha$, for $\alpha \in \Omega^\bullet_{\text{tens}}(P, \rho)$. 
\bigskip

It turns out that to define a notion of covariant differentiation on sections of $C$-twisted associated bundles - or $C$-twisted equivariant functions -  it is unnecessary to define an horizontal distribution. %\footnote{Problematic even. See appendix.}  
All that is required is the appropriate notion of connection $1$-form.%\footnote{{\color{gray}Notice that Cartan connections, which Ehresmann connections generalise, have no non-trivial kernel. So the notion of horizontal distribution is also bypassed, although it can be recovered in the important special class of reductive Cartan geometries. See \cite{Sharpe}.}}

Let us propose a Lie$G$-valued $1$-form $\omega \in \Omega^1(\P, LieG)$ satisfying:
\begin{align}
\label{1st axiom}
\omega_p(X_p^v)=\tfrac{d}{d\tau} C_p\left( e^{\tau X} \right)\big|_{\tau=0}=dC_{p|e}(X),       \qquad \text{for } X_p^v \in V_p\P \text{ generated by } X\in \text{Lie}H.                       \tag{\Rmnum{1}}
\end{align}
Given that $R_{*h} X_p^v =\tfrac{d}{d\tau} pe^{\tau X}h\big|_{\tau=0}=\tfrac{d}{d\tau} phh\-e^{\tau X}h\big|_{\tau=0}=\tfrac{d}{d\tau} phe^{\tau h\-Xh}\big|_{\tau=0}\rdefeq \left(\Ad_{h\-}X\right)^v_{ph}$, we deduce the equivariance of $\omega$ on $V\P$ :
\begin{align*}
\left( R^*_h \omega_{ph}\right) \left(X_p^v\right)&=\omega_{ph}\left( R_{h*}X_p^v \right)= \omega_{ph}\left(\left(\Ad_{h\-}X \right)_{ph}^v\right) 
									  = \tfrac{d}{d\tau} C_{ph}\left( e^{\tau \Ad_{h\-}X}\right)\big|_{\tau=0}
									  =\tfrac{d}{d\tau} C_{ph}\left( h\-e^{\tau X}h\right)\big|_{\tau=0}\, , \\
									&= \tfrac{d}{d\tau} C_{ph}\left(h\- \right) C_p\left( e^{\tau X}h \right)\big|_{\tau=0}
									 =C_p\left(h \right)\- \tfrac{d}{d\tau} C_p\left(e^{\tau X} \right) C_{pe^{\tau X}}\left( h\right)\big|_{\tau=0} \, ,  \\
									&= C_p\left(h \right)\-    \bigg(  \underbrace{\tfrac{d}{d\tau} C_p\left( e^{\tau X}\right)\big|_{\tau=0}}_{\omega_p\left(X_p^v \right)} C_p\left(h \right)  +   \underbrace{\tfrac{d}{d\tau} C_{pe^{\tau X}}(h)}_{\left(X^v C(h)\right)(p)}  \bigg)\, ,\\
									 &=\left(C_p\left(h \right)\-  \omega_p\, C_p(h) + C_p\left(h \right)\- dC(h)_{|p} \right) \left( X_p^v \right),
\end{align*}
%where the last equality stems from rewritting $\left(X^v C(h)\right)(p)$ as $dC(h)_p \left(X_p^v\right)$. 
We extend this result by requiring that it holds on the full tangent bundle of $\P$, i.e. $\omega$ has prescribed equivariance:
\begin{align}
\label{2nd axiom}
R^*_h \omega_{ph}\ = C_p\left(h \right)\-  \omega_p\, C_p(h) + C_p\left(h \right)\- dC(h)_{|p} .	 		\tag{\Rmnum{2}}
\end{align}
This prescription is well-behaved  with respect to (w.r.t) the right action of $H$ on $\P$. Indeed,  for $h,h' \in H$ we have the composition of pullbacks $R^*_{h'} \circ R^*_h= (R_h \circ R_{h'})^*= R_{h'h}^*$. So that on the one hand $R^*_{h'} \left( R^*_h\omega\right)= R^*_{h'h}\omega= C(h'h)\-\omega\, C(h'h) + C(h'h)\-d C(h'h)$. On the other hand, a direct computation using \eqref{HeqC} gives:
\begin{align*}
R^*_{h'} \left( R^*_h\omega\right) &= R^*_{h'}\left( C(h)\- \omega\, C(h) + C(h)\-dC(h) \right), \\
						&= C(h'h)\- C(h') \left( C(h')\-\omega\, C(h') + C(h')\-d C(h')\right) C(h')\- C(h'h) + C(h'h)\- C(h')\,d \left(C(h')\-C(h'h) \right), \\
						&= C(h'h)\- \omega\, C(h'h) + C(h'h)\-dC(h')\cdot C(h')\- C(h'h) + C(h'h)\-C(h')\, dC(h')\- \cdot C(h'h) + C(h'h)\-dC(h'h), \\
						&=C(h'h)\- \omega\, C(h'h)  + C(h'h)\-dC(h'h)= R^*_{h'h}\omega.
\end{align*}
%By the way, \eqref{2nd axiom} is also consistent with the first axiom since, first:
%\begin{align*}
%R^*_h \omega_{ph} (X^v_p)&= \omega_{ph}\left(R_{*h}X^v_p \right)= \omega_{ph}\left( \Ad_{h\-}X\big|_{ph}^v \right)= \tfrac{d}{d\tau}C_{ph}\left( e^{\tau h\-Xh} \right)\big|_{\tau=0}, \\
%					  &= \tfrac{d}{d\tau} C_{ph}\left( h\- e^{\tau X}h \right)\big|_{\tau=0}= \tfrac{d}{d\tau} C_{ph}\left( h\-\right) C_p\left( e^{\tau X}h \right)\big|_{\tau=0}
%					  = C_p(h)\- \tfrac{d}{d\tau} C_p\left( e^{\tau X}h \right)\big|_{\tau=0}.
%\end{align*}
%And second:
%\begin{align*}
%\left( C_p\left(h \right)\-  \omega_p C_p(h) + C_p\left(h \right)\- dC(h)_p \right) (X_p^v) 
%   			&= C_p\left(h \right)\-  \left(  \tfrac{d}{d\tau} C_p\left( e^{\tau X}\right)\big|_{\tau=0} C_p(h) +\tfrac{d}{d\tau} C_{pe^{\tau X}}(h) \big|_{\tau=0}  \right), \\
%   			&= C_p\left(h \right)\-  \left(  \tfrac{d}{d\tau} C_p(e^{\tau X}) C_{pe^{\tau X}}(h) \big|_{\tau=0}  \right)= C_p(h)\- \tfrac{d}{d\tau} C_p\left( e^{\tau X}h \right)\big|_{\tau=0}.
%\end{align*}
The axioms \eqref{1st axiom} and \eqref{2nd axiom} define our notion of \emph{twisted connection}. 
Let us denote $\C(\P)^T$ the space of these connections.

\medskip
 
For later use, and because it is a result in its own right, we here give the infinitesimal version of the  equivariance law \eqref{2nd axiom} of $\omega$. Given $X^v \in \Gamma(V\P)$, we have:
\begin{align}
\label{Id2}
\left(L_{X^v}\omega \right)_p&\defeq \tfrac{d}{d\tau} R^*_{e^{\tau X}} \omega_{pe^{\tau X}} \big|_{\tau=0}
                          =\tfrac{d}{d\tau} \left( C_p\left(e^{\tau X}\right)\- \omega_p \, C_p\left(e^{\tau X}\right) +  C_p\left(e^{\tau X}\right)\- d C\left( e^{\tau X}\right)_p  \right)\big|_{\tau=0}\, ,  \notag\\
			&= -\underbrace{\tfrac{d}{d\tau} C_p\left( e^{\tau X}\right)\big|_{\tau=0}}_{dC_{p|e}(X))} \omega_p + \omega_p\, \tfrac{d}{d\tau} C_p\left( e^{\tau X}\right)\big|_{\tau=0}  - \tfrac{d}{d\tau} C_p\left(e^{\tau X}\right)\big|_{\tau=0} \underbrace{dC(e)_p}_{=0} +\, d\left( \tfrac{d}{d\tau} C_p\left( e^{\tau X} \right)\big|_{\tau=0} \right), \notag \\
L_{X^v}\omega &= d\big(dC_{|e}(X)\big)+ \left[\omega, dC_{|e}(X) \right].   \tag{\Rmnum{2}.b}
\end{align}

\reqnomode

\subsection{Covariant differentiation} %%%%%%%%%%%%%%%%%%%%%%%%%%%%%%%%%%%%%%%%%%%%%
\label{Covariant differentiation} %%%%%%%%%%%%%%%%%%%%%%%%%%%%%%%%%%%%%%%%%%%%%%%%

Consider the space of $C$-equivariant differential forms $\Omega^\bullet_{\text{eq}}\big(\P, C(H)\big)\!\defeq\! \left\{ \alpha\in \Omega^\bullet(\P, V) \,|\,R^*_h \alpha_{ph}=\rho \left[C_p\left( h\right)\- \right] \alpha_p\right\}$. The infinitesimal version of this equivariance property is given by the Lie derivative along a vertical vector field:
\begin{align}
\label{LieDerAlpha}
L_{X^v}\alpha = \tfrac{d}{d\tau} R^*_{e^{\tau X}} \alpha\big|_{\tau=0} = \tfrac{d}{d\tau} \rho\left[ C\left(e^{\tau X}\right)\-\right] \alpha\big|_{\tau=0} = -\rho_* \left[ \tfrac{d}{d\tau} C\left( e^{\tau X}\right) \big|_{\tau=0} \right] \alpha.
\end{align}
Elements of the subspace of tensorial $C$-equivariant forms $\Omega^\bullet_{\text{tens}}\big(\P, C(H)\big)$ further satisfies $\alpha(X^v, Y, \ldots)=0$ for $X^v \in \Gamma(V\P)$.
Sections of a $C$-twisted associated bundles are also $C$-tensorial $0$-forms, $\vphi \in \Omega_{\text{tens}}^0\big(\P, C(H)\big)=\Omega_{\text{eq}}^0\big(\P, C(H)\big)$.

\begin{prop} %%%%%%%%%%%%%%%%%%%%%%%%%%%%%%%%%%%%%%%%%%%%%%%
\label{CovDiff}
The exterior covariant derivative  defined as $D\defeq d\, +\! \rho_*(\omega)$ preserves both $ \Omega_{\text{eq}}\big(\P, C(H)\big)$ and $ \Omega_{\text{tens}}\big(\P, C(H)\big)$.
\end{prop}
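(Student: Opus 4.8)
The plan is to establish the two claims in turn: that $D$ maps $\Omega^\bullet_{\text{eq}}\big(\P, C(H)\big)$ to itself, and that, once this is known, it also preserves the horizontality condition cutting out $\Omega^\bullet_{\text{tens}}\big(\P, C(H)\big)$. Throughout I fix $\alpha \in \Omega^\bullet_{\text{eq}}\big(\P, C(H)\big)$ and compute with $D = d + \rho_*(\omega)$. For equivariance I would verify the finite condition $R^*_h (D\alpha) = \rho\big[C_p(h)\-\big]\, D\alpha$ directly; the infinitesimal version via $L_{X^v}$ is an equivalent alternative that would instead invoke identity \eqref{Id2} for $L_{X^v}\omega$, but the finite computation avoids any appeal to connectedness of the fibres. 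Expanding, $R^*_h(D\alpha) = d(R^*_h\alpha) + \rho_*(R^*_h\omega)\,R^*_h\alpha$, using that pullback commutes with $d$ and that $\rho_*$ is linear. I then substitute the $C$-equivariance of $\alpha$, $R^*_h\alpha = \rho\big[C(h)\-\big]\alpha$, and the equivariance law \eqref{2nd axiom} of the connection, $R^*_h\omega = \Ad_{C(h)\-}\omega + C(h)\-dC(h)$.

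The computation rests on two elementary identities for the $G$-valued function $p \mapsto C_p(h)$ at fixed $h$: first, the chain rule $d\big(\rho[C(h)\-]\big) = -\,\rho_*\big(C(h)\-dC(h)\big)\,\rho[C(h)\-]$; second, the intertwining property $\rho_*\big(\Ad_{C(h)\-}\omega\big) = \rho[C(h)\-]\,\rho_*(\omega)\,\rho[C(h)]$. Applying the Leibniz rule to $d\big(\rho[C(h)\-]\alpha\big)$ produces the term $-\rho_*\big(C(h)\-dC(h)\big)\rho[C(h)\-]\alpha$, while $\rho_*$ applied to the inhomogeneous summand $C(h)\-dC(h)$ appearing in \eqref{2nd axiom} produces $+\rho_*\big(C(h)\-dC(h)\big)\rho[C(h)\-]\alpha$. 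The crux is that these two contributions cancel exactly, after which the intertwining property collapses what remains to $\rho[C(h)\-]\big(d\alpha + \rho_*(\omega)\alpha\big) = \rho[C(h)\-]D\alpha$. This cancellation is the single point where the precise affine form of \eqref{2nd axiom} is used, and it is exactly what replaces the trivial constancy of $\rho(h\-)$ available in the untwisted case. I expect this matching of the two terms to be the main obstacle, in the sense that the whole argument hinges on the derivative of the point-dependent twist factor $\rho[C(h)\-]$ being compensated by the inhomogeneous term of the connection's transformation law.

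For horizontality, suppose in addition that $\alpha$ is tensorial, so $\iota_{X^v}\alpha = 0$ for every vertical $X^v \in \Gamma(V\P)$; I would show $\iota_{X^v}(D\alpha)=0$. Contracting $D\alpha = d\alpha + \rho_*(\omega)\alpha$ with $X^v$ and using Cartan's formula $L_{X^v} = \iota_{X^v}d + d\,\iota_{X^v}$ together with $\iota_{X^v}\alpha = 0$ gives $\iota_{X^v}d\alpha = L_{X^v}\alpha = -\rho_*\big(dC_{|e}(X)\big)\alpha$ by the infinitesimal equivariance \eqref{LieDerAlpha}. For the second piece, the antiderivation property of the interior product and $\iota_{X^v}\alpha = 0$ leave $\iota_{X^v}\big(\rho_*(\omega)\alpha\big) = \rho_*\big(\omega(X^v)\big)\alpha = \rho_*\big(dC_{|e}(X)\big)\alpha$, the last equality being axiom \eqref{1st axiom}. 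The two contributions cancel, so $\iota_{X^v}(D\alpha)=0$; together with the equivariance established above, this places $D\alpha$ in $\Omega^\bullet_{\text{tens}}\big(\P, C(H)\big)$ and completes the proof. Compared with the equivariance step, this part is routine, since it reuses precisely the data — axiom \eqref{1st axiom} and the infinitesimal law \eqref{LieDerAlpha} — that already encode how $\omega$ and $\alpha$ see the vertical directions.
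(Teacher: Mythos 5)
Your proof is correct and follows essentially the same route as the paper: the same finite computation for equivariance (Leibniz rule, cancellation of the $d\rho[C(h)\-]$ term against the inhomogeneous $\rho_*(C(h)\-dC(h))$ term from axiom \eqref{2nd axiom}, then the intertwining property), and the same combination of Cartan's magic formula, \eqref{LieDerAlpha}, and axiom \eqref{1st axiom} for horizontality. The only cosmetic difference is that you phrase the horizontality step via interior products in arbitrary degree, whereas the paper evaluates a $1$-form on $(X^v, Y)$ and notes that this suffices.
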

\begin{proof} %%%%%%%%%%%%%%%%%%%%%%%%%%%%%%%%%%%%%%%%%%%%%%
First, we show that  $D:  \Omega^\bullet_{\text{eq}}\big(\P, C(H)\big) \rarrow \Omega^\bullet_{\text{eq}}\big(\P, C(H)\big)$. For $\alpha \in \Omega^\bullet_{\text{eq}}\big(\P, C(H)\big)$:
\begin{align*}
R^*_h D\alpha &= d R^*_h \alpha + \rho_*\left(R^*_h \omega \right) R^*_h \alpha, \\
		       &= d \rho\left[C_p(h)\-\right]  \cdot \alpha + \rho\left[C_p(h)\-\right]d \alpha + \rho_*\left( C(h)\- \omega\, C(h) + C(h)\-dC(h) \right) \rho\left[C(h)\-\right]\alpha,\\
		       &= \rho\left[C(h)\-\right] \big(d\alpha + \rho_*(\omega)\alpha \big)=\rho\left[C(h)\-\right] D\alpha.
\end{align*}
So indeed $D\alpha \in \Omega^\bullet_{\text{eq}}\big(\P, C(H)\big)$. Here we used the second defining property \eqref{2nd axiom} of $\omega$.

Then we show that $D:  \Omega^\bullet_{\text{tens}}\big(\P, C(H)\big) \rarrow \Omega^\bullet_{\text{tens}}\big(\P, C(H)\big)$. It is enough to prove it for $\alpha \in \Omega^1_{\text{tens}}\big(\P, C(H)\big)$:
\begin{align*}
D\alpha(X^v, Y) &= \left(d\alpha+\rho_*(\omega)\alpha \right)(X^v, Y), \\
		     % &= X^v\cdot \omega(Y) - Y\cdot \omega(X^v) - \omega([X^v, Y]) + \rho_*(\omega(X^v))\alpha(Y) - \rho_*(\omega(Y))\underbrace{\alpha(X^v)}_{=0}.
		         &= \underbrace{d\alpha(X^v, Y)}_{\left( L_{X^v}\alpha \right)(Y)}+ \rho_*(\omega(X^v))\alpha(Y) - \rho_*(\omega(Y))\underbrace{\alpha(X^v)}_{=0}, \\
		         &=-\rho_* \left[ \tfrac{d}{d\tau} C\left( e^{\tau X}\right) \big|_{\tau=0} \right] \alpha(Y) + \rho_* \left[ \tfrac{d}{d\tau} C\left( e^{\tau X}\right) \big|_{\tau=0} \right] \alpha(Y)=0.
\end{align*}
We used Cartan's magic formula $L_X=i_X d+ di_X$ and the first defining property \eqref{1st axiom} of $\omega$.
\end{proof}

This operator provide the adequate notion of covariant differentiation on $\Gamma\big(E^C\big)\simeq \Omega_{\text{tens}}^0\big(\P, C(H)\big)$. Indeed for any section $\vphi$, its covariant derivative is $D\vphi= d\vphi +\rho_*(\omega)\vphi \in \Omega_{\text{tens}}^1\big(\P, C(H)\big)$.

\subsection{Curvature} %%%%%%%%%%%%%%%%%%%%%%%%%%%%%%%%%%%%%%%%%%%%%
\label{Curvature} %%%%%%%%%%%%%%%%%%%%%%%%%%%%%%%%%%%%%%%%%%%%%%%%

The curvature $2$-form of the twisted connection is defined via Cartan's structure equation: $\Omega\defeq d\omega + \tfrac{1}{2}[\omega, \omega]$. It then identically satisfies the Bianchi identity: $d\Omega + [\omega, \Omega]=0$.

\begin{prop}%%%%%%%%%%%%%%%%%%%%%%%%%%%%%%%%%%%%%%%%%%%%%%%
\label{curvature}
The curvature is a $C$-tensorial $2$-form, $\Omega \in \Omega_{\text{tens}}^2\big(\P, C(H)\big)$.
\end{prop}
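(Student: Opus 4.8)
The plan is to verify the two conditions that characterise a $C$-tensorial $2$-form valued in $\text{Lie}G$, the relevant representation of $G$ being the adjoint one: first the twisted $\Ad$-equivariance $R^*_h \Omega_{ph} = C_p(h)\-\, \Omega_p\, C_p(h)$, and second the horizontality $i_{X^v}\Omega = 0$ for all $X^v \in \Gamma(V\P)$. The Bianchi identity is already supplied by the structure equation, so it has no part to play here.

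For the equivariance, I would first observe that axiom \eqref{2nd axiom} is formally the transformation law of a principal connection under a gauge transformation of parameter $g := C(h)$, namely $R^*_h\omega = g\- \omega\, g + g\- dg$. Feeding this into Cartan's structure equation $\Omega = d\omega + \tfrac12[\omega,\omega]$ and expanding — using $d(g\-) = -\,g\-\, dg\, g\-$ to treat the derivative of the inhomogeneous term — the $g\- dg$ contributions cancel pairwise and one recovers the homogeneous law $R^*_h\Omega = g\-\, \Omega\, g = C(h)\-\, \Omega\, C(h)$. This is exactly the required twisted equivariance in the adjoint representation; it is the standard ``curvature transforms tensorially'' computation, transplanted to the present setting.

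For the horizontality I would contract with a vertical field and use the two facts already at hand. Abbreviating $\lambda^X := dC_{|e}(X)$, so that axiom \eqref{1st axiom} reads $\omega(X^v) = \lambda^X$, the quadratic term gives $i_{X^v}\tfrac12[\omega,\omega] = [\omega(X^v),\omega] = [\lambda^X,\omega]$. For the linear term, Cartan's magic formula yields $i_{X^v}\, d\omega = L_{X^v}\omega - d\big(i_{X^v}\omega\big) = L_{X^v}\omega - d\lambda^X$; substituting the infinitesimal equivariance \eqref{Id2}, that is $L_{X^v}\omega = d\lambda^X + [\omega,\lambda^X]$, the exact pieces cancel and leave $i_{X^v}\, d\omega = [\omega,\lambda^X]$. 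Summing, $i_{X^v}\Omega = [\omega,\lambda^X] + [\lambda^X,\omega] = 0$ by antisymmetry of the bracket, which is precisely horizontality.

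I expect the equivariance to be essentially automatic once \eqref{2nd axiom} is recognised as a gauge transformation, so the genuine content sits in the horizontality step. Its whole weight rests on the precise shape of the infinitesimal law \eqref{Id2}: its inhomogeneous piece must be exactly $d\lambda^X = d\big(i_{X^v}\omega\big)$ so as to annihilate the $d(i_{X^v}\omega)$ produced by Cartan's formula, leaving only the two brackets that then cancel. Checking that this matching is exact — rather than off by a sign or a stray derivative term — is the main point to watch; the rest is routine bookkeeping with the structure equation.
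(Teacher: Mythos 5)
Your proof is correct. The equivariance part is exactly the paper's computation: expand Cartan's structure equation using axiom \eqref{2nd axiom} and watch the inhomogeneous $C(h)\-dC(h)$ terms cancel. Where you genuinely diverge is in the horizontality step, and your route is the more economical one. The paper splits horizontality into two cases: it first proves $\Omega(X^v,Y^v)=0$ for two vertical fields by brute-force expansion of $d\omega(X^v,Y^v)$, which requires invoking the separate cocycle-commutator identity \eqref{Id1}; it then proves $\Omega(X^v,Y)=0$ for arbitrary $Y$ using Cartan's magic formula together with the infinitesimal equivariance \eqref{Id2} --- essentially your computation, written out against a second argument $Y$. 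Your single contraction $i_{X^v}\Omega = [\omega,\lambda^X] + [\lambda^X,\omega] = 0$, with $\lambda^X := dC_{|e}(X) = \omega(X^v)$, subsumes both cases at once (a $2$-form killed by every interior product with a vertical field is horizontal, by antisymmetry), so the identity \eqref{Id1} is never needed. In effect you noticed that the paper's second computation already does all the work, which the paper itself does not exploit. Your closing remark correctly locates the crux: the inhomogeneous piece of \eqref{Id2} must be exactly $d\lambda^X = d\bigl(i_{X^v}\omega\bigr)$ --- note that here, unlike the Ehresmann case, $\omega(X^v)$ is a genuine function on $\P$ rather than a constant, so this term does not vanish and its precise cancellation against Cartan's formula is what makes the argument close.
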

\begin{proof}%%%%%%%%%%%%%%%%%%%%%%%%%%%%%%%%%%%%%%%%%%%%%%%
We begin by proving, using \eqref{2nd axiom}, that $\Omega \in \Omega^2_{\text{eq}}\big(\P, C(H)\big)$:
\begin{align*}
R^*_h \Omega &= dR^*_h \omega + \tfrac{1}{2}[R^*_h \omega, R^*_h\omega], \\
		        &= d\left( C(h)\-\omega\, C(h) + C(h)\-dC(h) \right) + \tfrac{1}{2}\left[C(h)\-\omega\, C(h) + C(h)\-dC(h) , C(h)\-\omega\, C(h) + C(h)\-dC(h)  \right], \\
		        &= dC(h)\- \omega\, C(h) + C(h)\- d\omega\, C(h) - C(h)\-\omega dC(h) + dC(h)\-dC(h)  \\
		        & \qquad + \tfrac{1}{2} C(h)\- [\omega, \omega] C(h) + [C(h)\-\omega\, C(h), C(h)\- dC(h)] + \tfrac{1}{2}[C(h)\- dC(h),  C(h)\- dC(h)], \\
		        &=  C(h)\-\left( d\omega + \tfrac{1}{2} [\omega, \omega]\right) C(h),\\
		        &= C(h)\- \Omega \, C(h).
\end{align*}
The curvature $\Omega$ of the twisted connection $\omega$ is thus an $\Ad_{C(H)}$-equivariant $2$-form.
Then we prove that $\Omega$ is tensorial. Firstly:
\begin{align*}
\Omega(X^v, Y^v)&= d\omega (X^v, Y^v) + [\omega(X^v), \omega(Y^v)], \\
			    &=X^v\cdot \omega(Y^v) - Y^v\cdot \omega(X^v) - \omega([X^v, Y^v]) + [\omega(X^v), \omega(Y^v)].
\end{align*}
Now, using \eqref{1st axiom}:
\begin{align*}
    X^v\cdot \omega(Y^v) &= X^v \left( \tfrac{d}{d\tau} C_p\left(e^{\tau Y}\right)\big|_{\tau=0}\right) = \tfrac{d}{d\s} \tfrac{d}{d\tau} C_{pe^{\s X}}\left( e^{\tau Y}\right)\big|_{\tau=0,\, \s=0}
            			         = \tfrac{d}{d\s} \tfrac{d}{d\tau}  C_p\left( e^{\s X}\right)\- C_p\left( e^{\s X}e^{\tau Y}\right) \big|_{\tau=0,\, \s=0}, \\
			               &= \tfrac{d}{d\s}  \left( C_p\left( e^{\s X}\right)\- \tfrac{d}{d\tau} C_p\left(  e^{\s X}e^{\tau Y}\right)\big|_{\tau =0}  \right)\big|_{\s=0}, \\
			               &= -\tfrac{d}{d\s} C_p\left( e^{\s X}\right)\big|_{\s=0} \, \tfrac{d}{d\tau} C_p\left( e^{\tau Y}\right)\big|_{\tau =0}  +  \tfrac{d}{d\s} \tfrac{d}{d\tau} C_p\left( e^{\s X}e^{\tau Y}\right)\big|_{\tau=0,\, \s=0}
\end{align*}
Idem: $ -Y^v\cdot \omega(X^v) = \tfrac{d}{d\tau} C_p\left( e^{\tau Y}\right)\big|_{\tau=0} \, \tfrac{d}{d\s} C_p\left( e^{\s X}\right)\big|_{\s =0}  -  \tfrac{d}{d\tau} \tfrac{d}{d\s} C_p\left( e^{\tau Y}e^{\s X}\right)\big|_{\tau=0,\, \s=0}$. But then we have,
\begin{align*}
-\omega([X^v, Y^v])&=-\omega([X, Y]^v)= -\tfrac{d}{dt}C_p\left( e^{t [X, Y]} \right)\big|_{t=0} \\[1mm]
\text{and }\quad [\omega(X^v), \omega(Y^v)]&= \left[\tfrac{d}{d\s} C_p\left( e^{\s X}\right)\big|_{\s=0}, \tfrac{d}{d\tau} C_p\left( e^{\tau Y}\right)\big|_{\tau =0} \right].
\end{align*}
Finally, by using the identity \eqref{Id1} we see by inspection that $\Omega(X^v, Y^v)=0$. Secondly:
\begin{align*}
\Omega(X^v, Y)&= X^v\cdot \omega(Y) - Y\cdot \omega(X^v) - \omega([X^v, Y]) + [\omega(X^v), \omega(Y)], \\
			&= X^v\cdot \omega(Y)- \omega([X^v, Y])  -\bigg(d\omega(X^v)+ [\omega, \omega(X^v)] \bigg)(Y).
\end{align*}
Now, 
\begin{align*}
\left( L_{X^v}\omega \right) (Y) &= \big( (i_{X^v}d +di_{X^v})\omega \big)(Y)=d\omega(X^v, Y) + d(\omega(X^v))(Y), \\
					       &= X^v\cdot \omega(Y) - Y\cdot \omega(X^v) - \omega([X^v, Y]) + Y\cdot \omega(X^v)= X^v\cdot \omega(Y) - \omega([X^v, Y]).
\end{align*}
But also, by \eqref{Id2} $L_{X^v}\omega = d\omega(X^v) +[\omega, \omega(X^v)]$. So, by inspection we have that $\Omega(X^v, Y)=0$. Which finishes to demonstrate that $\Omega \in \Omega^2_{\text{tens}}\big(\P, C(H)\big)$.
\end{proof}
This fact then allows to see that the Bianchi identity can be written as $D\Omega=0$.
It is by the way easy to show that, as usual,  $D^2 \alpha=DD\alpha=\rho_*(\Omega)\alpha$.

%\clearpage

\section{Functoriality} %%%%%%%%%%%%%%%%%%%%%%%%%%%%%%%%%%%%%%%%%%%%%
\label{Functoriality} %%%%%%%%%%%%%%%%%%%%%%%%%%%%%%%%%%%%%%%%%%%%%%%

In this section we check that the twisted objects defined above enjoy the same functoriality as  standard constructions, i.e. that principal bundle morphisms induce morphisms of associated twisted bundles, and of spaces of twisted connections and  twisted equivariant forms. Of particular interest is the special case of vertical automorphisms of a principal bundle, giving rise to (active) gauge transformations.

\subsection{Naturality under bundle morphisms} %%%%%%%%%%%%%%%%%%%%%%%%%%%%%%%%%%%%%%%%%%%%%
\label{Naturality under bundle morphisms} %%%%%%%%%%%%%%%%%%%%%%%%%%%%%%%%%%%%%%%%%%%%%%%

Consider a $H$-bundle (iso)morphism $\phi: \P \rarrow \P'$, with $\phi(ph)=\phi(p)h$, which induces a smooth map (diffeomorphism) $\b\phi:\M \rarrow \M'$. Given  $C'\!: \P' \times H \rarrow G$, a $H$-cocycle on $\P'$, $C\!:=\phi^*C'\! : \P \times H \rarrow G$ is a $H$-cocycle on $\P$. 
Indeed, $C_p(h):=C'_{\phi(p)}(h)$, so $C_p(hh')=C'_{\phi(p)}(hh')=C'_{\phi(p)}(h) C'_{\phi(p)h}(h')=C'_{\phi(p)}(h) C'_{\phi(ph)}(h')=:C_p(h)C_{ph}(h')$. 

So, $H$-bundle morphisms induce morphisms of twisted bundles in the following way (we omit the representation $\rho$ for simplicity):
\begin{align*}
\t\phi : E^C &\rarrow {E'}^{C'}, \\
           [p, v] &\mapsto \t\phi([p, v]):=[\phi(p), v], \\
           [ph, C_p(h)\-v] &\mapsto \t\phi([ph,C_p(h)\-v] ):=[\phi(ph), C_p(h)\-v]=:[\phi(p)h, C'_{\phi(p)}(h)\- v ].
  %         [ph, \rho\left[C_p(h)\-\right]v] &\mapsto \t\phi([[ph, \rho\left[C_p(h)\-\right]v] ]):=[\phi(ph), \rho\left[C_p(h)\-\right]v]=:[\phi(p)h, \rho\left[C'_{\phi(p)}(h)\-\right] v ].
\end{align*}
Naturally this implies the existence of a morphism of spaces of sections. Indeed, given $\vphi' \in \Omega_\text{eq}^0(\P', C'(H))$, a twisted function on $\P'$ such that $R^*_h\vphi'={C'(h)}\-\vphi'$, the map $\vphi:=\phi^*\vphi'$ is such that $\vphi(ph):=\vphi'(\phi(ph))=\vphi'(\phi(p)h)={C'_{\phi(p)}(h)}\- \vphi'(\phi(p))=:C_p(h)\-\vphi(p)$. We thus have the morphism $\phi^*\!: \Omega_\text{eq}^0(\P', C'(H)) \rarrow \Omega_\text{eq}^0(\P, C(H))$, which, by the isomorphism mentioned in section \ref{Twisted associated vector bundles}, induces the morphism ${\t\phi}^*\!: \Gamma({E'}^{C'}) \rarrow \Gamma(E^C)$.

More generally,  $\phi$ induces a morphism of spaces of $C$-equivariant forms, $\phi^*\!: \Omega_\text{eq}^\bullet(\P', C'(H)) \rarrow \Omega_\text{eq}^\bullet(\P, C(H))$. Indeed, given  $\alpha' \in \Omega_\text{eq}^\bullet(\P', C'(H))$ and $R_h\circ \phi=\phi\circ R_h$, the form $\alpha:=\phi^*\alpha'$ is such that (still omitting $\rho$) $R^*_h \alpha:= R^*_h \phi^*\alpha'=\phi^*R^*_h \alpha' = \phi^* \left( C'(h)\- \alpha'\right)=\phi^*C'(h)\- \phi^*\alpha'=: C(h)\- \alpha$. So,  $\alpha \in \Omega_\text{eq}^\bullet(\P, C(H))$. Also, since for $X^v_p \in V_p\P$ we have $\phi_*X^v_p=X^v_{\phi(p)} \in V_{\phi(p)}\P'$,  pullback by $\phi$ preserves horizontality and the above morphism restricts to the spaces of $C$-tensorial forms, $\phi^*\!: \Omega_\text{tens}^\bullet(\P', C'(H)) \rarrow \Omega_\text{tens}^\bullet(\P, C(H))$.
\bigskip

One can further show that there are induced morphisms of  spaces of twisted connections $\phi^*\!: \C(\P')^T \rarrow \C(\P)^T$. If $\P'$ is endowed with a twisted connection $\omega'\in \C(\P')^T$, which therefore satisfies, for $q\in \P'$, $\omega'_q(X^v_q)=dC_{q|e}(X)$, $X \in$ Lie$H$, and 
$R^*_h\omega'_{qh}=C'_q(h)\- \omega'_q C'_q(h)+ C'_q(h)\- dC'(h)_{|q}$, 
%$R^*_h\omega'=C'(h)\- \omega' C'(h)+ C'(h)\- dC'(h)$, 
then $\omega:=\phi^*\omega'$ satisfies on the one hand, 
\begin{align*}
\omega_p(X^v_p):=\phi^*\omega'_{\phi(p)}(X_p^v)=\omega'_{\phi(p)}(\phi_*X^v_p)=\omega'_{\phi(p)}(X^v_{\phi(p)})= dC'_{\phi(p)|e}(X)=:dC_{p|e}(X),
\end{align*}
and on the other hand, for $X_p \in T_p\P$, 
\begin{align*}
%R^*_h \omega:=R^*_h(\phi^*\omega')=\phi^*(R^*_h\omega')=\phi^*\left(C'(h)\-\omega'C'(h)+C'(h)\-dC'(h)\right)=:C(h)\-\omega\, C(h)+C(h)\-dC(h).
R^*_h \omega_{ph}(X_p)&=\omega_{ph}(R_{h*}X_p):=\phi^*\omega'_{\phi(ph)}(R_{h*}X_p)=\omega'_{\phi(p)h}(\phi_*R_{h*}X_p)=\omega'_{\phi(p)h}(R_{h*}\phi_*X_p)=R_h^*\omega'_{\phi(p)h}(\phi_*X_p),\\
                                      &=\left(C'_{\phi(p)}(h)\- \omega'_{\phi(p)}C'_{\phi(p)}(h) + C'_{\phi(p)}(h)\- dC'(h)_{|\phi(p)}\right)(\phi_*X_p), \\
                                      &=: C_p(h)\- \phi^*\omega'_{\phi(p)}(X_p)\, C_p(h) + C_p(h)\- d(C'(h)\circ\phi)_{|p}(X_p), \\
                                      &=\left(  C_p(h)\- \omega_p\, C_p(h) + C_p(h)\- dC(h)_{|p}  \right)(X_p).
\end{align*}
It is then a twisted connection on $\P$,  $\omega \in \C(\P)^T$.

 From the above we obtain readily that a covariant derivative $D'\!=d \, +\rho_*(\omega')$ on $\P'$ pulls-back as a covariant derivative $D:=\phi^*D'$ on $\P$. In particular it means that $H$-bundle morphisms $\phi$ induce morphisms of  twisted associated bundles equiped with covariant derivatives, $\t\phi : (E^C, D) \rarrow ({E'}^{C'}, D')$.
\bigskip

If $\M=\M'$ and $\b\phi=\id_\M$, then $\phi$ is a bundle equivalence, and we have established equivalence of the associated twisted structures. The above functoriality holds also in the special case $\P=\P'$ with $\phi \in \Aut(\P)$ covering $\b\phi\in \Diff(\M)$,  and therefore in the case $\phi \in \Aut_v(\P)$ covering $\b\phi=\id_\M$. The latter is of particular interest for physical application to gauge theories.

%\clearpage

\subsection{Action of vertical automorphisms and gauge transformations} %%%%%%%%%%%%%%%%%%%%%%%%%%%%%%%%%%%%%%%%%%%%%
\label{Action of vertical automorphisms and gauge transformations} %%%%%%%%%%%%%%%%%%%%%%%%%%%%%%%%%%%%%%%%%%%%%%%%

The group of vertical automorphisms of the principal bundle $\P$ is  a subgroup of its group of diffeomorphisms, $\Aut_v(\P)\defeq \bigg\{ \Phi \in \Diff(\P)\, |\, \Phi(ph)=\Phi(p)h \text{ for } h\in H, \text{ and } \pi \circ\Phi=\pi \bigg\}$. It acts on itself by composition of maps. The gauge group is defined as $\H\defeq \left\{ \gamma: \P \rarrow H\, |\, R^*_h\gamma=h\-\gamma h \right\}$. Both group are isomorphic by the identification $\Phi(p)=p\gamma(p)$.
The group of vertical automorphisms  acts by pullback on $\Omega^\bullet(\P)$. A pullback by $\Phi \in \Aut_v(\P)$ will then equivalently be called an active gauge transformation by $\gamma \in \H$. 
Now, for  $\Psi \in \Aut_v(\P)$ associated  to the elements $\eta \in \H$, we have: 
$\Psi^*\gamma(p)=\gamma(\Psi(p))=\gamma(p\eta(p))=\eta(p)\-\gamma(p) \eta(p)$. So the action of the gauge group $\H$ on itself, noted $\gamma^\eta$, is  defined as: $\gamma^\eta\defeq\Psi^*\gamma=\eta\- \gamma \eta$. It reflects the defining equivariance of its elements.

To give the gauge transformations of the objects defined in the previous sections, we need to first consider the smooth map,
\begin{align*}
C(\gamma): \P  &\rarrow  G, \\
		    p	&\mapsto C_p\left(\gamma(p)\right).
\end{align*} 
Its equivariance is, 
\begin{align}
R^*_h C(\gamma)(p)&= C_{ph}\left( \gamma(ph) \right)= C_{ph}\left(  h\-\gamma(p)h  \right) = C_{ph}\big(h\-\big) C_p\left( \gamma(p)h \right)
=C_p(h)\- C_p(\gamma(p)) C_{p\gamma(p)}(h),  \notag\\[1.5mm]
R^*_h C(\gamma) & =C(h)\-C(\gamma h) = C(h)\- C(\gamma)\,\Phi^*C(h).   \label{HequivCgamma}
\end{align}
%{\color{red} Give the Lie derivative version and use it to give an alternative derivation for Ia. }
Correspondingly we have,
\begin{align}
\Psi^*C(\gamma)(p)&=C_{\Psi(p)}\big( \gamma\left(\Psi(p)\right)\big)=C_{p\eta(p)}\left(\eta(p)\- \gamma(p) \eta(p)\right)
					     %=C_{p\eta(p)}\left( \eta(p)\- \right) C_p\bigg( \gamma(p)\eta(p) \bigg)
					        =C_p\left( \eta(p) \right)\- C_p\big( \gamma(p)\eta(p) \big).  \notag\\[1.5mm]
C(\gamma)^\eta\defeq \Psi^*C(\gamma)&= C(\eta)\- C(\gamma\eta).   \label{GTCgamma}
\end{align}
%{\color{red} Give this before, when equivariance of $C(\gamma)$ is given, label it and use it here.}
This map is given by the composition,
\begin{align*}
\P   &\xrightarrow{\ \Delta\, }   \P \times \P    \xrightarrow{\ \id \times \gamma\, }    \P\times H   	\xrightarrow{\quad C\quad }  G,\\
p    & \xmapsto{\phantom{bbb}}    (p, p)	 	       \xmapsto{\phantom {bbbb}}     (p, \gamma(p))  \xmapsto{\phantom {bbbb}} 	C_p(\gamma(p)).     
\end{align*} 
So its differential $dC(\gamma)_{|p}: T_p\P \rarrow T_{C_p\left(\gamma(p)\right)}G$ is,
\begin{align*}
T_p\P   &\xrightarrow{\ d\Delta\, }   T_p\P \times T_p\P    \xrightarrow{\ \id \times d\gamma\, }    T_p\P\times T_{\gamma(p)}G   	\xrightarrow{\quad dC(\gamma(p))_{|p} \,+\, dC_{p|\gamma(p)}\quad }  T_{C_p(\gamma(p))}G,\\
X_p    & \xmapsto{\phantom{bbbbb}}    (X_p, X_p)  	  \xmapsto{\phantom {bbbbbbbb}}       \bigg(X_p, \underbrace{d\gamma_{|p}(X_p)}_{\left[X(\gamma)\right](p)}\bigg)           \xmapsto{\phantom {bbbbbbbbbbbb}}          dC(\gamma(p))_{|p}(X_p)\, +\,  \underbrace{dC_{p|\gamma(p)}\left( d\gamma_p(X_p) \right)}_{dC_p(\gamma)_{|p}(X_p)}.     
\end{align*} 
So, if $\phi_\tau$  is the flow of X with $\phi_{\tau=0}=p$, we have:
\begin{align}
dC(\gamma)_{|p}(X_p)= dC(\gamma(p))_{|p}(X_p) + dC_p(\gamma)_{|p}(X_p)= \tfrac{d}{d\tau}\left\{ C_{\phi_\tau}(\gamma(p)) + C_p\left(\gamma(\phi_\tau)\right)   \right\}\big|_{\tau=0}.
\end{align}
Notice then that $C_p\!\left( \gamma(p) \right)\-\!dC(\gamma)_{|p}: T_p\P \rarrow T_{e'}G=\text{Lie}G$.
We are then ready to state and prove the following two propositions.

\begin{prop} %%%%%%%%%%%%%%%%%%%%%%%%%%%%%%%%%%%%%%%%%%%%%%%%%%%%%%%%%%
\label{ConnectionGT}
The active gauge transformation of a twisted connection, noted $\omega^\gamma$, is  %{\color{red}  [Introduce $\omega^{C(\gamma)}$ ?]  }
\begin{align}
\label{H-GTconnection}
\omega^\gamma  \defeq  \Phi^*\omega  =  C(\gamma)\-\omega\, C(\gamma) + C(\gamma)\-dC(\gamma) \ \in \C(\P)^T.  
\end{align}
\end{prop}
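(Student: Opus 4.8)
The plan is to evaluate $\Phi^*\omega$ at an arbitrary point and on an arbitrary vector, and then to certify that the $1$-form so obtained is again a twisted connection. Fix $p\in\P$ and $X_p\in T_p\P$ with flow $\phi_\tau$, $\phi_0=p$. Since $\Phi(p)=p\gamma(p)$, I would first compute the pushforward $\Phi_*X_p=\tfrac{d}{d\tau}\phi_\tau\,\gamma(\phi_\tau)\big|_{\tau=0}$ by splitting the derivative of the product $\phi_\tau\gamma(\phi_\tau)$ into the piece coming from moving the base point with the group argument frozen at $\gamma(p)$, and the piece coming from moving $\gamma$ with the base point frozen at $p$. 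Writing $\gamma(\phi_\tau)=\gamma(p)\cdot\big(\gamma(p)\-\gamma(\phi_\tau)\big)$ in the second piece exhibits it as a vertical vector at $\Phi(p)$, giving
\[
\Phi_*X_p \;=\; R_{\gamma(p)*}X_p \;+\; \big(\gamma(p)\- d\gamma_{|p}(X_p)\big)^v_{\Phi(p)},
\]
where $\gamma(p)\- d\gamma_{|p}(X_p)\in\text{Lie}H$ is the left-logarithmic derivative of $\gamma$ along $X_p$.

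Next I would apply $\omega_{\Phi(p)}$ and use linearity. On the right-translation summand $R_{\gamma(p)*}X_p$ the equivariance axiom \eqref{2nd axiom}, used with the fixed element $h=\gamma(p)$, yields $C_p(\gamma(p))\-\,\omega_p\,C_p(\gamma(p)) + C_p(\gamma(p))\- dC(\gamma(p))_{|p}$, in which $dC(\gamma(p))_{|p}$ is the $\P$-differential of $C_\bullet(\gamma(p))$ taken with the group slot frozen. On the vertical summand the defining axiom \eqref{1st axiom} gives $dC_{\Phi(p)|e}\big(\gamma(p)\- d\gamma_{|p}(X_p)\big)$; rewriting $C_{\Phi(p)}=C_{p\gamma(p)}$ through the cocycle relation \eqref{HeqC} and recognising that the curve $\tau\mapsto\gamma(p)e^{\tau\xi}$, with $\xi:=\gamma(p)\- d\gamma_{|p}(X_p)$, has velocity $d\gamma_{|p}(X_p)$ at $\tau=0$ converts this into $C_p(\gamma(p))\- dC_p(\gamma)_{|p}(X_p)$, i.e.\ precisely the $\gamma$-varying half of the differential of the composite $C(\gamma)\colon p\mapsto C_p(\gamma(p))$.

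It remains to assemble the two contributions. Using the decomposition $dC(\gamma)_{|p}=dC(\gamma(p))_{|p}+dC_p(\gamma)_{|p}$ displayed just above the statement, the two inhomogeneous terms fuse into $C(\gamma)\- dC(\gamma)$, so that $\Phi^*\omega=C(\gamma)\-\omega\,C(\gamma)+C(\gamma)\- dC(\gamma)$, as claimed. For the membership $\omega^\gamma\in\C(\P)^T$ I would not recheck the axioms by hand but invoke the functoriality $\phi^*\colon\C(\P')^T\rarrow\C(\P)^T$ of Section \ref{Naturality under bundle morphisms} in the special case $\P'=\P$, $\phi=\Phi\in\Aut_v(\P)$, the cocycle being correspondingly carried along; equivalently, the two defining axioms can be checked directly from $\Phi_*X^v_p=X^v_{\Phi(p)}$ and the intertwining $R_h\circ\Phi=\Phi\circ R_h$, the latter turning $R_h^*\omega^\gamma=\Phi^*R_h^*\omega$ into the equivariance law.

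The step I expect to be the main obstacle is the vertical contribution: one must keep the left- versus right-trivialisations straight in $\gamma(p)\- d\gamma_{|p}(X_p)$ and then use the cocycle identity to identify $dC_{\Phi(p)|e}(\cdots)$ with exactly $C_p(\gamma(p))\- dC_p(\gamma)_{|p}$, so that it dovetails with the frozen-argument differential $C_p(\gamma(p))\- dC(\gamma(p))_{|p}$ produced by the equivariance term. Matching these two halves to the decomposition of $dC(\gamma)$ is the crux of the computation; everything else is routine bookkeeping.
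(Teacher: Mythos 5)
Your proposal is correct and follows essentially the same route as the paper's own proof: the same pushforward decomposition $\Phi_*X_p = R_{\gamma(p)*}X_p + \big(\gamma(p)\- d\gamma_{|p}(X_p)\big)^v_{\Phi(p)}$, axiom \eqref{2nd axiom} on the right-translation part, axiom \eqref{1st axiom} plus the cocycle relation on the vertical part, reassembly via $dC(\gamma)_{|p}=dC(\gamma(p))_{|p}+dC_p(\gamma)_{|p}$, and the functoriality of Section \ref{Naturality under bundle morphisms} to conclude $\omega^\gamma\in\C(\P)^T$. The only (immaterial) difference is that you sketch a derivation of the pushforward formula, which the paper simply cites as standard.
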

\begin{proof} %%%%%%%%%%%%%%%%%%%%
%\paragraph{Connection} %%%%%%%%%%%%%%%%%%%%%%%%%%%%%%%%%%%%%%%%%%%%%%%%%%%%%
Given the standard result  $\Phi_*X_p = R_{\gamma(p)*} X_p + [\gamma\-d\gamma]_{|p} (X_p)\big|^v_{\Phi(p)} $ for $X \in \Gamma(T\P)$, we have:
\begin{align*}
\left(  \Phi^*\omega \right)_p(X_p) = \omega_{\Phi(p)}\left( \Phi_*X_p \right)=\omega_{\Phi(p)}\left( R_{\gamma(p)*} X_p + \gamma(p)\-d\gamma_{|p} (X_p)\big|^v_{\Phi(p)}\right).
\end{align*}
The first term is easily worked out, by \eqref{2nd axiom}:
\begin{align*}
R^*_{\gamma(p)}\omega_{p\gamma(p)} (X_p) =\left( C_p\left(\gamma(p) \right)\- \omega_p\, C_p\left( \gamma(p)\right) + C_p\left(\gamma(p) \right)\- dC\left( \gamma(p) \right)_{|p} \right) (X_p)
\end{align*}
The second term needs special attention. By \eqref{1st axiom} we have:
\begin{align*}
\omega_{\Phi(p)}\left(  \gamma(p)\-d\gamma_{|p} (X_p)\big|^v_{\Phi(p)} \right)
&= \tfrac{d}{d\tau} C_{p\gamma(p)}\left( e^{\tau \, \gamma(p)\-d\gamma_{|p}(X_p)} \right)\big|_{\tau=0}
= C_p\left( \gamma(p) \right)\- \tfrac{d}{d\tau} C_{p}\left( \gamma(p)e^{\tau \, \gamma(p)\-d\gamma_{|p}(X_p)} \right)\big|_{\tau=0},\\
%\end{align*}
%Then we should notice that \eqref{1st axiom} can be rewritten as 
%\begin{align*}
%\omega_p(X^v_p)=\tfrac{d}{d\tau} C_p\left( e^{\tau X} \right)\big|_{\tau=0}= dC_{p|e}\left( \tfrac{d}{d\tau} e^{\tau X} \big|_{\tau =0} \right)=dC_{p|e}(X) \, \in T_{C_p(e)G}=\LieG, \text{ for } X\in \LieH.
%\end{align*}
%So, 
%\begin{align*}
%\omega_{\Phi(p)}\left(  \gamma(p)\-d\gamma_{|p} (X_p)\big|^v_{\Phi(p)} \right) 
&=  C_p\left( \gamma(p) \right)\- dC_{p|\gamma(p)} \left(  \gamma(p) \tfrac{d}{d\tau} e^{\tau\, \gamma(p)\-d\gamma_{|p}(X_p)} \big|_{\tau =0}  \right) %, \\
= C_p\left( \gamma(p) \right)\- dC_{p|\gamma(p)} \left( d\gamma_{|p} (X_p)  \right), \\
&= C_p\left( \gamma(p) \right)\-  dC_p(\gamma)_{|p} (X_p).
\end{align*}
Finally,  we then obtain,
\begin{align*}
\left(  \Phi^*\omega \right)_p(X_p) &= \left( C_p\left(\gamma(p) \right)\- \omega_p\, C_p\left( \gamma(p)\right) + C_p\left(\gamma(p) \right)\- \left( dC\left( \gamma(p) \right)_{|p} +  dC_p(\gamma)_{|p} \right) \right)(X_p), \\
&=\left( C_p\left(\gamma(p) \right)\- \omega_p\, C_p\left( \gamma(p)\right) + C_p\left(\gamma(p) \right)\-  dC( \gamma )_{|p}  \right)(X_p).
\end{align*}
%The active gauge transformation of the connection is then written as,
%\begin{align}
%\label{GTconnection}
%\omega^\gamma  \defeq  \Phi^*\omega  =  C(\gamma)\-\omega\, C(\gamma) + C(\gamma)\-dC(\gamma).  
%\end{align}
By the way, from section \ref{Naturality under bundle morphisms} above, we have $\Phi^*\!: \C(\P)^T \rarrow \C(\P)^T$. So indeed $\omega^\gamma \in \C(\P)^T$.
\end{proof}

\begin{prop}%%%%%%%%%%%%%%%%%%%%%%%%%%%%%%%%% 
\label{TensorialGT}
The active gauge transformation of a $C$-tensorial form, noted $\alpha^\gamma$, is
\begin{align}
\label{GTtensorial}
\alpha^\gamma \defeq \Phi^*\alpha = \rho\left[ C(\gamma)\- \right] \alpha \ \in\Omega^\bullet_{\text{tens}}\big(\P, C(H)\big).
\end{align}
\end{prop}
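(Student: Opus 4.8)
The plan is to evaluate the pullback $\Phi^*\alpha$ directly, using the known pushforward of a vertical automorphism to split each argument into a horizontal and a vertical piece, then exploiting the horizontality of $\alpha$ to discard all but one term and its $C$-equivariance to extract the factor $\rho\big[C(\gamma)\-\big]$. Membership in $\Omega^\bullet_{\text{tens}}\big(\P,C(H)\big)$ will then follow from the naturality of Section \ref{Naturality under bundle morphisms}, exactly as $\omega^\gamma\in\C(\P)^T$ was obtained in Proposition \ref{ConnectionGT}.

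Concretely, I would start from the standard formula $\Phi_*X_p = R_{\gamma(p)*}X_p + \big[\gamma\-d\gamma\big]_{|p}(X_p)\big|^v_{\Phi(p)}$ already recalled in the proof of Proposition \ref{ConnectionGT}, whose second summand is vertical at $\Phi(p)=p\gamma(p)$. For a $C$-tensorial $k$-form $\alpha$ and tangent vectors $X_1,\dots,X_k\in T_p\P$, I would write $(\Phi^*\alpha)_p(X_1,\dots,X_k)=\alpha_{\Phi(p)}\big(\Phi_*X_1,\dots,\Phi_*X_k\big)$, insert the pushforward in each slot, and expand by multilinearity. Since $\alpha$ is horizontal, every term carrying at least one vertical argument vanishes, so only the fully horizontal term survives, namely $\alpha_{p\gamma(p)}\big(R_{\gamma(p)*}X_1,\dots,R_{\gamma(p)*}X_k\big)=\big(R^*_{\gamma(p)}\alpha\big)_p(X_1,\dots,X_k)$. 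Feeding the defining equivariance $R^*_h\alpha_{ph}=\rho\big[C_p(h)\-\big]\alpha_p$ pointwise with $h=\gamma(p)$ then gives $\big(R^*_{\gamma(p)}\alpha\big)_p=\rho\big[C_p(\gamma(p))\-\big]\alpha_p=\rho\big[C(\gamma)(p)\-\big]\alpha_p$. Assembling the two steps produces the claimed formula $\alpha^\gamma=\Phi^*\alpha=\rho\big[C(\gamma)\-\big]\alpha$.

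It remains to verify that $\alpha^\gamma$ is itself $C$-tensorial. Horizontality is immediate, since $\rho\big[C(\gamma)(p)\-\big]$ is a pointwise linear map and cannot resurrect a vertical argument already annihilated by $\alpha$. For equivariance I would, as in Proposition \ref{ConnectionGT}, appeal to the functoriality of Section \ref{Naturality under bundle morphisms}, which guarantees that pullback by a vertical automorphism preserves the prescribed equivariance (it uses $\Phi_*X^v_p=X^v_{\Phi(p)}$). The step requiring the most care — the main obstacle — is precisely this equivariance check when done by hand: one must substitute the twisted equivariance $R^*_hC(\gamma)=C(h)\-C(\gamma)\,\Phi^*C(h)$ of \eqref{HequivCgamma} into $R^*_h\alpha^\gamma$ and carefully cancel the $C(h)$ factors against the equivariance $R^*_h\alpha=\rho\big[C(h)\-\big]\alpha$ of $\alpha$ itself. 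This is the point where the genuine cocycle structure of $C$, rather than a plain representation of $H$, is what makes the bookkeeping work, and it is cleanest to bypass the manual computation by invoking the already-established naturality.
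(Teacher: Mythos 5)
Your proposal is correct and follows essentially the same route as the paper: decompose $\Phi_*X$ via the standard pushforward formula, kill the vertical parts by horizontality so that only $R^*_{\gamma(p)}\alpha$ survives, apply the pointwise $C$-equivariance with $h=\gamma(p)$, and conclude membership in $\Omega^\bullet_{\text{tens}}\big(\P, C(H)\big)$ by the functoriality of Section \ref{Naturality under bundle morphisms}. The only difference is cosmetic — you spell out the multilinear expansion over all $k$ slots, which the paper leaves implicit with its ``$(X_p,\ldots)$'' notation.
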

\begin{proof}
Proceeding as above we get,
\begin{align*}
\left( \Phi^*\alpha \right)_p(X_p, \ldots ) &= \alpha_{\Phi(p)} \left( R_{\gamma(p)*}X_p + [\gamma\-d\gamma]_{|p}(X_p)\big|^v_{\Phi(p)} , \ldots \right), \\
&= \alpha_{\Phi(p)} \left( R_{\gamma(p)*}X_p, \ldots \right), \\
&= R^*_{\gamma(p)} \alpha_{\Phi(p)} (X_p, \ldots)= \rho\left[C_p\left( \gamma(p) \right)\-\right]\alpha_p (X_p, \ldots).
\end{align*}
Also, from section \ref{Naturality under bundle morphisms}  we have $\Phi^*\!:\Omega^\bullet_{\text{tens}}\big(\P, C(H)\big) \rarrow \Omega^\bullet_{\text{tens}}\big(\P, C(H)\big)$. So indeed $\alpha^\gamma \in \Omega^\bullet_{\text{tens}}\big(\P, C(H)\big)$.
\end{proof}
\noindent
From this follows the gauge transformations of the curvature, of sections and their covariant derivatives:
\begin{align}
\Omega^\gamma &\defeq	\Phi^*\Omega = C(\gamma)\- \Omega\, C(\gamma),    \label{GTcurv}  \\
\vphi^\gamma &\defeq \Phi^*\vphi = \rho\left[C(\gamma)\-\right]\vphi,    \label{GTsection}  \\
(D\vphi)^\gamma &\defeq \Phi^*D\vphi = \rho\left[C(\gamma)\-\right]D\vphi.    \label{GTcovder} 
\end{align}
Using  \eqref{H-GTconnection}, equation \eqref{GTcurv} is alternatively found from the Cartan structure equation by having $\Omega^\gamma=d\omega^\gamma+\tfrac{1}{2}[\omega^\gamma, \omega^\gamma]$, and equation \eqref{GTcovder} by having $(D\vphi)^\gamma=D^\gamma \vphi^\gamma= d\vphi^\gamma + \rho_*(\omega^\gamma)\vphi^\gamma$. 
\bigskip

Finally, we explicitly verify the following.
\begin{prop} %%%%%%%%%%%%%%%%%%%%%
\label{Rightaction}
The action of $\Aut_v(\P)\simeq \H$ on $\C(\P)^T$ and $\Omega^\bullet_{\text{tens}}\big(\P, C(H)\big)$ is  a right action.
\end{prop}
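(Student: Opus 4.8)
The plan is to exploit that both actions are realised as pullbacks, $\omega^\gamma=\Phi^*\omega$ and $\alpha^\gamma=\Phi^*\alpha$ (Propositions \ref{ConnectionGT}, \ref{TensorialGT}), so that the statement reduces to the contravariance of pullback together with the fact that $\Aut_v(\P)\simeq\H$ intertwines composition with the pointwise product. First I would record the composition law: writing $\Phi(p)=p\gamma(p)$ and $\Psi(p)=p\eta(p)$, the $H$-equivariance $\Phi(ph)=\Phi(p)h$ used pointwise gives $(\Phi\circ\Psi)(p)=\Phi\big(p\eta(p)\big)=\Phi(p)\eta(p)=p\,\gamma(p)\eta(p)$, so $\Phi\circ\Psi$ is the vertical automorphism attached to the pointwise product $\gamma\eta\in\H$. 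Contravariance, $(\Phi\circ\Psi)^*=\Psi^*\circ\Phi^*$, then yields at once
\begin{align*}
(\omega^\gamma)^\eta&=\Psi^*\Phi^*\omega=(\Phi\circ\Psi)^*\omega=\omega^{\gamma\eta},\\
(\alpha^\gamma)^\eta&=\Psi^*\Phi^*\alpha=(\Phi\circ\Psi)^*\alpha=\alpha^{\gamma\eta},
\end{align*}
which is the right-action axiom, with $\gamma=e$ (i.e. $\Phi=\id$) acting trivially.

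For the explicit verification the proposition calls for, I would reprove these by pushing the closed forms of $\omega^\gamma$ and $\alpha^\gamma$ through $\Psi^*$, the crucial inputs being $\Psi^*\omega=\omega^\eta$, $\Psi^*\alpha=\alpha^\eta$, the composition law \eqref{GTCgamma} in the form $\Psi^*C(\gamma)=C(\gamma)^\eta=C(\eta)\-C(\gamma\eta)$, and the facts that $\Psi^*$ is an algebra homomorphism on $G$-valued maps and commutes with $d$. The tensorial case is immediate:
\begin{align*}
(\alpha^\gamma)^\eta&=\Psi^*\big(\rho[C(\gamma)\-]\,\alpha\big)=\rho\big[(\Psi^*C(\gamma))\-\big]\,\Psi^*\alpha\\
&=\rho\big[C(\gamma\eta)\-C(\eta)\big]\,\rho\big[C(\eta)\-\big]\,\alpha=\rho\big[C(\gamma\eta)\-\big]\,\alpha=\alpha^{\gamma\eta}.
\end{align*}

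For connections I would substitute $\omega^\gamma=C(\gamma)\-\omega\,C(\gamma)+C(\gamma)\-dC(\gamma)$ and distribute $\Psi^*$, obtaining $(\omega^\gamma)^\eta=(\Psi^*C(\gamma))\-\,\omega^\eta\,(\Psi^*C(\gamma))+(\Psi^*C(\gamma))\-\,d(\Psi^*C(\gamma))$; inserting $\Psi^*C(\gamma)=C(\eta)\-C(\gamma\eta)$ and $\omega^\eta=C(\eta)\-\omega\,C(\eta)+C(\eta)\-dC(\eta)$, the homogeneous piece collapses to $C(\gamma\eta)\-\omega\,C(\gamma\eta)$, while the inhomogeneous piece coming from $\omega^\eta$ and the Maurer--Cartan term of $\Psi^*C(\gamma)=C(\eta)\-C(\gamma\eta)$ produce opposite cross-terms $\pm\,C(\gamma\eta)\-dC(\eta)\,C(\eta)\-C(\gamma\eta)$ that cancel, leaving exactly $C(\gamma\eta)\-dC(\gamma\eta)$. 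Hence $(\omega^\gamma)^\eta=\omega^{\gamma\eta}$.

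The step I expect to be the main obstacle is this last bookkeeping of the affine term. One must resist simply re-inserting $\omega^\gamma$ into the transformation law $C(\eta)\-(\,\cdot\,)C(\eta)+C(\eta)\-dC(\eta)$: that shortcut is invalid because the equivariance of $\omega^\gamma$ is governed by the pulled-back cocycle rather than by $C$ itself, so the correct route is to differentiate the product $C(\eta)\-C(\gamma\eta)$ honestly and invoke \eqref{GTCgamma}. Checking that the two Maurer--Cartan cross-terms cancel to reassemble $C(\gamma\eta)\-dC(\gamma\eta)$ is where the cocycle identities are genuinely used, and is the only non-formal part of the argument; everything else is forced by functoriality of pullback.
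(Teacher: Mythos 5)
Your proposal is correct, and its second half is, in substance, the paper's own proof: the paper also proves Proposition \ref{Rightaction} by distributing $\Psi^*$ over the closed forms $\omega^\gamma=C(\gamma)\-\omega\,C(\gamma)+C(\gamma)\-dC(\gamma)$ and $\alpha^\gamma=\rho[C(\gamma)\-]\alpha$, inserting $\Psi^*C(\gamma)=C(\eta)\-C(\gamma\eta)$ from \eqref{GTCgamma} together with $\Psi^*\omega=\omega^\eta$ and $\Psi^*\alpha=\alpha^\eta$, and collapsing to $\omega^{\gamma\eta}$ and $\alpha^{\gamma\eta}$; you merely spell out the cancellation of the cross-terms $\pm\,C(\gamma\eta)\-dC(\eta)\,C(\eta)\-C(\gamma\eta)$ that the paper absorbs into a single display. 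What you add, and the paper leaves implicit, is the upfront functorial argument: since both actions are pullbacks and $\Phi\circ\Psi$ is the vertical automorphism attached to the pointwise product $\gamma\eta$, contravariance $(\Phi\circ\Psi)^*=\Psi^*\circ\Phi^*$ already forces the right-action identity with no computation; the explicit check is then exactly what the paper's phrase ``we explicitly verify'' calls for. Your closing caveat is also well taken and touches a genuine subtlety: one cannot get $(\omega^\gamma)^\eta$ by feeding $\omega^\gamma$ back into the transformation law with cocycle $C$, because $C(\gamma)C(\eta)\neq C(\gamma\eta)$ in general --- the cocycle identity gives $C_p\big(\gamma(p)\eta(p)\big)=C_p(\gamma(p))\,C_{p\gamma(p)}(\eta(p))$, with the second factor evaluated at the shifted point $p\gamma(p)$ --- equivalently, the equivariance axioms satisfied by $\Phi^*\omega$ are governed by the pulled-back cocycle $\Phi^*C$ rather than by $C$ itself. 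So the honest route through \eqref{GTCgamma} is indeed required, and it is the route both you and the paper take.
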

\begin{proof}%%%%%%%%%%%%%%%%%%%%%
Given $\Phi, \Psi \in \Aut_v(\P)$ associated respectively to the elements $\gamma, \eta \in \H$, and using \eqref{GTCgamma}
%\begin{align}
%\Psi^*C(\gamma)(p)&=C_{\Psi(p)}\left( \gamma\left(\Psi(p)\right)\right)=C_{p\eta(p)}\left(\eta(p)\- \gamma(p) \eta(p)\right), \notag\\
%					     &=C_{p\eta(p)}\left( \eta(p)\- \right) C_p\bigg( \gamma(p)\eta(p) \bigg)
%					        =C_p\left( \eta(p) \right)\- C_p\bigg( \gamma(p)\eta(p) \bigg).  \notag\\[1.5mm]
%C(\gamma)^\eta\defeq \Psi^*C(\gamma)&= C(\eta)\- C(\gamma\eta).   \label{GTCgamma}
%\end{align}
%%{\color{red} Give this before, when equivariance of $C(\gamma)$ is given, label it and use it here.}
%
%On a connection form we therefore have,
we have
\begin{align*}
\left( \omega^\gamma \right)^\eta \defeq \Psi^*\left( \Phi^*\omega \right)&= \Psi^* \left( C(\gamma)\-\omega\, C(\gamma) + C(\gamma)\-dC(\gamma) \right), \\
					      &= C(\gamma\eta)\- C(\eta) \left( C(\eta)\-\omega\, C(\eta) + C(\eta)\-dC(\eta)  \right)C(\eta)\- C(\gamma\eta)  +  C(\gamma\eta)\- C(\eta)\, d\left( C(\eta)\- C(\gamma\eta) \right), \\
					      &= C(\gamma\eta)\-\omega\, C(\gamma\eta) + C(\gamma\eta)\-dC(\gamma\eta).
\end{align*}
This shows the  consistency of the notation for active gauge transformations, in terms of which the above result is simply $\left( \omega^\gamma \right)^\eta=\omega^{\gamma\eta}$.
This extends easily to tensorial forms. For $\alpha \in \Omega^\bullet_{\text{tens}}\big(\P, C(H)\big)$:
\begin{align*}
\left( \alpha^\gamma \right)^\eta  \defeq \Psi^*\left( \Phi^*\alpha \right)= \Psi^* \left( \rho\left[ C(\gamma)\-\right]\alpha \right) 
= \rho\left[ C(\gamma\eta)\- C(\eta) \right]   \rho\left[ C(\eta)\-\right]\alpha =   \rho\left[ C(\gamma\eta)\-\right]\alpha.
\end{align*}
Which is simply $\left( \alpha^\gamma \right)^\eta=\alpha^{\gamma\eta}$. Further gauge transformations of $\Omega$, $\vphi$ and $D\vphi$ are special cases of this result.
\end{proof}
We thus have well-defined spaces $\C(\P)^T$, and  $\Omega^\bullet_{\text{tens}}\big(\P, C(H)\big)$ endowed with an exterior covariant derivative $D$,  with a consistent right action of the gauge group $\H \simeq \Aut_v(\P)$ of the underlying principal bundle $\P$.

\section{Local description} %%%%%%%%%%%%%%%%%%%%%%%%%%%%%%%%%%%%%%%%%%%%%
\label{Local description} %%%%%%%%%%%%%%%%%%%%%%%%%%%%%%%%%%%%%%%%%%%%%%%%

We now turn to the local description of the global objects just described. As a principal bundle, $\P$ is locally trivialisable, meaning that for any open subset $\U$ of the base manifold $\M$: $\P_{|\U}\simeq \U \times H$. Given a local trivialising section $\s:\U\rarrow \P_{|\U}$, any form $\beta \in \Omega^\bullet(\P)$ can be pulled-back as a form $b\!\defeq\!\s^*\beta \in \Omega(\U)$. Also, any vector $X \in \Gamma(T\U)$ can be pushed-forward as a vector $\s_*X \in \Gamma(T\P_{|\U})$. 

The pullbacks  of the connection and its curvature are respectively Lie$G$-valued $1$-form and $2$-form on $\U$. We denote $A\!\defeq\!\s^*\omega \in \Omega^1(\U, \text{Lie}G)$, and $F\!\defeq\!\s^*\Omega \in \Omega^2(\U, \text{Lie}G)$. By the naturality of the pullback, Cartan's structure equation still holds: $F=dA+\sfrac{1}{2}[A, A]$. The pullback of a section $\vphi \in \Omega^0_{\text{tens}}\big(\P, C(H)\big)\simeq \Gamma\big(E^C\big)$ is a $V$-valued map on $\U$ that we denote $\phi\!\defeq\!\s^*\vphi \in \Omega^0(\U, V)$. Still by naturality of the pullback we have that: $D\phi\!\defeq\!\s^*D\vphi \in \Omega^1(\U, V)$, with $D\phi=d\phi+\rho_*(A)\phi$. In general, let us denote the pullbacks of a $C$-tensorial form and its exterior covariant derivative $\alpha, D\alpha \in \Omega^\bullet_{\text{tens}}(\P, C(H))$ by $a\!\defeq\!\s^*\alpha, Da\!\defeq\!\s^*D\alpha  \in \Omega^\bullet(\U, V)$, with  $Da=da+\rho_*(A)a$.

If this framework would apply to (particle) physics - which happens on $\M$, describing space-time - $A$ would be a generalised/twisted gauge potential and $F$ would be its field strength, while $\phi$ would be a generalised/twisted matter field and $D\phi$ would describe its minimal coupling to $A$.

As forms on $\U\subset \M$, the local variables $A, F,  a$ and $Da$ do not have equivariance w.r.t the structure group $H$ of $\P$. Nevertheless, the fact that they are shadows of global objects shows both in their gluing properties from one open subset of $\M$ to another, and in the local version of their gauge transformations. These are discussed in the next two sections.

\subsection{Gluing properties: passive gauge transformations} %%%%%%%%%%%%%%%%%%%%%%%%%%%%%%%%%%%%%%%%%%%%%
\label{Gluing properties: passive gauge transformations} %%%%%%%%%%%%%%%%%%%%%%%%%%%%%%%%%%%%%%%%%%%%%%%%

Consider $\U, \U' \subset \M$ such that $\U \cap \U' \neq \emptyset$, endowed with  local sections $\s:\U\rarrow \P_{|\U}$ and $\s':\U'\rarrow \P_{|\U'}$. On the overlap, both sections are related as 
\begin{align*}
\s'=\s g, \quad \text{ with }\quad g:\U \cap \U' &\rarrow \, H, \\				%\s'(x)&=\s(x)g(x)
						          x \quad   &\mapsto \, g(x).
\end{align*}
It is a standard result that for $X_x \in T_x\M$, $x \in \U \cap \U'$, the pushforwards by $\s'$ and $\s$ are related by
\begin{align}
\label{Xpushforward}
\s'_* X_x = R_{g(x)*}\left( \s_* X_x \right)  + [ g\-dg]_{|x}(X_x) \big|^v_{\s'(x)},
\end{align}
with $\s'_* X_x \in T_{\s'(x)}\P_{|\U'}$ and $\s_* X_x \in T_{\s(x)}\P_{|\U}$. 

Furthermore, let us introduce the maps $C_\s(h)\!\defeq\!\s^*C(h):\U \rarrow G$, with $h \in H$, as well as $C_\s(g):\U \rarrow G$. Notice that, just like $C(\gamma)$ has a double dependence on $p \in \P$, $C_\s(g)$ has a double dependence on $x\in  \U \cap \U' \subset \M$. %For $x \in \U\cap \U'$, 
Their counterparts on $\U'$ are $C_{\s'}(h)\!\defeq\!\s'^*C(h):\U' \rarrow G$ and $C_{\s'}(g'):\U' \rarrow G$, and we have 
\begin{align}
C_{\s'(x)}(h)&=C_{\s(x)g(x)}(h)=C_{\s(x)}(g(x))\- C_{\s(x)}\big( g(x)h\big),  \notag\\
C_{\s'(x)}(g'(x))&=C_{\s(x)g(x)}\big(g'(x)\big)=C_{\s(x)}(g(x))\- C_{\s(x)}\left( g(x)g'(x)\right),   \label{Cs'-Cs}
\end{align}
We are ready to state the following,
\begin{prop}%%%%%%%%%%%%%%%%%%%%%%%%%%%%%%%%%%%%%%%%%
\label{passiveGTs}
The gluing properties of the local representatives of a twisted connection and a tensorial form are, 
\begin{align}
\label{passiveGTconnection}
A'&=C_\s(g)\- A\, C_\s(g) + C_\s(g)\-dC_\s(g), \\[1.5mm]
\label{passiveGTtensorial}
a'&=\rho\left[C_\s(g)\-\right] a.
\end{align}
\end{prop}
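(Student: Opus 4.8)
The plan is to compute the pulled-back forms $A'=\s'^*\omega$ and $a'=\s'^*\alpha$ directly, exploiting the relation $\s'=\s g$ together with the pushforward formula \eqref{Xpushforward}. The whole argument is the local mirror of the proofs of Propositions \ref{ConnectionGT} and \ref{TensorialGT}: there the key input was the pushforward $\Phi_*X_p = R_{\gamma(p)*}X_p + [\gamma\-d\gamma]_{|p}(X_p)\big|^v_{\Phi(p)}$, whereas here the analogous role is played by $\s'_*X_x = R_{g(x)*}(\s_*X_x) + [g\-dg]_{|x}(X_x)\big|^v_{\s'(x)}$. So the strategy is to substitute this into $A'_x(X_x) = \omega_{\s'(x)}(\s'_*X_x)$ and to split the result, by linearity, into a ``horizontal'' piece coming from $R_{g(x)*}(\s_*X_x)$ and a ``vertical'' piece coming from the $[g\-dg]^v$ term.

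For the horizontal piece, I would use the equivariance axiom \eqref{2nd axiom} with $p=\s(x)$ and $h=g(x)$: writing $\omega_{\s'(x)}(R_{g(x)*}\s_*X_x) = (R^*_{g(x)}\omega)_{\s(x)}(\s_*X_x)$ and noting $\omega_{\s(x)}(\s_*X_x)=A_x(X_x)$, this delivers the conjugation term $C_\s(g)\-A\, C_\s(g)$ together with the contribution $C_\s(g)\-dC(g(x))_{|\s(x)}(\s_*X_x)$ coming from the differential of the cocycle in its first (base-point) slot. For the vertical piece, I would invoke the first axiom \eqref{1st axiom} exactly as in the proof of Proposition \ref{ConnectionGT}, obtaining $C_\s(g)\-dC_{\s(x)}(g)_{|x}(X_x)$, the differential of the cocycle in its second ($g$) slot.

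The step I expect to require the most care is the recombination of these two contributions. Since $C_\s(g)(x)=C_{\s(x)}(g(x))$ depends on $x$ through both arguments of the cocycle --- the base point $\s(x)$ and the group element $g(x)$ --- its total differential factorises, exactly as in the computation preceding Proposition \ref{ConnectionGT}, as $dC_\s(g)_{|x}(X_x) = dC(g(x))_{|\s(x)}(\s_*X_x) + dC_{\s(x)}(g)_{|x}(X_x)$. I would verify that the base-point part produced by axiom \eqref{2nd axiom} and the group-element part produced by axiom \eqref{1st axiom} are precisely these two summands, so that they assemble into the single inhomogeneous term $C_\s(g)\-dC_\s(g)$ of \eqref{passiveGTconnection}.

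Finally, for the tensorial form \eqref{passiveGTtensorial} the argument is shorter and parallels Proposition \ref{TensorialGT}: inserting \eqref{Xpushforward} into $a'_x(X_x,\ldots)=\alpha_{\s'(x)}(\s'_*X_x,\ldots)$, the vertical term is annihilated by the horizontality of $\alpha$, and the remaining horizontal term is handled by the $C$-equivariance of $\alpha$ (again via \eqref{2nd axiom} at the level of $\rho$), giving $\rho\left[C_\s(g)\-\right]a$. No further subtlety arises here, and the two displayed gluing laws follow.
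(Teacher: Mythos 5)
Your proposal is correct and follows essentially the same route as the paper's proof: insert the pushforward formula \eqref{Xpushforward}, treat the $R_{g(x)*}$ piece with axiom \eqref{2nd axiom} and the vertical piece with axiom \eqref{1st axiom}, and recombine the base-point and group-element differentials into the total differential $dC_\s(g)$, with the tensorial case following from horizontality plus equivariance. One minor terminological point: the equivariance used for $a'$ is the defining $C$-equivariance of tensorial forms ($R^*_h\alpha=\rho[C(h)\-]\alpha$), not axiom \eqref{2nd axiom} itself, though this does not affect the argument.
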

\begin{proof} %%%%%%%%%%%%%%%%%%%%%%%%%%%%%%%%%%%%%%%%%%%%%%%%%%%%%%%%%%%
 For the connection, using \eqref{Xpushforward} and \eqref{1st axiom}-\eqref{2nd axiom}, we have:
\begin{align*}
A'_x(X_x) &= \s'^*\omega_{\s'(x)}(X_x)= \omega_{\s'(x)}(\s'_*X_x)%,  \\
		=\omega_{\s'(x)} \left( R_{g(x)*}\left( \s_* X_x \right)  + [ g\-dg]_{|x}(X_x) \big|^v_{\s'(x)} \right), \\
		&= R^*_{g(x)} \omega_{\s'(x)} (\s_* X_x ) + \tfrac{d}{d\tau} C_{\s'(x)}\left( e^{\tau \, [g\-dg]_{|x}(X_x)} \right)\big|_{\tau =0},\\
		&=\left(   C_{\s(x)}( g(x) )\- \omega_{\s(x)} C_{\s(x)}( g(x) ) +  C_{\s(x)}( g(x) )\- d C( g(x))_{|\s(x)}   \right)(\s_* X_x) \\
		& \hspace{6cm}  + \tfrac{d}{d\tau} C_{\s(x)}( g(x) )\- C_{\s(x)} \left( g(x)\, e^{\tau\, g(x)\- dg_{|x}(X_x) }  \right)\big|_{\tau =0},  \\
		&= C_{\s(x)}( g(x) )\- \s^*\omega_{\s(x)}(X_x)\,  C_{\s(x)}( g(x) ) +  C_{\s(x)}( g(x) )\- d C_\s( g(x))_{|x} (X_x) \\
		& \hspace{6cm} + C_{\s(x)}( g(x) )\- dC_{\s(x)|\, g(x)} \bigg(\underbrace{g(x) \tfrac{d}{d\tau}  e^{\tau\, g(x)\- dg_{|x}(X_x) }\big|_{\tau =0}}_{dg_{|x}(X_x)} \bigg),  \\
		&= \bigg( C_{\s(x)}( g(x) )\- A_x\,  C_{\s(x)}( g(x) ) +  C_{\s(x)}( g(x) )\- \big(\underbrace{ d C_\s( g(x))_{|x}   +   dC_{\s(x)}(g)_{|x} }_{dC_\s(g)_{|x}}\big) \bigg)(X_x).
\end{align*}
Likewise for a tensorial form,
\begin{align*}
a'_x(X_x, \ldots) &= \s'^*\alpha_{\s'(x)}(X_x, \ldots)= \alpha_{\s'(x)}(\s'_*X_x, \ldots),  \\
		&=\alpha_{\s'(x)} \left( R_{g(x)*}\left( \s_* X_x\right)  + [ g\-dg]_{|x}(X_x) \big|^v_{\s'(x)}, \ldots \right), \\
		&=R^*_{g(x)} \alpha_{\s'(x)} (\s_* X_x, \ldots ) ,
		= \rho\left[ C_{\s(x)}(g(x)) \-\right] \alpha_{\s(x)}  (\s_* X_x, \ldots ), \\
		&=\rho\left[ C_{\s(x)}(g(x)) \-\right] \s^*\alpha_{\s(x)}  (X_x, \ldots ) = \rho\left[ C_{\s(x)}(g(x)) \-\right] a_x  (X_x, \ldots ).
\end{align*}
\end{proof}

The last result  holds true for the exterior covariant derivative: $(Da)'=\rho\left[C_\s(g)\-\right] Da$, which is also found from \eqref{passiveGTconnection} by having $(Da)'=D'a'=da'+\rho_*(A')\,a'$. In the language of physics, this would be an implementation of the \emph{gauge principle}.
As a special case of \eqref{passiveGTtensorial},  we obtain the gluing properties of the local representatives of the curvature, sections and their covariant derivative,
\begin{align}
\label{passiveGTothers}
F'= C_\s(g)\- F\, C_\s(g), \qquad   \phi'= \rho\left[C_\s(g)\-\right] \phi \quad \text{ and } \quad (D\phi)'=D'\phi'= \rho\left[C_\s(g)\-\right] D\phi.
\end{align}
The first result can be obtain from Cartan's structure equation and \eqref{passiveGTconnection} by having, $F'=dA'+\tfrac{1}{2}[A', A']$. 
%In the same way, the last result is obtained by $(D\phi)'=D'\phi'=d\phi'+\rho_*(A')\phi'$.
\medskip

Suppose now that we have a third open subset $\U''$ such that $\U'' \cap \U' \cap \U \neq \emptyset$, and consider a section $\s'':\U''\rarrow G$ such that on $\U'' \cap \U' \cap \U$, 
\begin{align*}
\s''=\s' g'=\s gg', \quad \text{ where }\quad g':\U'' \cap \U'\cap \U &\rarrow \, H, \\			%\s'(x)&=\s(x)g(x)
						          x \qquad   &\mapsto \, g'(x).
\end{align*}
We check that the gluing properties are well-behaved across open subsets. 
Using \eqref{Cs'-Cs} and \eqref{passiveGTconnection} we find that, 
\begin{align}
A''&=C_{\s'}(g')\- A' C_{\s'}(g') + C_{\s'}(g')\-d C_{\s'}(g'),  \notag\\
    &= C_\s(gg')\- C_\s(g) \left(    C_{\s}(g)\- A \, C_{\s}(g) + C_{\s}(g)\-d C_{\s}(g)      \right) C_\s(g)\- C_\s(gg') \notag \\
    &\hspace{4cm} + C_\s(gg')\- C_\s(g) d \left(   C_\s(g)\- C_\s(gg')  \right), \notag\\
%    \intertext{which  gives:}
%A''
&= C_\s(gg')\- A\, C_\s(gg') + C_\s(gg')\-d C_\s(gg').    \label{passivedoubleGTconnection}
\end{align}
In the same way, using  \eqref{Cs'-Cs} and \eqref{passiveGTtensorial}, 
\begin{align}
a''= \rho\left[C_{\s'}(g')\-\right] a' = \rho\left[  C_\s(gg')\- C_\s(g)  \right] \rho\left[C_\s(g)\-\right] a %, \notag\\
    =\rho\left[C_\s(gg')\-\right] a.
\end{align}
So that, $(D\alpha)''\!=\rho\left[C_\s(gg')\-\right] Da$, which is also obtained from $(Da)''\!=D''a''\!=\!da''+\rho_*(A'')\,a''$.
As special cases of this result, we have:
\begin{align}
\label{passivedoubleGTothers}
F''\!= C_\s(gg')\- F\, C_\s(gg'), \qquad   \phi''\!= \rho\left[C_\s(gg')\-\right] \phi \quad \text{ and } \quad (D\phi)''\!=\rho\left[C_\s(gg')\-\right] D\phi.
\end{align}
%with the first result obtained also from Cartan's structure equation and \eqref{passivedoubleGTconnection} by  $F''=dA''+\tfrac{1}{2}[A'', A'']$.
% and the last result also obtained by $(D\phi)''=D''\phi''=d\phi''+\rho_*(A'')\phi''$.
%\medskip

The gluing properties  \eqref{passiveGTconnection}-\eqref{passiveGTtensorial} in proposition \ref{passiveGTs}  resemble the active gauge transformations of propositions \ref{ConnectionGT} and \ref{TensorialGT}. But while the latter describe the  transformation of global objects (i.e. living on $\P$) into new global objects, the former merely describe how the same global objects are seen from  different open subsets of $\M$ - or from the same subset but through different local sections. This justifies the terminology \emph{passive gauge transformations} for the gluing properties, that is  of common use in physics. 

This is in close analogy with changes of coordinate representations of intrinsic geometric objects on $\M$ in (pseudo) Riemannian geometry and General Relativistic physics, which are dubbed \emph{passive diffeomorphisms} due to their formal identity with the action of $\Diff(\M)$ which transforms intrinsic objects into new ones. Elements of $\Diff(\M)$ are therefore sometimes called \emph{active diffeomorphisms}.

Yet, there is obviously also a local representation of the active gauge transformations discussed in section \ref{Action of vertical automorphisms and gauge transformations}. This is the object of the next section.

\subsection{Local active gauge transformations} %%%%%%%%%%%%%%%%%%%%%%%%%%%%%%%%%%%%%%%%%%%%%
\label{Local active gauge transformations} %%%%%%%%%%%%%%%%%%%%%%%%%%%%%%%%%%%%%%%%%%%%%%%%

Let us first denote the local representatives on $\U\subset\M$ of the gauge group elements $\gamma, \eta \in \H$ by  upright greek letters, 
%\begin{align*}
$\upgamma\defeq \s^*\gamma:\U \rarrow H$, %\qquad
and
 $\upeta\defeq \s^*\eta:\U \rarrow H$.
%\end{align*}
The local gauge group on $\U$ is then simply defined as $\H_{\text{loc}}\defeq \left\{ \upgamma : \U \rarrow H\, |\, \upgamma^\upeta=\upeta\- \upgamma \upeta \right\}$, where the defining property is the pullback by $\s$ of the action of $\H$ on itself. 
We then define the smooth map,
\begin{align*}
C_\s(\upgamma)\defeq \s^* C(\gamma) : \U &\rarrow G, \\
			  					x  &\mapsto C_{\s(x)}(\upgamma(x)),
\end{align*}
which  resembles the map $C_\s(g)$ introduced above. Its local active gauge transformation is,
\begin{align}
C_\s(\upgamma)^\upeta\defeq  \s^*\left( C(\gamma)^\eta\right)=%\s^*\left( \Psi^*C(\gamma)\right) =
						\s^*\left(  C(\eta)\- C(\gamma\eta) \right)
					=       C_\s(\upeta)\- C_\s(\upgamma\upeta)  \label{localGTCgamma}
\end{align}
Notice the close formal analogy with \eqref{Cs'-Cs}. The local active gauge transformations of a connection and tensorial forms are then:
\begin{align}
A^\upgamma&= \s^* \omega^\gamma=  \s^* \left( C(\gamma)\-\omega\, C(\gamma) + C(\gamma)\-dC(\gamma) \right), \notag \\
		    &= C_\s(\upgamma)\-A\, C_\s(\upgamma) + C_\s(\upgamma)\-dC_\s(\upgamma).   \label{localGTconnection}
\intertext{and}
a^\upgamma&=\s^* \alpha^\gamma=\s^*\left(  \rho\left[ C(\gamma)\- \right] \alpha \right), \notag \\
		    &=  \rho\left[ C_\s(\upgamma)\- \right] a. 			\label{localGTtensorial}
\end{align}
 This latter result holds true for the exterior covariant derivative, $(Da)^\upgamma= \rho\left[ C_\s(\upgamma)\- \right] Da$, which is also obtained from \eqref{localGTconnection} via $(Da)^\upgamma=D^\upgamma a^\upgamma= da^\upgamma + \rho_*\left( A^\upgamma \right)a^\upgamma$. This is again an  implementation of the \emph{gauge principle}.

As a special case of \eqref{localGTtensorial},  we obtain the local active gauge transformations of the curvature, sections and their covariant derivative,
\begin{align}
\label{localGTothers}
F^\upgamma= C_\s(\upgamma)\- F\, C_\s(\upgamma), \qquad   \phi^\upgamma= \rho\left[C_\s(\upgamma)\-\right] \phi \quad \text{ and } \quad (D\phi)^\upgamma=D^\upgamma\phi^\upgamma= \rho\left[C_\s(\upgamma)\-\right] D\phi.
\end{align}
The first result being also obtained from Cartan's structure equation and \eqref{localGTconnection} via $F^\upgamma=dA^\upgamma+\tfrac{1}{2}[A^\upgamma, A^\upgamma]$. 
\medskip

Finally, from \eqref{localGTCgamma} is is easily seen that, 
\begin{align*}
(A^\upgamma)^\upeta&=A^{\upgamma\upeta}= C_\s(\upgamma\upeta)\- A\, C_\s(\upgamma\upeta) + C_\s(\upgamma\upeta) \-d C_\s(\upgamma\upeta), \\
(a^\upgamma)^\upeta&=a^{\upgamma\upeta}=\rho\left[ C_\s(\upgamma\upeta)\- \right] a.
\end{align*}
Further transformations of $F$, $\phi$ and $D\phi$ ensue. This shows that the action of $\H_{\text{loc}}$ on local objects on $\U\subset \M$ is a well-behaved right action.
\medskip

Let us reiterate a standard yet important gauge theoretic observation: The local active gauge transformations \eqref{localGTconnection}-\eqref{localGTtensorial}, relating  local representatives seen through the same section $\s$ (by the same observer) of different global objects, are formally indistinguishable from the passive gauge transformations \eqref{passiveGTconnection}-\eqref{passiveGTtensorial}, relating  local representatives seen  through different sections $\s$ and $\s'$ (by distinct observers) of the same global objects. This is clear by  observing that $\s^*\Phi(p)=\Phi(\s(x))=\s(x)\gamma(\s(x))=\s(x)\upgamma(x)$. So the pullback by $\s$ of objects actively transformed by $\Phi/\gamma$ is formally equivalent to the pullback of the untransformed objects by a new local section $\s'=\s\upgamma$.
 Nevertheless, in physics, symmetry under active gauge transformations is  of much greater conceptual importance than the mere symmetry under passive ones. In the case of general relativistic physics for example, while symmetry of the theory under  coordinate changes translates as a principle -  the principle of general relativity - of democratic access to intrinsic objects of $\M$ which is thus at first seen as identical to the objective spacetime,  symmetry under $\Diff(\M)$ implies that the manifold $\M$ and its points are non-physical and that only relative field configurations over it - that can be diffeomorphically dragged - have physical meaning (this is the famous ``hole argument"). In Yang-Mills gauge theories - and a fortiori here - it is still not entirely clear how one should interpret the two types of formally equivalent symmetries.\footnote{This matter is distinct from another  important discussion, mainly addressed by philosophers of physics, regarding the demarcation criterion between \emph{substantial} and \emph{artificial} gauge symmetries. A main takeaway is that substantial gauge symmetries (either passive or active) in Yang-Mills theories signal non-local physical properties or phenomena, while artificial symmetries do not. See \cite{Francois2018} and references therein.}

 \section{Mixing with the standard situation} %%%%%%%%%%%%%%%%%%%%%%%%%%%%%%%%%%%%%%%%%%%%%
\label{Mixing with the standard situation} %%%%%%%%%%%%%%%%%%%%%%%%%%%%%%%%%%%%%%%%%%%%%%%%
 
 In this section we consider the minimal conditions under which the geometry described above mixes and coexists with the standard one. 
At the risk of some repetition, we thus slightly generalise the previous construction.
 %{\color{gray} We begin by providing some basic ingredients and results so as to streamline the presentation of the main propositions.}
 %\medskip
 
 We will first suppose that the structure group of the bundle $\P$ is a direct product $H \times K$, whose 
 elements are written $(h, k)=hk$ for $h\in H$ and $k \in K$. With the two subgroups commuting, the composition law is simply $hk \cdot h'k' = hkh'k'= hh'kk' \in H\times K$. The right action on $\P$ is thus $R_{hk}=R_{kh}$, and
  the right actions of the two subgroups commute: $R_h\circ R_k = R_k \circ R_h$. 
  
  We also consider the (inner) semi-direct product group $G \rtimes K$, whose elements are written $gk$ for $g \in G$ and $k\in K$. The two subgroups do not commute,  the composition law is $gk \cdot g'k' = gkg'k'=g\, kg'k\-\!\cdot kk' \in G \rtimes K$, and the group morphism $K \rarrow \Aut(G)$ defining the semi-direct product is  $k \mapsto \text{Conj}(k)$. 
Finally, we require that the representation $(\rho, V)$ of $G$ extends to a representation of $G \rtimes K$, and is therefore a representation for both subgroups. 
%\smallskip 

 The group of vertical automorphisms is also a  direct product $\Aut_v(\P)=\Aut_v(\P, H) \times \Aut_v(\P, K)$, with elements $\Psi=(\Phi, \Xi) $. Correspondingly, the gauge group is $\H \times \K$ with elements $(\gamma, \zeta)=\gamma\zeta$. The association is $\Psi(p)=p\gamma(p)\zeta(p)$. Because of the commutativity of the actions of $H$ and $K$ we have,
 \begin{equation}
 \label{eq-GTmaps}
 \begin{split}
 R^*_k\gamma=\gamma, \qquad \Xi^*\gamma=\gamma, \\
 R^*_h\zeta=\zeta , \qquad \Phi^*\zeta=\zeta. 
 \end{split}
 \end{equation}
 From this we have indeed that $\Psi=\Phi \circ \Xi= \Xi \circ\Phi$.
 
 Consider $X^v$ and $Y^v \in \Gamma(V\P)$ generated respectively by $X\in$ Lie$H$ and $Y\in$ Lie$K$. 
 We have the infinitesimal versions of the above equivariance laws,
  \begin{gather}
  \label{LieDer-GTmaps}
L_{Y^v} \gamma =Y^v(\gamma) = 0 \qquad \text{and} \qquad L_{X^v} \zeta =X^v (\zeta) =0 .
 \end{gather}
 Still by commutativity of the action of $H$ and $K$ we get, 
 \begin{equation} 
 \label{pushforward-vect}
 \begin{split}
 R_{k*}X_p^v=X_{pk}^v, \qquad  R_{k*}Y_p^v=\Ad_{k\-}Y \big|_{pk}^v, \\
  R_{h*}Y_p^v = Y_{ph}^v, \qquad	R_{h*}X_p^v = \Ad_{k\-}X \big|_{ph}^v ,
  \end{split}
 \end{equation}
 Also, for $Z^v=\{ X^v, Y^v \}$ it is easily shown that 
% \begin{gather}
$\Phi_*Z^v_p =Z^v_{\Phi(p)}$ and $\, \Xi_*Z^v_p=Z_{\Xi(p)}$.
% \end{gather}
\bigskip

The definition of the cocycle map $C$ prescribes its $H$-equivariance. We need to specify also its $K$-equivariance. It is easily found that the simplest choice compatible with its $H$-equivariance is, for $h' \in H$ and $k\in K$,
\begin{align}
\label{KeqC}
R^*_k C(h') = k\- C(h') k, \quad \text{ whose infinitesimal version is} \quad L_{Y^v}C(h')= [C(h'), Y].
\end{align}
Indeed, from \eqref{HeqC} and \eqref{KeqC}, one has on the one hand $C_{pkh}(h')=C_{pk}(h)\-C_{pk}(hh')=k\- C_p(h)\-k\cdot k\- C_p(hh')k$. And on the other hand,, $C_{phk}(h')=k\- C_{ph}(h')k=k\- C_p(h)\-C_p(hh')k$. The infinitesimal equivariance if obtained from $\left( L_{Y^v}C(h') \right)(p)=\tfrac{d}{d\tau}C_{pe^{\tau Y}}(h')\big|_{\tau = 0}$. 
It follows that the $\K$-gauge transformation of this map is,
\begin{align}
\label{K-GT-C}
C(h')^\zeta\defeq \Xi^*C(h') = \zeta\- C(h') \zeta.
\end{align}

In the same way, the $H$-equivariance of the map $C(\gamma)$ is known from \eqref{HequivCgamma}. From above, we get its $K$-equivariance
\begin{align}
\label{KequivCgamma}
R^*_k C(\gamma) = k\- C(\gamma) k, \quad \text{ with infinitesimal version } \quad L_{Y^v}C(\gamma)= [C(\gamma), Y].
\end{align}
From which follows that,
\begin{align}
\label{K-GT-Cgamma}
C(\gamma)^\zeta\defeq \Xi^*C(\gamma) = \zeta\- C(\gamma) \zeta.
\end{align}
 Now, let us see what we can do with these ingredients.
  
\subsection{Mixed vector bundles and tensorial forms} %%%%%%%%%%%%%%%%%%%%%%%%%%%%%%%%%%%%%%%%%%%%%
\label{Mixed vector bundles and tensorial forms} %%%%%%%%%%%%%%%%%%%%%%%%%%%%%%%%%%%%%%%%%%%%%%%%
 
 Given a representation $(\rho, V)$ of $G \rtimes K$, we define the mixed vector bundle   $\E^C=\P \times_{C(H)\rtimes K} V \defeq \P \times V / \sim$, with equivalence relation $(p, v) \sim \left(phk=pkh,\, \rho\left( k\-C_p(h)\- \right) v \right)$. It is well defined because on the one hand we have, 
 \begin{align*}
 (p, v) \sim \left(  ph, \rho\left[C_p(h)\-\right] v \right) \sim \left(  phk,  \rho\left( k\- \right)\rho\left[C_p(h)\-\right] v   \right)= \left(  phk,  \rho\left[k\-C_p(h)\-\right] v   \right).
 \end{align*}
 On the other hand, using \eqref{KeqC}, 
 \begin{align*}
 (p, v) \sim \left( pk, \rho\left( k\- \right) v\right) \sim \left( pkh, \rho\left[ C_{pk}(h)\- \right] \rho\left( k\- \right) v \right)=\left( pkh, \rho\left[ k\-C_{p}(h)\- k \right] \rho\left( k\- \right) v \right)=\left( pkh, \rho\left[ k\-C_{p}(h)\- \right]  v \right).
 \end{align*}
 It is clear that the twisted bundle $E^C=\P \times_{C(H)} V$ is a subbundle of $\E^C$, and  so is the standard vector bundle $E=\P \times_{K} V \defeq \P \times V / \sim$ for the equivalence relation $(p, v)\sim \big(pk, \rho( k\-) v \big)$. Hence the name for $\E^C$. 
 \medskip
 
  Define the space of $C(H)\rtimes K$-tensorial differential forms,
 \begin{align*}
 \Omega^\bullet_{\text{tens}}\big(\P, C(H)\!\rtimes\! K \big)=\bigg\{ \alpha \in \Omega^\bullet(\P, V)\, \big|\, \alpha_p(Z_p^v, \ldots)=~0 \allowbreak \text{ for } Z^v=\{X^v, Y^v\},\allowbreak \text{ and } \allowbreak R^*_{hk}\alpha=R^*_{kh}\alpha= \rho\big(C_p(h)k\big)\-\alpha \bigg\}.
 \end{align*}
 Clearly, these are in particular  both $C(H)$-tensorial and $K$-tensorial, and we indeed verify  the compatibility relations:
\begin{align*}
 R^*_k\left(R^*_h \alpha_{pkh} \right)&=R^*_k \left( \rho\left( C_p(h)\- \right) \alpha_{pk}\right) = \rho\left( C_{pk}(h)\-\right) R^*_k\alpha_{pk} = \rho\left( k\- C_p(h)\- k \right) \rho\left( k\- \right) \alpha_p= \rho\left( k\- C_p(h)\- \right)\alpha_p,\\
 R^*_h\left(R^*_k \alpha_{phk}\right) &= R^*_h \left( \rho\left( k\-\right) \alpha_{ph}\right)=\rho\left( k\-\right) \rho\left( C_p(h)\- \right)\alpha_p.
 \end{align*}
  As per the usual argument, there is an isomorphism $\Gamma\big(\E^C\big)\simeq \Omega^0_{\text{tens}}\big(\P, C(H)\!\rtimes\! K\big)$. 
 
  Again, the question arises as to the adequate notion of connection on $\P$ that provides a good covariant derivative on $\Omega^\bullet_{\text{tens}}\big(\P, C(H)\!\rtimes\! K\big)$, and on sections of $\E^C$ in particular.

\subsection{Mixed twisted connections} %%%%%%%%%%%%%%%%%%%%%%%%%%%%%%%%%%%%%%%%%%%%%
\label{Mixed twisted connections} %%%%%%%%%%%%%%%%%%%%%%%%%%%%%%%%%%%%%%%%%%%%%%%%
 
 We endow the bundle $\P(\M, H\!\times\! K)$ with a connection $\omega \in \Omega^1_\text{eq}\big(\P, \text{Lie}\big(G \rtimes K)\big)$ satisfying,
 \begin{align} 
 \label{Mixed 1st cond}
 &\omega_p\left( X_p^v+ Y_p^v \right)=\tfrac{d}{d\tau} C_p\left(e^{\tau \, X} \right)\big|_{\tau=0}+ Y= dC_{p|e}(X) + Y  \,\in \text{Lie}G \oplus \text{Lie}K,    \tag{\Rmnum{1}$^\star$}  \\[1mm]
 \label{Mixed 2nd cond}
  &R^*_{hk}\omega_{phk}=R^*_{kh} \omega_{pkh} = [ C_p(h)k]\- \omega_p\, [C_p(h)k] + [C_p(h)k]\- d [ C(h)k  ]_{|p}. \tag{\Rmnum{2}$^\star$}
 \end{align}
 It is clear that on the $H$-subbundle, $\omega$ satisfies the properties \eqref{1st axiom}-\eqref{2nd axiom} of a twisted connection, while on the $K$-subbundle it satisfies the definition of a standard $K$-principal connection:
% \begin{align}
 %\label{3rd axioms}
$ \omega_p(Y_p^v)=Y$ %\qquad \text{
 and %} \qquad  
 $R^*_k \omega_{pk} = \Ad_{k\-}\omega_p$. 
 %\end{align}
 We therefore call a 1-form $\omega$ defined by \eqref{Mixed 1st cond} and \eqref{Mixed 2nd cond}, a \emph{mixed} twisted connection. 

\subsubsection{Covariant derivative} %%%%%%%%%%%%%%%%%%%%%%%%%%%%
\label{Covariant derivative}%%%%%%%%%%%%%%%%%%%%%%%%%%%%

The infinitesimal version of the equivariance property of tensorial forms is, 
\begin{align*}
L_{X^v+ Y^v}\alpha = \tfrac{d}{d\tau} R^*_{e^{\tau (X+Y)}} \alpha\big|_{\tau=0} = \tfrac{d}{d\tau} \,\rho\left[ e^{-\tau Y}C\left(e^{\tau X}\right)\-\right] \alpha\big|_{\tau=0} = -\rho_* \left[Y+ \tfrac{d}{d\tau} C\left( e^{\tau X}\right) \big|_{\tau=0} \right] \alpha=-\rho_*\big( Y + dC_e(X) \big) \,\alpha.
\end{align*}
With this in mind, we obtain the following.

\begin{prop} %%%%%%%%%%%%%%%%%%%%%%%%%%%%%%%%%%%%%%%%%%%%%%%
\label{MixedCovDiff}
The exterior covariant derivative  defined as $D\defeq d\, +\!\rho_*(\omega)$ preserves both $ \Omega^\bullet_{\text{eq}}\big(\P, C(H)\!\rtimes\! K\big)$ and $ \Omega^\bullet_{\text{tens}}\big(\P, C(H)\!\rtimes\! K\big)$.

\end{prop}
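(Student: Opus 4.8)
The plan is to mirror the two-part structure of the proof of Proposition \ref{CovDiff}, carrying the combined $H\times K$ data through each step. Since $R_{hk}=R_{kh}$ and $R^*_{hk}=R^*_{kh}$, a single computation with the combined law \eqref{Mixed 2nd cond} handles the full $H\times K$ equivariance at once. For brevity I would write $M\defeq C_p(h)k \in G\rtimes K$, so that \eqref{Mixed 2nd cond} reads $R^*_{hk}\omega_{phk}=M\-\omega_p M + M\-dM$ — formally identical to \eqref{2nd axiom} with $M$ replacing $C(h)$, where one notes $d[C(h)k]=dC(h)\,k$ because $k$ is constant over $\P$.

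First I would show that $D$ preserves $\Omega^\bullet_{\text{eq}}\big(\P, C(H)\!\rtimes\! K\big)$. For such $\alpha$ one has $R^*_{hk}\alpha=\rho[M\-]\alpha$, and I compute $R^*_{hk}D\alpha = dR^*_{hk}\alpha + \rho_*(R^*_{hk}\omega)R^*_{hk}\alpha$. Expanding $d(\rho[M\-]\alpha)=d\rho[M\-]\cdot\alpha+\rho[M\-]d\alpha$ and inserting \eqref{Mixed 2nd cond}, the argument runs exactly as in Proposition \ref{CovDiff}: the contribution $\rho_*(M\-dM)\rho[M\-]\alpha$ cancels $d\rho[M\-]\cdot\alpha$, while $\rho_*(M\-\omega M)\rho[M\-]\alpha=\rho[M\-]\rho_*(\omega)\alpha$, so that $R^*_{hk}D\alpha=\rho[M\-]D\alpha$, i.e. $D\alpha$ is again equivariant.

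Then I would establish that tensoriality is preserved. As before it suffices to treat a $1$-form and evaluate $D\alpha(Z^v,W)$ for $Z^v=X^v+Y^v$ vertical (with $X\in\text{Lie}H$, $Y\in\text{Lie}K$) and $W$ arbitrary. Writing $D\alpha(Z^v,W)=d\alpha(Z^v,W)+\rho_*(\omega(Z^v))\alpha(W)-\rho_*(\omega(W))\alpha(Z^v)$, the last term drops out since $\alpha(Z^v)=0$, and Cartan's magic formula gives $d\alpha(Z^v,W)=(L_{Z^v}\alpha)(W)$ because $i_{Z^v}\alpha=0$. Substituting the infinitesimal equivariance $L_{Z^v}\alpha=-\rho_*\big(Y+dC_e(X)\big)\alpha$ derived just above the proposition together with $\omega(Z^v)=dC_{p|e}(X)+Y$ from \eqref{Mixed 1st cond}, the two surviving terms cancel and $D\alpha(Z^v,W)=0$.

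The only delicate point — the main obstacle — is representation-theoretic rather than computational: the cancellations in both parts rely on the identities $\rho_*(M\-\omega M)=\rho[M\-]\rho_*(\omega)\rho[M]$ and $\rho_*(M\-dM)=\rho[M\-]d\rho[M]$ for the \emph{mixed} element $M=C_p(h)k\in G\rtimes K$. These hold precisely because $\rho$ was assumed to extend to a genuine representation of the full semidirect product $G\rtimes K$, so that the adjoint and Maurer–Cartan identities remain valid when the group element mixes the $G$ and $K$ factors; it is this hypothesis that makes the mixed case go through with no more effort than the pure twisted one.
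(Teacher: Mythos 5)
Your proof is correct and follows essentially the same route as the paper's: the same two-part structure, the same expansion of $R^*_{hk}D\alpha$ using \eqref{Mixed 2nd cond}, and the same evaluation of $D\alpha(X^v+Y^v,Z)$ via Cartan's magic formula, \eqref{Mixed 1st cond}, and the infinitesimal equivariance $L_{X^v+Y^v}\alpha=-\rho_*\big(Y+dC_e(X)\big)\alpha$. Your shorthand $M=C_p(h)k$ and your closing remark that the cancellations rest on $\rho$ extending to a genuine representation of $G\rtimes K$ merely make explicit a hypothesis the paper uses tacitly; they do not change the argument.
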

\begin{proof} %%%%%%%%%%%%%%%%%%%%%%%%%%%%%%%%%%%%%%%%%%%%%%
First, using \eqref{Mixed 2nd cond},we show that  $D: \Omega^\bullet_{\text{eq}}\big(\P, C(H)\!\rtimes\! K\big) \rarrow \Omega^\bullet_{\text{eq}}\big(\P, C(H)\!\rtimes\! K\big)$. For $\alpha \in \Omega^\bullet_{\text{eq}}\big(\P, C(H)\!\rtimes\! K\big)$:
\begin{align*}
R^*_{hk} D\alpha &= d R^*_{hk} \alpha + \rho_*\left(R^*_{hk} \omega \right) R^*_{hk} \alpha, \\
		       &= d \rho\big(C(h)k\big)\-  \cdot \alpha + \rho\big(C(h)k\big)\-d \alpha %\\
		       %& \hspace{4cm}  
		        + \rho_*\left( [ C(h)k]\- \omega\, [C(h)k] + [C(h)k]\- d [ C(h)k  ] \right) \rho\big(C(h)k\big)\-\alpha,\\
		       &= \rho\big(C(h)k\big)\- \big(d\alpha + \rho_*(\omega)\alpha \big)=\rho\big(C(h)k\big)\- D\alpha.
\end{align*}
Then, using \eqref{Mixed 1st cond}, we show that $D: \Omega^\bullet_{\text{tens}}\big(\P, C(H)\!\rtimes\! K\big) \rarrow \Omega^\bullet_{\text{tens}}\big(\P, C(H)\!\rtimes\! K\big)$. It is enough to prove it for $\alpha \in  \Omega^1_{\text{tens}}\big(\P, C(H)\!\rtimes\! K\big)$:
\begin{align*}
D\alpha(X^v+Y^v, Z) &= \big(d\alpha+\rho_*(\omega)\alpha \big)(X^v+Y^v, Z), \\
		     % &= X^v\cdot \omega(Y) - Y\cdot \omega(X^v) - \omega([X^v, Y]) + \rho_*(\omega(X^v))\alpha(Y) - \rho_*(\omega(Y))\underbrace{\alpha(X^v)}_{=0}.
		         &= \underbrace{d\alpha(X^v+Y^v, Z)}_{\left( L_{X^v+Y^v}\alpha \right)(Z)}+ \rho_*(\omega(X^v+Y^v))\,\alpha(Z) - \rho_*(\omega(Z))\underbrace{\alpha(X^v+Y^v)}_{=0}, \\
		         &=-\rho_*\big( Y + dC_e(X) \big) \,\alpha(Z) + \rho_*\big(  dC_e(X) + Y\big) \,\alpha(Z)=0.
\end{align*}
\end{proof}
\noindent In particular, $D$ provides a good notion of covariant differentiation of sections of the mixed vector bundle $\E^C$.

\subsubsection{Curvature} %%%%%%%%%%%%%%%%%%%%%%%%%%%%
\label{Curvature}%%%%%%%%%%%%%%%%%%%%%%%%%%%%

The curvature of the mixed connection is defined in the usual way, so that we have the result:

\begin{prop}%%%%%%%%%%%%%%%%%%%%%%%%%%%%%%%%%%%%%%%%%%%%%%%
\label{Mixed Curvature}
The curvature $\Omega=d\omega+\tfrac{1}{2}[\omega, \omega]$ is a mixed tensorial $2$-form, $\Omega \in \Omega_{\text{tens}}^2\big(\P, C(H)\!\rtimes\! K\big)$. It satisfies a Bianchi identity $D\Omega=d\Omega+[\omega, \Omega]=0$.
\end{prop}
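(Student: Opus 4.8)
The plan is to follow the template of Proposition \ref{curvature}, establishing in turn the equivariance and the horizontality of $\Omega$, and then to dispatch the Bianchi identity by a purely algebraic argument. For the equivariance I would observe that the relevant computation is \emph{formally identical} to the one in Proposition \ref{curvature}, with the $G$-valued quantity $C_p(h)$ replaced throughout by the $(G\rtimes K)$-valued quantity $C_p(h)k$. Concretely, starting from $R^*_{hk}\Omega = dR^*_{hk}\omega + \tfrac12[R^*_{hk}\omega, R^*_{hk}\omega]$ and inserting the prescribed equivariance \eqref{Mixed 2nd cond}, the manipulations use only the Leibniz rule for $d$, the identity $d(A\-)=-A\-(dA)A\-$, and bilinearity of the bracket, all valid in any Lie group. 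Since $k\in K$ is a \emph{fixed} element, $d[C(h)k]=(dC(h))k$, so the inhomogeneous terms cancel exactly as in the pure case, leaving $R^*_{hk}\Omega = [C_p(h)k]\-\,\Omega\,[C_p(h)k]$. This is precisely the $\Ad_{C(H)\rtimes K}$-equivariance appropriate to the $\text{Lie}(G\rtimes K)$-valued curvature, so $\Omega\in\Omega^2_{\text{eq}}\big(\P, C(H)\!\rtimes\! K\big)$.

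For horizontality the economical route is to prove a single \emph{uniform} identity, $L_{Z^v}\omega = d\big(\omega(Z^v)\big) + [\omega, \omega(Z^v)]$, valid for every vertical $Z^v = X^v + Y^v$ with $X^v$ generated by $X\in\text{Lie}H$ and $Y^v$ by $Y\in\text{Lie}K$. The $H$-contribution is exactly identity \eqref{Id2}, which still holds here because setting $k=e$ in \eqref{Mixed 2nd cond} reproduces axiom \eqref{2nd axiom} verbatim; the $K$-contribution follows from setting $h=e$ in \eqref{Mixed 2nd cond}, which gives the standard $R^*_k\omega = \Ad_{k\-}\omega$ and hence $L_{Y^v}\omega = [\omega, Y] = [\omega, \omega(Y^v)]$ with $d(\omega(Y^v))=dY=0$. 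Summing the two pieces and using $\omega(Z^v)=dC_{|e}(X)+Y$ from \eqref{Mixed 1st cond} yields the claimed identity. Then Cartan's magic formula applied to $\omega$ gives $d\omega(Z^v, W)=(L_{Z^v}\omega)(W)-d(\omega(Z^v))(W)$ for any $W\in\Gamma(T\P)$, and substituting into $\Omega(Z^v, W)=d\omega(Z^v, W)+[\omega(Z^v), \omega(W)]$ produces the cancellation $\Omega(Z^v, W)=[\omega(W),\omega(Z^v)]+[\omega(Z^v),\omega(W)]=0$. Specialising to $Y=0$, to $X=0$, or to both nonzero recovers simultaneously the purely $H$-vertical, purely $K$-vertical, and mixed cases, so $\Omega$ is horizontal and therefore $\Omega\in\Omega^2_{\text{tens}}\big(\P, C(H)\!\rtimes\! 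K\big)$.

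The Bianchi identity is then immediate. Because $\Omega$ is now known to be tensorial in the adjoint representation, its exterior covariant derivative takes the form $D\Omega = d\Omega + [\omega, \Omega]$. Differentiating the structure equation and using $d\,d\omega=0$ gives $d\Omega = [d\omega, \omega]$, while the graded Jacobi identity $[\omega,[\omega,\omega]]=0$ gives $[\omega, \Omega]=-[d\omega, \omega]$; adding the two yields $D\Omega=0$. Note that this final step is purely formal and uses neither the equivariance nor the horizontality established above; those are needed only to justify writing the covariant derivative as $d\Omega+[\omega,\Omega]$.

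I expect the genuine obstacle to be the \emph{mixed} vertical cross-term $\Omega(X^v, Y^v)$ with $X^v$ being $H$-vertical and $Y^v$ being $K$-vertical, a configuration that has no counterpart in either the purely twisted setting of Proposition \ref{curvature} or the standard principal-connection theory. The uniform Lie-derivative identity above is engineered precisely to absorb this term along with the others, and the delicate point is checking that identity \eqref{Id2} and the standard $K$-equivariance combine \emph{additively}; this in turn hinges on the clean decoupling of the $\text{Lie}G$ and $\text{Lie}K$ components of $\omega(Z^v)$ enforced by \eqref{Mixed 1st cond}, together with the commutativity relations \eqref{pushforward-vect} that guarantee $L_{X^v}$ and $L_{Y^v}$ act compatibly.
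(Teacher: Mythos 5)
Your proof is correct, and its three ingredients --- the equivariance computation with $C(h)k$ in place of $C(h)$, the reduction of horizontality to identity \eqref{Id2} plus the standard $K$-equivariance, and the purely algebraic Bianchi identity --- are exactly those of the paper's proof. The only organizational difference is in the horizontality step: the paper sets $Y^v=0$ and $X^v=0$ in \eqref{Mixed 1st cond} separately, invokes Proposition \ref{curvature} for $H$-vertical directions and the usual Ehresmann argument for $K$-vertical ones, and concludes by bilinearity of $\Omega$; you instead fuse the two cases into the single identity $L_{Z^v}\omega = d\big(\omega(Z^v)\big) + \left[\omega, \omega(Z^v)\right]$ for $Z^v=X^v+Y^v$ and run one Cartan-magic-formula computation. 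These routes are mathematically equivalent: your uniform identity is precisely the sum of \eqref{Id2} and $L_{Y^v}\omega=[\omega, Y]$, which is what the paper's linearity argument amounts to, performed before rather than after evaluating the curvature. One remark on your closing diagnosis: the mixed cross-term $\Omega(X^v, Y^v)$ that you flag as the ``genuine obstacle'' is not actually an obstacle for the paper's route either, because the horizontality of $\Omega$ against $H$-vertical vectors (the ``Secondly'' computation in the proof of Proposition \ref{curvature}) is established with an \emph{arbitrary} vector field in the second slot --- in particular a $K$-vertical one --- so bilinearity alone disposes of the cross-term. Your uniform identity handles it just as automatically; it is a clean repackaging rather than a necessary innovation.
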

\begin{proof}%%%%%%%%%%%%%%%%%%%%%%%%%%%%%%%%%%%%%%%%%%%%%%%
The equivariance is proven the usual way via \eqref{Mixed 2nd cond},
\begin{align*}
R^*_{hk} \Omega &= dR^*_{hk} \omega + \tfrac{1}{2}[R^*_{hk} \omega, R^*_{hk}\omega], \\
		        &= d\left( [ C(h)k]\- \omega\, [C(h)k] + [C(h)k]\- d [ C(h)k  ]  \right) \\
		        & \hspace{1cm} + \tfrac{1}{2}\left[[ C(h)k]\- \omega\, [C(h)k] + [C(h)k]\- d [ C(h)k  ] , \, [ C(h)k]\- \omega\, [C(h)k] + [C(h)k]\- d [ C(h)k  ]  \right], \\
		        &= \ldots \\
		        &=  [C(h)k]\-\left( d\omega + \tfrac{1}{2} [\omega, \omega]\right) [C(h)k]
		        = [C(h)k]\- \Omega \, [C(h)k].
\end{align*}
Now, taking $Y^v=0$ in \eqref{Mixed 1st cond}, the horizontality of $\Omega$ w.r.t. $H$-vertical vector fields is proven as in Proposition \ref{curvature}. Taking $X^v=0$ in \eqref{Mixed 1st cond}, $\omega$ is a standard Ehresmann $K$-connection, so the horizontality of $\Omega$ w.r.t. $K$-vertical vector fields is proven the usual way. By linearity, $\Omega$ is $H\times K$-horizontal. It is therefore $C(H)\!\rtimes\! K$-tensorial. Since here $\rho=\Ad$, the covariant derivative is $D\Omega=d\Omega+[\omega, \Omega]$ and vanishes by definition of $\Omega$. 

\end{proof}
\noindent It is easily seen that another standard result that extends to the mixed case is  that for $\alpha \in \Omega^\bullet_\text{tens}\big(\P, C(H)\!\rtimes\! K \big)$, $D^2 \alpha=DD\alpha=\rho_*(\Omega)\alpha$.

\subsection{Mixed gauge transformations} %%%%%%%%%%%%%%%%%%%%%%%%%%%%%%%%%%%%%%%%%%%%
\label{Mixed gauge transformations} %%%%%%%%%%%%%%%%%%%%%%%%%%%%%%%%%%%%%%%%%%%%%%%

The  gauge transformations of the mixed connection and tensorial forms assume a simple form because the actions of $\Aut_v(\P, H)\simeq\H$ and $\Aut_v(\P, K)\simeq\K$ commute. Indeed we have the following. 

 \begin{prop} %%%%%%%%%%%%%%%%%%%%%%%%%%%%%%%%%%%
 \label{mixedGTconnection}
 The gauge transformations of $\omega$ and $\alpha \in \Omega^\bullet_\text{tens}\big(\P, C(H)\!\rtimes\! K \big)$ are, 
 \begin{align}
 \label{H-K_GTconnection}
 \omega^{\gamma\zeta}&%=(\omega^\zeta)^\gamma= (\omega^\gamma)^\zeta 
 					= [C(\gamma)\zeta]\- \omega\, [C(\gamma)\zeta] + [C(\gamma)\zeta]\-d [C(\gamma)\zeta],\\[1mm]
 \label{H-K_GTtensorial}
   \alpha^{\gamma\zeta}&%=(\alpha^\zeta)^\gamma= (\alpha^\gamma)^\zeta 
   					= \rho\big[ C(\gamma)\zeta \big]\- \alpha.
 \end{align}
 \end{prop}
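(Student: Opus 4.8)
The plan is to compute the gauge transformation by exploiting the product structure $\Aut_v(\P)=\Aut_v(\P,H)\times\Aut_v(\P,K)$, with $\Psi=\Phi\circ\Xi=\Xi\circ\Phi$, so that $\omega^{\gamma\zeta}\defeq\Psi^*\omega=\Xi^*\Phi^*\omega$. The strategy is to first apply $\Phi^*$ (the $H$-part), then $\Xi^*$ (the $K$-part), and to recognise at each stage the relevant equivariance laws already established. Concretely, I would first observe that on the $H$-subbundle $\omega$ satisfies \eqref{1st axiom}-\eqref{2nd axiom}, so that Proposition \ref{ConnectionGT} applies verbatim and gives
\begin{align*}
\Phi^*\omega = C(\gamma)\-\omega\, C(\gamma) + C(\gamma)\-dC(\gamma).
\end{align*}
This reduces the problem to computing the $K$-gauge transformation of this expression.

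Next I would apply $\Xi^*$ to the result. Since on the $K$-subbundle $\omega$ is a standard Ehresmann $K$-connection, the standard gauge transformation formula gives $\Xi^*\omega = \zeta\-\omega\,\zeta + \zeta\-d\zeta$. I would then use the already-derived $K$-equivariance of the cocycle map, namely \eqref{K-GT-Cgamma}, $\Xi^*C(\gamma)=\zeta\-C(\gamma)\zeta$, together with $\Xi^*\zeta=\zeta$ from \eqref{eq-GTmaps}, to pull $\Xi^*$ through each factor of $\Phi^*\omega$. Writing $\Xi^*$ of a product as the product of $\Xi^*$'s and $\Xi^*d=d\Xi^*$, a direct computation should collapse the result into $[C(\gamma)\zeta]\-\omega\,[C(\gamma)\zeta]+[C(\gamma)\zeta]\-d[C(\gamma)\zeta]$, which is precisely \eqref{H-K_GTconnection}. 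The tensorial case \eqref{H-K_GTtensorial} is handled identically but more simply: Proposition \ref{TensorialGT} gives $\Phi^*\alpha=\rho[C(\gamma)\-]\alpha$, and applying $\Xi^*$ using \eqref{K-GT-Cgamma} and the standard $\rho(\zeta\-)$ factor yields $\rho[(C(\gamma)\zeta)\-]\alpha$.

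The main obstacle I anticipate is the bookkeeping of the inhomogeneous (connection-like) term under the second pullback. Specifically, when applying $\Xi^*$ to the term $C(\gamma)\-dC(\gamma)$, I must be careful that $\Xi^*$ commutes with $d$ and distributes over the matrix product, and that the cross-terms produced by the Leibniz rule acting on $d[C(\gamma)\zeta]=dC(\gamma)\cdot\zeta+C(\gamma)\,d\zeta$ recombine correctly with the $\zeta\-d\zeta$ piece coming from the $K$-connection transformation. The key technical point is that \eqref{K-GT-Cgamma} is exactly the statement that $C(\gamma)$ transforms under $\Xi$ by conjugation with no inhomogeneous term, so the only inhomogeneous contribution comes from $\zeta$ itself; this ensures the telescoping into a single $[C(\gamma)\zeta]\-d[C(\gamma)\zeta]$. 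A useful consistency check, which I would invoke rather than re-derive, is that $C(\gamma)\zeta$ is precisely the composite twisting factor appearing in the mixed equivariance \eqref{Mixed 2nd cond} with $(h,k)$ replaced by $(\gamma,\zeta)$, so the resulting formula is the expected active analogue of the passive/equivariance law already in hand.
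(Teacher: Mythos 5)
Your proof is correct, but it does not follow the paper's route. The paper proves Proposition \ref{mixedGTconnection} by a single direct computation: it decomposes the pushforward $\Psi_*X_p = R_{\gamma(p)\zeta(p)*}X_p + [\gamma\-d\gamma]_{|p}(X_p)\big|^v_{\Psi(p)} + [\zeta\-d\zeta]_{|p}(X_p)\big|^v_{\Psi(p)}$ and evaluates $\omega$ on each piece using only the mixed axioms \eqref{Mixed 1st cond}--\eqref{Mixed 2nd cond}, collecting terms into the final formula; the tensorial case is handled the same way, with the vertical pieces killed by horizontality. Your factorised strategy $\Psi^* = \Xi^*\circ\Phi^*$, using Proposition \ref{ConnectionGT} for the $H$-stage, the standard Ehresmann formula for the $K$-stage, and the conjugation law \eqref{K-GT-Cgamma} to pull $\Xi^*$ through $C(\gamma)$, is precisely the computation the paper performs \emph{after} its proof --- there as a verification that the $\H$ and $\K$ actions commute, i.e. $(\omega^\gamma)^\zeta = (\omega^\zeta)^\gamma$ --- and your telescoping of the inhomogeneous terms does work out, as one checks using $d\zeta\- = -\zeta\- d\zeta\, \zeta\-$. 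The one point you should make explicit rather than assert is that Propositions \ref{ConnectionGT} and \ref{TensorialGT} were proven in the unmixed setting for $\P(\M,H)$, so invoking them ``verbatim'' requires the observation that \eqref{Mixed 2nd cond} at $k=e$ reduces to \eqref{2nd axiom} on all of $T\P$ and \eqref{Mixed 1st cond} at $Y=0$ to \eqref{1st axiom} (and dually for the $K$-sector); the paper states exactly this just after defining mixed connections, so the gap is minor. The trade-off: the paper's approach is self-contained and derives the mixed formula from the axioms in one pass, while yours is more economical, reuses prior results, and delivers the commutativity statement as a by-product instead of a separate check.
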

 \begin{proof} %%%%%%%%%%%%%%%%%%
 First, notice that the push forward of a vector $X_p \in T_p\P$ by a vertical automorphism $\Psi \in \Aut_v(\P, H \times K)$ is, 
 \begin{align*}
 \Psi_*X_p&= R_{(\gamma\zeta)(p)*} X_p + [(\gamma\zeta)(p)]\-d(\gamma\zeta)_p(X_p)\big|^v_{\Psi(p)},\\
 		&=R_{\gamma(p)\zeta(p)*} X_p + \gamma(p)\-d\gamma_{|p}(X_p)\big|^v_{\Psi(p)} + \zeta(p)\-d\zeta_{|p}(X_p)\big|^v_{\Psi(p)}
 \end{align*}
 Therefore, the full gauge transformation of the mixed connection is by definition, 
 \begin{align*}
 \omega^{\gamma\zeta}_p(X_p) &\defeq \left(\Psi^*\omega\right)_p (X_p)=\omega_{\Psi(p)}\left( \Psi_* X_p\right)
					  = R_{\gamma(p)\zeta(p)}^*\omega_{\Psi(p)}(X_p) 
								+ \tfrac{d}{d\tau}C_{\Psi(p)}\left( e^{\gamma(p)\-d\gamma_p(X_p)}  \right)\big|_{\tau=0} +  \zeta(p)\-d\zeta_{|p}(X_p), \\
					&=\left( [C_p(\gamma(p))\zeta(p)]\- \omega_p [C_p(\gamma(p))\zeta(p)] + [C_p(\gamma(p))\zeta(p)]\-d[C(\gamma(p))\zeta(p)]_{|p}\right)(X_p)	\\	
 					& \hspace{2cm} +  \zeta(p)\- \tfrac{d}{d\tau}C_{p\gamma(p)}\left( e^{\gamma(p)\-d\gamma_p(X_p)}  \right)\big|_{\tau=0}\,  \zeta(p) 
					  +  \zeta(p)\-d\zeta_{|p}(X_p),\\
					&= [C_p(\gamma(p))\zeta(p)]\- \omega_p(X_p) [C_p(\gamma(p))\zeta(p)] + \zeta(p)\-\, C_p(\gamma(p))\-d C(\gamma(p))_{|p} (X_p)\, \zeta(p)	\\	
 					& \hspace{2cm} +  \zeta(p)\- \left(C_p(\gamma(p))\- dC_p(\gamma)_{|p} \right)(X_p)\,  \zeta(p) 
					  +  \zeta(p)\-d\zeta_{|p}(X_p),\\
					&=  [C_p(\gamma(p))\zeta(p)]\- \omega_p(X_p) [C_p(\gamma(p))\zeta(p)] + \zeta(p)\-\, C_p(\gamma(p))\-d C(\gamma)_{|p} (X_p)\, \zeta(p)	\\	
 					& \hspace{2cm}	 +  \zeta(p)\-d\zeta_{|p}(X_p),\\
					&=  [C_p(\gamma(p))\zeta(p)]\- \omega_p(X_p) [C_p(\gamma(p))\zeta(p)] + \zeta(p)\-\, C_p(\gamma(p))\-d [C(\gamma)\zeta]_{|p} (X_p),\\
				%	&= \left( [C_p(\gamma(p))\zeta(p)]\- \omega_p [C_p(\gamma(p))\zeta(p)] +  [C_p(\gamma(p))\zeta(p)]\-d [C(\gamma)\zeta]_{|p} \right)(X_p),\\
 					&= \left( [C(\gamma)\zeta]\- \omega\, [C(\gamma)\zeta] +  [C(\gamma)\zeta]\-d [C(\gamma)\zeta] \right)_{|p}(X_p).
\end{align*} 
 In the same way, for a mixed tensorial form, 
 \begin{align*}
 \alpha^{\gamma\zeta}_p(X_p, \ldots) &\defeq \left(\Psi^*\alpha\right)_p (X_p, \ldots)=\alpha_{\Psi(p)}\left( \Psi_* X_p, \ldots \right)
 					  = \alpha_{\Psi(p)}\left( R_{\gamma(p)\zeta(p)*} X_p, \ldots \right)
					  = R_{\gamma(p)\zeta(p)}^*\alpha_{\Psi(p)}(X_p, \ldots) , \\
					&= \rho\big[ C_p(\gamma(p)) \zeta(p) \big]\- \alpha_p (X_p, \ldots)= \left( \rho\big[ C(\gamma)\zeta \big]\-\alpha \right)_{|p} (X_p, \ldots).
\end{align*}
\end{proof} %%%%%%%%%%%%%%%%%

As special cases of \eqref{H-K_GTtensorial}, we obtain the gauge transformations of the curvature, of mixed sections and their covariant derivative, 
 \begin{align*}
 \Omega^{\gamma\zeta}=  [C(\gamma)\zeta]\- \Omega\, [C(\gamma)\zeta], \qquad \vphi^{\gamma\zeta} =\rho\big[ C(\gamma)\zeta \big]\-\vphi, \quad \text{and} \quad 
 (D\vphi)^{\gamma\zeta} =\rho\big[ C(\gamma)\zeta \big]\-D\vphi.
 \end{align*}

 It is clear from \eqref{H-K_GTconnection} that $\omega$ transforms in particular as a twisted connection under $\H$ and a standard connection under $\K$. So, on the one hand, by  \eqref{eq-GTmaps},
 \begin{align*}
 (\omega^\zeta)^\gamma&= \Phi^*\left( \Xi^*\omega \right) = \Phi^*\left( \zeta\-\omega\, \zeta+\zeta\-d\zeta \right)
 										      = \zeta\- \Phi^*\omega\, \zeta + \zeta\-d\zeta,\\
				       &= \zeta\- \left(  C(\gamma)\- \omega\, C(\gamma) + C(\gamma)\-dC(\gamma)  \right) \zeta +  \zeta\-d\zeta,\\
				       &= \zeta\-C(\gamma)\- \omega\, C(\gamma)\zeta + \zeta\-C(\gamma)\-d \left( C(\gamma)\zeta \right).
 \end{align*}
 On the other hand, by   \eqref{K-GT-Cgamma} ,
 \begin{align*}
 (\omega^\gamma)^\zeta&=\Xi^*\left( \Phi^*\omega \right) = \Xi^* \left( C(\gamma)\- \omega\, C(\gamma) + C(\gamma)\-dC(\gamma) \right), \\
				   &= \zeta\-C(\gamma)\-\zeta \left( \zeta\-\omega\, \zeta + \zeta\-d\zeta \right) \zeta\-C(\gamma)\zeta + \zeta\-C(\gamma)\-\zeta d\left( \zeta\-C(\gamma)\zeta \right), \\
				   &= \zeta\-C(\gamma)\- \omega\, C(\gamma)\zeta + \zeta\-C(\gamma)\-d \left( C(\gamma)\zeta \right).
\end{align*}
 In the same way, from \eqref{H-K_GTtensorial} it is clear that $\alpha$ transforms in particular as a twisted tensorial form under $\H$ and as a standard tensorial form under $\K$. So that,   
 \begin{align}
 \label{active-mixedGT-tensorial}
 \begin{split}
 (\alpha^\zeta)^\gamma=\Phi^*\left( \Xi^*\alpha\right)&=\Phi^*\left( \rho\left( \zeta\- \right)\alpha\right) = \rho\left( \zeta\- \right) \rho\left[ C(\gamma)\- \right]\alpha=\rho\left[C(\gamma)\zeta \right]\-\!\alpha,\\
 (\alpha^\gamma)^\zeta=\Xi^*\left( \Phi^*\alpha\right)&=\Xi^*\left( \rho\left[ C(\gamma)\- \right]\alpha \right) = \rho\left[ \zeta\- C(\gamma)\-\zeta \right] \rho\left( \zeta\- \right)\alpha= \rho\left[  C(\gamma)\zeta \right]\-\!\alpha.
 \end{split}
 \end{align}
The results \eqref{H-K_GTconnection}-\eqref{H-K_GTtensorial}  indeed express the commutativity of the action of $\H$ and $\K$. 
 \medskip

We can use a notational game that allows to perform symbolically the computation of gauge transformations we have seen so far, and may be used as a memory trick. %It is sometimes used in the physics literature. 
 If $\omega^\zeta$ denotes the result of a gauge transformation by $\K$, a further such transformation is noted $(\omega^\zeta)^\xi=(\omega^\xi)^{\zeta^\xi}=(\omega^\xi)^{\xi\- \zeta \xi}=\omega^{\zeta\xi}$. Now, denote $\omega^\gamma=\omega^{C(\gamma)}$ the result of a gauge transformation by $\H$. A further such transformation if then $(\omega^\gamma)^\eta=(\omega^{C(\gamma)})^\eta=(\omega^\eta)^{C(\gamma)^\eta}=(\omega^{C(\eta)})^{C(\eta)\-C(\gamma\eta)}= \omega^{C(\gamma\eta)}$. Then, $(\omega^\gamma)^\zeta=(\omega^{C(\gamma)})^\zeta=(\omega^\zeta)^{C(\gamma)^\zeta}=(\omega^\zeta)^{\zeta\- C(\gamma)\zeta}=\omega^{C(\gamma)\zeta}$. Also, $(\omega^\zeta)^\gamma=(\omega^\gamma)^{\zeta^\gamma}=(\omega^{C(\gamma)})^\zeta=\omega^{C(\gamma)\zeta}$. So we could note $\omega^{\gamma\zeta}$ as $\omega^{C(\gamma)\zeta}$ instead. This would have the advantage of making the mixed structure clear. Idem for  twisted or mixed tensorial forms.
 
 \subsection{Local version} %%%%%%%%%%%%%%%%%%%%%%%%%%%%%%%%%%%%%%%%%%%%%
\label{Local version} %%%%%%%%%%%%%%%%%%%%%%%%%%%%%%%%%%%%%%%%%%%%%%%%

To be complete, and at the risk of some redundancy, we provide in the next two subsections the local description of the above mixed global geometry. 

\subsubsection{Passive mixed gauge transformations}  %%%%%%%%%%%%%%%%%%%%%%%%%%%%%%%%%%%%%%%%%%%%%%%%%%
\label{Passive mixed gauge transformations}  %%%%%%%%%%%%%%%%%%%%%%%%%%%%%%%%%%%%%%%%%%%%%%%%%%

Consider again $\U, \U' \subset \M$ such that $\U \cap \U' \neq \emptyset$, with the  local sections $\s:\U\rarrow \P_{|\U}$ and $\s':\U'\rarrow \P_{|\U'}$ related by, 
%$\s'=R_g\circ R_\ell \circ \s = R_\ell\circ R_g \circ \s $ :
\begin{align*}
\s'&=\s g \ell, \hspace{7mm} \text{ where }\hspace{-30mm}& g:\U \cap \U' &\rarrow \, H,&\hspace{-30mm} \text{ and } \quad \ell : \U \cap \U' &\rarrow\, K. \\
    &=\s \ell g				 &	                    x \quad   &\mapsto \, g(x) &                                                                                    x \quad   & \mapsto \, \ell(x) 
\end{align*}
Then, for $X_x \in T_x\M$, $x \in \U \cap \U'$, the pushforwards by $\s'$ and $\s$ are related by
\begin{align*}
\s'_* X_x &= R_{g(x)\ell(x)*}\left( \s_* X_x \right)  + [ (g\ell)\-d(g\ell)]_{|x}(X_x) \big|^v_{\s'(x)}, \\
               &= R_{g(x)\ell(x)*}\left( \s_* X_x \right) + [ g\-dg]_{|x}(X_x) \big|^v_{\s'(x)} + [ \ell\-d\ell]_{|x}(X_x) \big|^v_{\s'(x)}. 
\end{align*}
From this we find the following gluing properties.

\begin{prop}%%%%%%%%%%%%%%%%%%%%%%%%%%%%%%%%%%%%%%%%%
\label{passive-mixed-GTs}
The gluing properties of the local representatives on $\U$ and $\U'$ of a mixed connection and a mixed tensorial forms are, 
\begin{align}
\label{passiveGT-mixed-connection}
A'&=[C_\s(g)\ell]\- A\, [C_\s(g)\ell] + [C_\s(g)\ell]\-d[C_\s(g)\ell], \\[1.5mm]
\label{passiveGT-mixed-tensorial}
a'&=\rho\left[C_\s(g)\ell\right]\-\! a.
\end{align}
\end{prop}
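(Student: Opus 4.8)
The plan is to follow exactly the template established in the proof of Proposition \ref{passiveGTs}, simply carrying the extra $K$-factor through the computation. The key inputs are the pushforward formula for $\s'_* X_x$ with the mixed transition function $g\ell$ (already stated just above the proposition), the two defining axioms \eqref{Mixed 1st cond}--\eqref{Mixed 2nd cond} of a mixed connection, and the pullback definitions $C_\s(g)\defeq\s^*C(g)$, $A\defeq\s^*\omega$, $a\defeq\s^*\alpha$. First I would write $A'_x(X_x)=\omega_{\s'(x)}(\s'_* X_x)$ and split $\s'_*X_x$ into the three pieces appearing in its expansion: the $R_{g(x)\ell(x)*}(\s_*X_x)$ term, the $H$-vertical term $[g\-dg]_{|x}(X_x)\big|^v_{\s'(x)}$, and the $K$-vertical term $[\ell\-d\ell]_{|x}(X_x)\big|^v_{\s'(x)}$.

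The first term is handled purely by \eqref{Mixed 2nd cond}, the mixed equivariance law, yielding the conjugation of $A$ by $[C_\s(g)\ell]$ together with the $[C_\s(g)\ell]\-d[C(g)\ell]_{|\s}$ inhomogeneous piece. The two vertical terms are evaluated with \eqref{Mixed 1st cond}: the $H$-vertical term reproduces, as in the purely twisted case of Proposition \ref{passiveGTs}, the missing $dC_\s(g)$-contribution that completes $dC(g(x))_{|\s}$ into the full double-dependence differential $dC_\s(g)_{|x}$ via the identity $dC_{\s(x)|g(x)}(g(x)\tfrac{d}{d\tau}e^{\tau g\-dg(X_x)}|_{\tau=0})=dC_{\s(x)}(g)_{|x}(X_x)$; the $K$-vertical term, since on the $K$-subbundle $\omega$ is a standard Ehresmann connection with $\omega_p(Y^v_p)=Y$, simply returns the Maurer--Cartan piece $\ell\-d\ell$. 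Collecting and recognising $\ell\-d\ell$ together with the conjugated $C_\s(g)$-pieces as precisely $[C_\s(g)\ell]\-d[C_\s(g)\ell]$ gives \eqref{passiveGT-mixed-connection}. For the tensorial form \eqref{passiveGT-mixed-tensorial} the argument is even shorter: both vertical terms are annihilated by horizontality, leaving only $R^*_{g(x)\ell(x)}\alpha_{\s'(x)}$, which by the mixed equivariance of $\alpha$ equals $\rho[C_\s(g)\ell]\-a$.

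The main obstacle, such as it is, is purely bookkeeping: correctly assembling the inhomogeneous term so that the three separate contributions — the $d[C(g)\ell]$ piece from equivariance of $A$, the completing $dC_\s(g)$ piece from the $H$-vertical direction, and the $\ell\-d\ell$ piece from the $K$-vertical direction — collapse into the single clean expression $[C_\s(g)\ell]\-d[C_\s(g)\ell]$. This requires keeping track of where the $\ell$-conjugations sit relative to the $C_\s(g)$-factors, using the commuting-action relations \eqref{eq-GTmaps} and the $K$-equivariance \eqref{KeqC} of $C$ implicitly through the structure already encoded in \eqref{Mixed 2nd cond}. Since all the hard structural work was done in establishing the mixed axioms and in Proposition \ref{passiveGTs}, no genuinely new difficulty arises; the proof is a direct, if slightly longer, replay of the twisted case.
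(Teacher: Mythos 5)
Your proposal is correct and follows essentially the same route as the paper's own proof: decompose $\s'_*X_x$ into the $R_{g\ell*}$ part plus the $H$- and $K$-vertical parts, apply \eqref{Mixed 2nd cond} to the first, \eqref{Mixed 1st cond} to the two vertical pieces (with the $H$-vertical term completing $dC(g(x))_{|\s(x)}$ into the full $dC_\s(g)_{|x}$ exactly as in Proposition \ref{passiveGTs}, conjugated by $\ell$ via the $K$-equivariance of $C$), and then collect the inhomogeneous pieces into $[C_\s(g)\ell]\-d[C_\s(g)\ell]$; the tensorial case by horizontality plus equivariance is likewise identical.
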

\begin{proof} %%%%%%%%%%%%%%%%%%%%%%%%%%%%%%%%%%%%%%%
The proof is  as for active gauge transformations in Proposition \ref{mixedGTconnection}.
Using  \eqref{Mixed 1st cond}-\eqref{Mixed 2nd cond} as well as the proof of Proposition \ref{passiveGTs}, we find:
\begin{align*}
A'_x(X_x)&= (\s'^{*}\omega)_x (X_x) = \omega_{\s'(x)}\left( \s'_* X_x\right),  \\
               &= \omega_{\s'(x)}\left( R_{g(x)\ell(x)*}\left( \s_* X_x \right) + [ g\-dg]_{|x}(X_x) \big|^v_{\s'(x)} + [ \ell\-d\ell]_{|x}(X_x) \big|^v_{\s'(x)} \right),  \\
               &= R^*_{g(x)\ell(x)} \omega_{\s'(x)}\left(  \s_* X_x   \right) + \tfrac{d}{d\tau} C_{\s'(x)}\left( e^{\tau \, [g\-dg]_{|x}(X_x) } \right) \big|_{\tau=0}
               							+    [\ell\-d\ell]_{|x}(X_x) ,\\
	       &= \left( [C_{\s(x)}\left( g(x) \right) \ell(x)]\- \omega_{\s(x)}\, [C_{\s(x)}\left( g(x) \right) \ell(x)]  + [C_{\s(x)}\left( g(x) \right) \ell(x)] \- d[C(g(x))\ell(x)]_{|\s(x)}\right) (\s_*X_x)\\
	       &	\hspace{3cm}			+   \ell(x)\-  \left( \tfrac{d}{d\tau} C_{\s(x)g(x)}\left( e^{\tau \, [g\-dg]_{|x}(X_x) } \right) \big|_{\tau=0} \right) \ell(x) +       [\ell\-d\ell]_{|x}(X_x) ,\\
	       &=\left( [C_{\s(x)}\left( g(x) \right) \ell(x)]\- A_x\, [C_{\s(x)}\left( g(x) \right) \ell(x)]  + [C_{\s(x)}\left( g(x) \right) \ell(x)] \- d[C_\s(g(x))\ell(x)]_{|x}\right) (X_x)\\
	       &	\hspace{3cm}			+   \ell(x)\-  C_{\s(x)}(g(x))\- dC_{\s(x)}(g)_{|x}(X_x)\, \ell(x) +       [\ell\-d\ell]_{|x}(X_x) ,\\
               &=\left( [C_{\s(x)}\left( g(x) \right) \ell(x)]\- A_x\, [C_{\s(x)}\left( g(x) \right) \ell(x)]  + [C_{\s(x)}\left( g(x) \right) \ell(x)] \- d[C_\s(g)\ell(x)]_{|x}\right) (X_x)
	      	+       [\ell\-d\ell]_{|x}(X_x) ,\\
               &= \left( [C_{\s(x)}\left( g(x) \right) \ell(x)]\- A_x\, [C_{\s(x)}\left( g(x) \right) \ell(x)]  + [C_{\s(x)}\left( g(x) \right) \ell(x)] \- d[C_{\s}\left( g \right) \ell]_{|x}  \right)(X_x).
\end{align*}

For the local representatives of a tensorial form $\alpha \in \Omega^\bullet_{\text{tens}}\big(\P, C(H)\!\rtimes\!K\big)$ we get:
\begin{align*}
a'_x(X_x, \ldots) &= \left( \s'^{*}\alpha \right)_x(X_x, \ldots)= \alpha_{\s'(x)}\left(  \s'_* X_x, \ldots \right)
                = \alpha_{\s'(x)}\left(  R_{g(x)\ell(x)*}\left( \s_* X_x \right), \ldots  \right)%, \\
                =R^*_{g(x)\ell(x)} \alpha_{\s(x)} \left( \s_* X_x, \ldots \right)  ,\\
                &= \rho\left[C_{\s(x)}(g(x))\ell(x)\right]\-\! \alpha_{\s(x)} \left(  \s_* X_x, \ldots \right) =  \rho\left[C_{\s(x)}(g(x))\ell(x) \right]\-\!  a_x \left(  X_x, \ldots \right). 
\end{align*}
%Or, restarting from the third equality,
%\begin{align*}
%a'_x(X_x, \ldots) &= \alpha_{\s'(x)}\left(  R_{g(x)\ell(x)*}\left( \s_* X_x \right), \ldots  \right), \\
%                &=R^*_{g(x)} \alpha_{\s'(x)} \left( R_{\ell(x)*}\left( \s_* X_x \right), \ldots \right)  ,\\
%                &= \rho\left[ C_{s(x)\ell(x)} \left(g(x) \right) \right] \alpha_{s(x)\ell(x)} \left(  R_{\ell(x)*}\left( \s_* X_x\right), \ldots \right) 
%                   = \rho\left[ \ell(x)\- C_{s(x)} \left(g(x) \right) \ell(x) \right]    R^*_{\ell(x)}\alpha_{s(x)\ell(x)}\left( \s_* X_x, \ldots \right) ,\\
%                &= \rho\left[ \ell(x)\- C_{s(x)} \left(g(x) \right) \ell(x) \right] \rho\left[ \ell(x)\- \right]    \alpha_{\s(x)}\left( \s_* X_x \right)= \rho\left[ \ell(x)\-  C_{s(x)} \left( g(x) \right)\- \right] a_x (X_x, \ldots).
%\end{align*}                
\end{proof}
\noindent This last result is also valid for the exterior covariant derivative, which is also found by having $(Da)'=D'a' = da' + \rho_*\left( A' \right)a'$. 
 As special cases of \eqref{passiveGT-mixed-tensorial}, we have the gluings of the local representatives of the curvature, sections and their covariant derivative: 
 \begin{align}
 \label{passiveGT-mixed-others}
F'= [C_\s(g)\ell]\- F\, [C_\s(g)\ell], \qquad   \phi'= \rho\left[C_\s(g)\ell\right]\-\! \phi \quad \text{ and } \quad (D\phi)'=D'\phi'= \rho\left[C_\s(g)\ell\right]\-\! D\phi.
 \end{align}
 And as usual the first result is also obtained  from Cartan structure equation and \eqref{passiveGT-mixed-connection}.

\subsubsection{Local active mixed gauge transformations}  %%%%%%%%%%%%%%%%%%%%%%%%%%%%%%%%%%%%%%%%%%%%%%%%%%
\label{Local active mixed gauge transformations}  %%%%%%%%%%%%%%%%%%%%%%%%%%%%%%%%%%%%%%%%%%%%%%%%%%
 
On a single open set $\U \subset \M$ with local section $\s : \U \rarrow \P_{|\U}$,   the local gauge group is $\H_{\text{loc}} \times \K_{\text{loc}}$, with $\H_{\text{loc}}\defeq \left\{ \upgamma : \U \rarrow H\, |\, \upgamma^\upeta=\upeta\- \upgamma \upeta \right\}$ and $\K_{\text{loc}}\defeq \left\{ \upzeta  : \U \rarrow K\, |\, \upzeta^\upxi=\upxi\- \upzeta \upxi \right\}$, but also - as local version of \eqref{eq-GTmaps} - $\upgamma^\upzeta=\upgamma$ and $\upzeta^\upgamma=\upzeta$. 
The  $\H_{\text{loc}}$-transformation of the map $C_\s(\upgamma) : \U \rarrow G$ is given by \eqref{localGTCgamma}. Its $\K_{\text{loc}}$-transformation, the counterpart of \eqref{K-GT-Cgamma}, is
 \begin{align}
 \label{local-K-GT-Cgamma}
 C_\s(\upgamma)^\upzeta= \upzeta\- C_\s(\upgamma) \upzeta.
 \end{align}

The local mixed active gauge transformations of $A$ and $a$ are simply,
  \begin{align}
  A^{\upgamma\upzeta} &= \s^* \omega^{\gamma\zeta}= [C_\s(\upgamma)\upzeta]\- A \, [C_\s(\upgamma)\upzeta] + [C_\s(\upgamma)\upzeta]\- d[C_\s(\upgamma)\upzeta],      															    \label{local-active-mixedGT-connection}. \\[1.5mm]
  a^{\upgamma\upzeta} &= \s^* \alpha^{\gamma\zeta}= \rho[ C_\s(\upgamma)\upzeta]\- a.     \label{local-active-mixedGT-tensorial}
  \end{align}
 We prove as in section \ref{Mixed gauge transformations} that $\H_{\text{loc}}$ and $\K_{\text{loc}}$ commute.
  Using  \eqref{local-K-GT-Cgamma}, one easily shows that $\big(A^\upgamma \big)^\upzeta= \big(A^\upzeta \big)^\upgamma$ and
  $\big(a^\upgamma \big)^\upzeta= \big(a^\upzeta \big)^\upgamma$.
  Here also we could write \eqref{local-active-mixedGT-connection} as $A^{C(\upgamma)\upzeta}$ and \eqref{local-active-mixedGT-tensorial} as $a^{C(\upgamma)\upzeta}$.

 Again, \eqref{local-active-mixedGT-tensorial} hold for $(Da)$ and is also found via $(Da)^{\upgamma\upzeta}=D^{\upgamma\upzeta} a^{\upgamma\upzeta}$. The mixed local gauge transformations for $F$, $\phi$ and $D\phi$ are found as special cases, 
  \begin{align}
 \label{activeGT-mixed-others}
F^{\upgamma\upzeta}= [C_\s(\upgamma)\upzeta]\- F\, [C_\s(\upgamma)\upzeta] , \qquad   \phi^{\upgamma\upzeta}= \rho\left[C_\s(\upgamma)\upzeta\right]\- \phi \quad \text{ and } \quad (D\phi)^{\upgamma\upzeta}= \rho\left[C_\s(\upgamma)\upzeta \right]\- D\phi.
 \end{align}
 As always, the transformation of $F$ is also obtained  from Cartan's structure equation and \eqref{local-active-mixedGT-connection}.
 
 Finally, we notice the exact formal resemblance between \eqref{passiveGT-mixed-connection}-\eqref{passiveGT-mixed-tensorial} and \eqref{local-active-mixedGT-connection}-\eqref{local-active-mixedGT-tensorial}, and refer back to our comments at the end of section \ref{Local active gauge transformations}.

% \clearpage
 
\section{Action of the Lie algebra of vertical automorphisms} %%%%%%%%%%%%%%%%%%%%%%%%%%%%%%%%%%%%%%%%%
\label{Action of the Lie algebra of vertical automorphisms} %%%%%%%%%%%%%%%%%%%%%%%%%%%%%%%%%%%%%%%%%%%

Let us  denote the Lie algebras of $\Aut_v(\P, H)$ and $\H$ respectively by $\aut_v(\P, H)$ and Lie$\H$. Given $\Phi_\tau \in \Aut_v(\P,H)$, we have $\Phi_\tau(p)=p\gamma_\tau(p)=pe^{\tau\, \chi(p)}$ with $\chi : \P \rarrow \text{Lie}H$. Since $R^*_h\gamma_\tau=h\- \gamma_\tau h$, it comes that $R^*_h\,\chi =\Ad_{h\-}\chi$. We have then Lie$\H \defeq \left\{\, \chi : \P \rarrow \text{Lie}H\, | \, R^*_h\,\chi= \Ad_{h\-}\chi \right\}$. 
In other terms Lie$\H=C^\infty_\Ad(\P, \text{Lie}H) \simeq \Gamma\left( \P\times_\Ad \text{Lie}H \right)$.

Now, $\Phi_\tau$ is the flow of the vector field 
\begin{align}
\label{iso-aut_v-LieH}
\chi^v_p\defeq \tfrac{d}{d\tau} \Phi_\tau(p)\big|_{\tau=0}=\tfrac{d}{d\tau} pe^{\tau\, \chi(p)} \big|_{\tau=0}. 
\end{align}
By the way,
$\chi^v_{ph}\defeq \tfrac{d}{d\tau} phe^{\tau\, \chi(ph)}=\tfrac{d}{d\tau} phe^{\tau\, h\-\chi(p)h}=\tfrac{d}{d\tau} ph\, h\-e^{\tau\, \chi(p)}h=\tfrac{d}{d\tau} R_h\left(pe^{\tau\, \chi(p)}\right) \rdefeq R_{h*}\,\chi^v_p$.
That is $\chi^v$ is a right-invariant vector field, $\chi^v \in \Gamma^H(T\P)$. Conversely it is easily shown that the flow $\phi_\tau:\P \rarrow \P$ of a right-invariant vertical vector field is such that $\phi_\tau (ph)=\phi_\tau(p)h$ and $\pi \circ \phi_\tau = \pi$, i.e. it is a vertical automorphism. So 
$\aut_v(\P, H) = \Gamma^H(T\P) \cap \Gamma(V\P)$. The  definition \eqref{iso-aut_v-LieH} describe explicitly the isomorphism $\aut_v(\P, H) \simeq \text{Lie}\H$.

\subsection{Infinitesimal gauge transformations} %%%%%%%%%%%%%%%%%%%%%%%%%%%%%%%%%%%%%%%%%%%%%
\label{Infinitesimal gauge transformations} %%%%%%%%%%%%%%%%%%%%%%%%%%%%%%%%%%%%%%%%%%%%%%%%

To begin with, we want the infinitesimal counterpart of propositions \eqref{ConnectionGT} and \eqref{TensorialGT}. For this, let us first give the following result,
\begin{align}
\label{result1}
\tfrac{d}{d\tau} C_p\left( e^{\tau \, \chi(p)}\right)\big|_{\tau=0} = dC_{p|e}\left(  \tfrac{d}{d\tau}  e^{\tau \, \chi(p)} \big|_{\tau=0} \right) = dC_{p|e}(\chi(p)).
\end{align}
Remember that in $dC_{p|e}$, $d$ is not de Rham derivative on $\P$ but  signifies the pushforward of the map $C_p:H \rarrow G$.
%{\color{red} Here should be the H-equiv of this map (perhaps also its Lie derivative w.r.t vert vect field) that is of use for reference in the BRST section.} 
Given this, it is easy to prove the following,

\begin{prop} %%%%%%%%%%%%%%%%%%%%%%%%%%%%%%%%%%
\label{infinitesimal-activeGT}
The infinitesimal active gauge transformations of the connection and of $C$-tensorial forms are,
\begin{align}
\label{inf-activeGT}
\begin{split}
L_{\chi^v} \omega &= d\left( dC_{|e}(\chi)) \right) + \left[ \omega, dC_{|e}(\chi) \right] , \\[1.5mm]
L_{\chi^v} \alpha   &= -\rho_* \left( dC_{|e}(\chi) \right) \alpha.
\end{split}
\end{align}
\end{prop}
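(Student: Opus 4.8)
The plan is to obtain both identities as the derivative at $\tau=0$ of the \emph{finite} active gauge transformations already established in Propositions \ref{ConnectionGT} and \ref{TensorialGT}. By \eqref{iso-aut_v-LieH}, the vector field $\chi^v$ is the infinitesimal generator of the one-parameter subgroup $\Phi_\tau \in \Aut_v(\P, H)$ associated with $\gamma_\tau \defeq e^{\tau\chi} \in \H$, with $\Phi_{\tau=0}=\id$. Since the Lie derivative along $\chi^v$ is by definition $L_{\chi^v}=\tfrac{d}{d\tau}\Phi_\tau^*\big|_{\tau=0}$, I would simply substitute $\gamma=\gamma_\tau$ into the transformation laws \eqref{H-GTconnection} and \eqref{GTtensorial} and differentiate.

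First I would record the two elementary facts driving the computation: at $\tau=0$ one has $C(\gamma_0)=C(e)=e'$, so $C(\gamma_0)$ is the constant map to the identity of $G$ and hence its de Rham differential $dC(\gamma_0)=0$; and $\tfrac{d}{d\tau}C(\gamma_\tau)\big|_{\tau=0}=dC_{|e}(\chi)$ by \eqref{result1}. Writing $\lambda \defeq dC_{|e}(\chi)$, differentiation of $C(\gamma_\tau)\-$ at $\tau=0$ gives $-\lambda$. For the connection, applying the product rule to $\Phi_\tau^*\omega=C(\gamma_\tau)\-\omega\,C(\gamma_\tau)+C(\gamma_\tau)\-dC(\gamma_\tau)$, the conjugation term yields $\omega\lambda-\lambda\omega=[\omega,\lambda]$, while in the inhomogeneous term the only surviving contribution is $e'\cdot d\lambda$ (the other piece being proportional to $dC(\gamma_0)=0$). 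Summing gives $L_{\chi^v}\omega=d\lambda+[\omega,\lambda]=d\big(dC_{|e}(\chi)\big)+[\omega,dC_{|e}(\chi)]$, as claimed.

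For the tensorial form the computation is even shorter: differentiating $\Phi_\tau^*\alpha=\rho\big[C(\gamma_\tau)\-\big]\alpha$ and using $\tfrac{d}{d\tau}\rho\big[C(\gamma_\tau)\-\big]\big|_{\tau=0}=\rho_*\big(\tfrac{d}{d\tau}C(\gamma_\tau)\-\big|_{\tau=0}\big)=-\rho_*(\lambda)$ immediately gives $L_{\chi^v}\alpha=-\rho_*\big(dC_{|e}(\chi)\big)\alpha$. The only point requiring care—the sole potential obstacle—is the bookkeeping of signs together with the vanishing of $dC(\gamma_0)$ and the fact that $C(\gamma_0)=e'$ collapses the conjugation to a plain commutator; once those are in place the result is routine.
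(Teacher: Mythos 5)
Your proposal is correct and coincides with the paper's own proof: the paper likewise obtains \eqref{inf-activeGT} by setting $\gamma_\tau=e^{\tau\chi}$, differentiating the finite transformation laws \eqref{H-GTconnection} and \eqref{GTtensorial} at $\tau=0$, and invoking \eqref{result1} together with the vanishing of $dC(e)$. Your explicit bookkeeping of why $dC(\gamma_0)=0$ and why the conjugation collapses to a commutator only spells out steps the paper leaves implicit.
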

\begin{proof} %%%%%%%%%%%%%%%%%%%%%%%%%%%%%%%%%%
The Lie derivative of $\omega$ w.r.t $\chi^v$ is,  
\begin{align*} 
L_{\chi^v}\omega \defeq& \tfrac{d}{d\tau}\,  \Phi_\tau^*\omega \, \big|_{\tau=0}=  \tfrac{d}{d\tau} \, C\left( e^{\tau\, \chi }\right)\- \omega\, C\left( e^{\tau\, \chi }\right) + C\left( e^{\tau\, \chi }\right)\-dC\left( e^{\tau\, \chi }\right)  \big|_{\tau=0} ,\\
			   =& -\tfrac{d}{d\tau} C\left( e^{\tau\, \chi }\right) \big|_{\tau=0} \, \omega + \omega \,\tfrac{d}{d\tau} C\left( e^{\tau\, \chi }\right) \big|_{\tau=0} + d\, \tfrac{d}{d\tau} C\left( e^{\tau\, \chi }\right) \big|_{\tau=0}.
\end{align*} 
Using \eqref{result1} the result follows. As for $\alpha \in \Omega^\bullet_{\text{tens}}\big(\P, C(H)\big)$,
\begin{align*} 
L_{\chi^v}\alpha \defeq& \tfrac{d}{d\tau}\,  \Phi_\tau^*\alpha \, \big|_{\tau=0}=  \tfrac{d}{d\tau} \, \rho\left[ C\left( e^{\tau\, \chi }\right)\-\right] \alpha\,   \big|_{\tau=0} 
			   = -\rho_*\left[\tfrac{d}{d\tau}  C\left( e^{\tau\, \chi }\right) \big|_{\tau=0}  \right] \alpha.
\end{align*} 
\end{proof}
\noindent From this is immediately deduced that, 
\begin{align*} 
L_{\chi^v}\Omega = \left[ \Omega, dC_{|e}(\chi) \right], \quad L_{\chi^v}\vphi = -\rho_* \left( dC_{|e}(\chi) \right) \vphi, \quad \text{ and } \quad L_{\chi^v}D\vphi = -\rho_* \left( dC_{|e}(\chi) \right) D\vphi.
\end{align*} 
\medskip

 It suffices to pullback \eqref{inf-activeGT} on $\U \subset \M$ to obtain  the infinitesimal active gauge transformations of the local representatives, counterparts of \eqref{localGTconnection} and \eqref{localGTtensorial}. Given $\upchi\defeq \s^*\chi : \U \rarrow \text{Lie}H \in \text{Lie}\H_{\text{loc}}$, this gives
\begin{align}
\label{local-inf-activeGT}
 \begin{split}
\delta_\upchi A &= \s^*\left( L_{\chi^v }\omega \right) = d\left( dC_{\s|e}(\upchi) \right) + \left[ A, dC_{\s|e}(\upchi)\right],  \\[0.5mm]
\delta_\upchi a &= \s^*\left( L_{\chi^v} \alpha \right) = -\rho_*\left( dC_{\s|e}(\upchi)\right)  a. 
 \end{split}
\end{align}
Alternatively, these are obtained in a way analogous to the above proof, by defining $\delta_\upchi A\defeq \tfrac{d}{d\tau}\, A^{\upgamma_\tau} \big|_{\tau=0}$ and $\delta_\upchi a\defeq \tfrac{d}{d\tau}\, a^{\upgamma_\tau} \big|_{\tau=0}$.
\bigskip

The infinitesimal versions of the gluing properties, proposition \ref{passiveGTs}, are not related to the action of $\aut_v(\P, H)\simeq \text{Lie}\H$ but are obtained by a completely analogous computation. On poses $g_\tau(x)=e^{\tau\lambda(x)}$, with $\lambda :\U \rarrow \text{Lie}H$, and defines:
\begin{align}
 \begin{split}
\delta_{\!\lambda} A &\defeq \tfrac{d}{d\tau}\, A'  \big|_{\tau=0} = d\left( dC_{\s|e}(\lambda) \right) + \left[ A, dC_{\s|e}(\lambda)\right],  \\[0.5mm]
\delta_{\!\lambda}  a  &\defeq \tfrac{d}{d\tau}\, a' \big|_{\tau=0} = -\rho_*\left( dC_{\s|e}(\lambda)\right)  a. 
 \end{split}
\end{align}
And as usual, these passive infinitesimal gauge transformations are formally identical to the active ones, Eq.\eqref{local-inf-activeGT}.

\subsection{Mixed case} %%%%%%%%%%%%%%%%%%%%%%%%%%%%%%%%%%%%%%%%%%%%%
\label{} %%%%%%%%%%%%%%%%%%%%%%%%%%%%%%%%%%%%%%%%%%%%%%%%

In the same spirit as in the above considerations, under the assumptions of section \ref{Mixing with the standard situation} we are interested in the infinitesimal counterpart of the mixed gauge transformations of the connection \eqref{H-K_GTconnection} and of $C$-tensorial forms \eqref{H-K_GTtensorial}. Let us denote  Lie$\K\defeq\left\{ \upsilon: \P \rarrow K\, | \, R^*_k\upsilon =\Ad_{k\-}\upsilon  \right\}$.
% = C^\infty_\Ad(\P, \text{Lie}K)\simeq \Gamma\left( \P \times_\Ad \text{Lie}K \right)$. 
 It is isomorphic to $\aut_v(\P, K)$ via the definition $\upsilon^v_p\defeq \tfrac{d}{d\tau} pe^{\tau\, \upsilon(p)} \big|_{\tau=0}$. 
 Elements of Lie$\H$ and Lie$\K$ also satisfy: $R^*_k \,\chi =\chi$ and $R^*_h \upsilon =\upsilon$.
 We have then $\aut_v(P)=\aut_v(\P, H) \oplus \aut_v(\P, K) \simeq \text{Lie}\H \oplus \text{Lie}\K$,
 and $\xi^v \in \aut_v(\P)$ s.t $\xi^v=\chi^v+\upsilon^v$ is generated by  $\xi=\chi+\upsilon$.
 
 In preparation for the next proposition, consider that:
 \begin{align}
 \label{result2}
 \tfrac{d}{d\tau} \, C\left(  e^{\tau \chi }\right)e^{\tau \upsilon} \,\big|_{\tau=0} 
                       = \tfrac{d}{d\tau} \, C\left(  e^{\tau \chi }\right)e\,\big|_{\tau=0} \,e_K + e_G \,\tfrac{d}{d\tau} \, e^{\tau \upsilon} \,\big|_{\tau=0}
                       = dC_{|e}(\chi) + \upsilon.
 \end{align}
 
 \begin{prop} %%%%%%%%%%%%%%%%%%%%%%%%%%%%%%%%%%%%%%%%%%%%%%%%
 The infinitesimal mixed gauge transformations of the connection and $C$-tensorial forms are, 
 \begin{align}
 \label{inf-mixed-activeGT}
 \begin{split}
 L_{\xi^v} \omega & = d\left( dC_{|e}(\chi) + \upsilon \right)+ \left[ \omega, dC_{|e}(\chi) + \upsilon \right]= L_{\chi^v} \omega+ L_{\upsilon^v} \omega, \\[1.5mm]
  L_{\xi^v} \alpha  &= -\rho_*\left(  dC_{|e}(\chi) + \upsilon \right) \alpha=  L_{\chi^v} \alpha+ L_{\upsilon^v} \alpha.
 \end{split}
 \end{align}
 \end{prop}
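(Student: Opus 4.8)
The plan is to follow verbatim the strategy of Proposition \ref{infinitesimal-activeGT}, now applied to the finite mixed transformations \eqref{H-K_GTconnection}--\eqref{H-K_GTtensorial}. First I would take the one-parameter family $\Psi_\tau \in \Aut_v(\P, H\times K)$ associated to $\gamma_\tau = e^{\tau\chi} \in \H$ and $\zeta_\tau = e^{\tau\upsilon} \in \K$, i.e. $\Psi_\tau(p) = pe^{\tau\chi(p)}e^{\tau\upsilon(p)}$, whose velocity at $\tau=0$ is $\chi^v + \upsilon^v = \xi^v$. Then $L_{\xi^v} = \tfrac{d}{d\tau}\Psi_\tau^*\big|_{\tau=0}$, exactly as in the non-mixed case, so that the two claimed identities are simply $\tfrac{d}{d\tau}\omega^{\gamma_\tau\zeta_\tau}\big|_{\tau=0}$ and $\tfrac{d}{d\tau}\alpha^{\gamma_\tau\zeta_\tau}\big|_{\tau=0}$, differentiations of the known finite formulas.

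The whole computation hinges on a single auxiliary object, $g_\tau \defeq C(\gamma_\tau)\zeta_\tau \in G\rtimes K$, for which two facts are decisive: $g_0 = C(e)\cdot e = e'$ is the group identity, and, by \eqref{result2}, $\tfrac{d}{d\tau}g_\tau\big|_{\tau=0} = dC_{|e}(\chi) + \upsilon \in \text{Lie}G\oplus\text{Lie}K$. These are precisely the two inputs that make the Leibniz differentiation collapse, just as $g_0=e'$ and \eqref{result1} did in Proposition \ref{infinitesimal-activeGT}.

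For the connection I would differentiate $\omega^{\gamma_\tau\zeta_\tau} = g_\tau\- \omega\, g_\tau + g_\tau\- dg_\tau$ at $\tau = 0$. The conjugation term yields the commutator $[\omega,\, dC_{|e}(\chi)+\upsilon]$, since $g_0\- = e'$ linearises the adjoint action to the bracket in $\text{Lie}(G\rtimes K)$; in the inhomogeneous term the contribution of $\tfrac{d}{d\tau}g_\tau\-$ drops out because $dg_0 = d(e') = 0$, leaving $d\big(dC_{|e}(\chi)+\upsilon\big)$. For the tensorial form the step is shorter still: differentiating $\alpha^{\gamma_\tau\zeta_\tau} = \rho[g_\tau\-]\alpha$ gives directly $-\rho_*\big(\tfrac{d}{d\tau}g_\tau|_{\tau=0}\big)\alpha = -\rho_*\big(dC_{|e}(\chi)+\upsilon\big)\alpha$. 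This establishes the first equality in each line of \eqref{inf-mixed-activeGT}.

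The final equalities $L_{\xi^v}(\cdot) = L_{\chi^v}(\cdot) + L_{\upsilon^v}(\cdot)$ then follow purely by linearity: the $dC_{|e}(\chi)$ piece reproduces $L_{\chi^v}\omega$ (resp. $L_{\chi^v}\alpha$) by Proposition \ref{infinitesimal-activeGT}, while the $\upsilon$ piece reproduces the standard Ehresmann infinitesimal gauge transformation $L_{\upsilon^v}\omega = d\upsilon + [\omega,\upsilon]$ (resp. $L_{\upsilon^v}\alpha = -\rho_*(\upsilon)\alpha$) valid on the $K$-subbundle, using that $d$, the bracket and $\rho_*$ are all linear. I expect no genuine obstacle; the only point requiring care is the bookkeeping in $\text{Lie}(G\rtimes K)$ --- one must check that the semi-direct product structure does not interfere, which it cannot at first order precisely because $g_0 = e'$ reduces everything to the linear datum $dC_{|e}(\chi)+\upsilon$ and to its commutator with $\omega$.
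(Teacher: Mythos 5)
Your proof is correct and follows essentially the same route as the paper's: differentiate the finite mixed gauge transformations of Proposition \ref{mixedGTconnection} along the flow $\Psi_\tau = (\Phi\circ\Xi)_\tau$ generated by $\xi = \chi + \upsilon$, and collapse the Leibniz terms using \eqref{result2} together with the fact that the group element reduces to the identity at $\tau=0$. The only difference is cosmetic: you spell out the bookkeeping ($g_0 = e'$, $dg_0 = 0$) and the final splitting $L_{\xi^v} = L_{\chi^v} + L_{\upsilon^v}$ by linearity, which the paper leaves implicit.
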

\begin{proof}%%%%%%%%%%%%%%%%%%%
It goes exactly as in the proof of proposition \ref{infinitesimal-activeGT}, using proposition \ref{mixedGTconnection}: 
\begin{align*}
L_{\xi^v} \omega \defeq& \tfrac{d}{d\tau}  \left( \Phi \circ \Xi \right)_\tau^* \omega \,\big|_{\tau=0}
                                    = \tfrac{d}{d\tau}  \left[ C\left(  e^{\tau \chi }\right)e^{\tau \upsilon} \right]\- \omega \left[ C\left(  e^{\tau \chi }\right)e^{\tau \upsilon} \right] + 
                                         \left[ C\left(  e^{\tau \chi }\right)e^{\tau \upsilon} \right]\- d\left[ C\left(  e^{\tau \chi }\right)e^{\tau \upsilon} \right] \big|_{\tau=0}, \\
                           =&  -\tfrac{d}{d\tau} C\left( e^{\tau\, \chi }\right)e^{\tau \upsilon}  \big|_{\tau=0} \, \omega + \omega \,\tfrac{d}{d\tau} C\left( e^{\tau\, \chi }\right)e^{\tau \upsilon}  \big|_{\tau=0} + d\, \tfrac{d}{d\tau} C\left( e^{\tau\, \chi }\right)e^{\tau \upsilon}  \big|_{\tau=0}.            
\end{align*}
The result follows from \eqref{result2}. Idem for $\alpha \in \Omega^\bullet_\text{tens}\big(\P, C(H)\!\rtimes \!K\big)$:
\begin{align*}
L_{\xi^v} \alpha \defeq& \tfrac{d}{d\tau}  \left( \Phi \circ \Xi \right)_\tau^* \alpha \,\big|_{\tau=0}=  \tfrac{d}{d\tau} \, \rho\left[ C\left( e^{\tau\, \chi }\right)e^{\tau \upsilon}\right]\- \alpha \,  \big|_{\tau=0} 
			   = -\rho_*\left[\tfrac{d}{d\tau}  C\left( e^{\tau\, \chi }\right) e^{\tau \upsilon}  \big|_{\tau=0}  \right] \alpha.
\end{align*}
\end{proof}
The local version on $\U \subset \M$ of this, the linear couterpart of \eqref{local-active-mixedGT-connection}-\eqref{local-active-mixedGT-tensorial}, is obtained either by pullback or by $\delta A\defeq \tfrac{d}{d\tau}\, A^{\upgamma_\tau\upzeta_\tau} \big|_{\tau=0}$ and $\delta a\defeq \tfrac{d}{d\tau}\, a^{\upgamma_\tau\upzeta_\tau} \big|_{\tau=0}$. Either way, given  $\upupsilon\! \defeq \s^*\upsilon : \U \rarrow \text{Lie}K \in \text{Lie}\K_{\text{loc}}$, we get
 \begin{align}
 \label{local-inf-mixedGT}
 \begin{split}
 \delta A &= d\left( dC_{\s|e}(\upchi) + \upupsilon \right)+ \left[ A, dC_{\s|e}(\upchi) + \upupsilon \right]= \delta_\upchi A + \delta_\upupsilon A \\[1.5mm]
\delta a  &= -\rho_*\left(  dC_{\s|e}(\upchi) + \upupsilon \right) a =  \delta_\upchi a + \delta_\upupsilon a.
 \end{split}
 \end{align}
 
The linearised gluing properties, proposition \ref{passive-mixed-GTs}, are obtained analogously. One poses $\ell_\tau(x)=e^{\tau \nu(x)}$, with $\nu: \U \rarrow LieK$, and defines: 
\begin{align}
 \begin{split}
\delta A &\defeq \tfrac{d}{d\tau}\, A'  \big|_{\tau=0} = d\left( dC_{\s|e}(\lambda) + \nu \right) + \left[ A, dC_{\s|e}(\lambda) + \nu \right]=\delta_{\!\lambda} A + \delta_\nu A,  \\[0.5mm]
\delta a  &\defeq \tfrac{d}{d\tau}\, a' \big|_{\tau=0} = -\rho_*\left( dC_{\s|e}(\lambda) +\nu\right)  a= \delta_{\!\lambda} a + \delta_\nu a. 
 \end{split}
\end{align}

\subsection{A word on BRST} %%%%%%%%%%%%%%%%%%%%%%%%%%%%%%%%%%%%%%%%%%%%%
\label{A word on  BRST} %%%%%%%%%%%%%%%%%%%%%%%%%%%%%%%%%%%%%%%%%%%%%%%%

The BRST formalism is widely used in physics as an efficient algebraic way to handle infinitesimal gauge transformations. The BRST cohomology is rich, and allows notably  to classify viable Lagrangian and gauge anomalies (stemming from gauge symmetry breaking),  \cite{Dubois-Violette1987, Bertlmann} . The geometric origin of its heuristic rules has been explored by several authors, and the most satisfying view holds that it is grounded in the infinite dimensional analogue of the  Chevalley-Eilenberg construction \cite{Chevalley-Eilenberg} associated to the gauge group. The so-called ghost being essentially the Maurer-Cartan form on $\H$, while the BRST operator is the de Rham differential on the infinite dimensional  group manifold.  See \cite{Bonora-Cotta-Ramusino, DeAzc-Izq}. 
Here we  give the minimal BRST presentation reproducing the above  linearised local active gauge transformations.

One considers that all objects have a new grading in addition to the de Rham form degree, the so-called ghost degree. The forms $A, F, \phi$ are attributed ghost degree $0$, while $\upchi$ and $\upupsilon$ now stands for ghost fields\footnote{Which are generic place holders for specific elements of Lie$\H$ and Lie$\K$, i.e. they are the Maurer-Cartan forms on $\H$ and $\K$ respectively.} and have ghost degree $1$. Let us denote for simplicity $dC_{\s|e}(\upchi)\rdefeq c(\upchi)$. The BRST differential, noted $s$, increases by one unit the ghost number and is such that $sd+ds=0$ and $s^2=0$. So that $(d+s)^2=0$. It acts as
\begin{align*}
sA= - d  c(\upchi)   - \left[ A, c(\upchi)  \right], \qquad   sF &= [F, c(\upchi)] \qquad   s\phi=-\rho_*(c(\upchi))\phi, \\[1.5mm]
\text{and } \quad sc(\upchi)&=-\tfrac{1}{2}[c(\upchi), c(\upchi)].
\end{align*}
The last equality stems from requiring $s^2=0$ on $A$, $F$ or $\phi$. In the mixed case, just replace $c(\upchi)$ above by the total ghost $c(\upchi)+\upupsilon$. It is clear by linearity that  the total BRST operator splits accordingly as $s=s_\upchi+s_\upupsilon$. The last equality above in particular nicely encapsulates the relations  $s_\upupsilon \upupsilon =-\tfrac{1}{2}[\upsilon, \upupsilon]$ and $s_\upchi c(\upchi) =-\tfrac{1}{2} [c(\upchi) , c(\upchi) ]$ - ensuring that we have BRST subalgebras for both sectors - as well as $s_\upchi \upupsilon =0$ and $s_\upupsilon c(\upchi) = -[c(\upchi) , \upupsilon]$ which are the BRST versions of the fourth equality \eqref{eq-GTmaps} and of  \eqref{K-GT-Cgamma}.

\section{Twisted Cartan connection} %%%%%%%%%%%%%%%%%%%%%%%%%%%%%%%%%%%%%%%%%%%%%
\label{Twisted Cartan connection} %%%%%%%%%%%%%%%%%%%%%%%%%%%%%%%%%%%%%%%%%%%%%%%%

Cartan connections are ancestors to Ehresmann connections. Ehresmann gave them their first recognisably modern definition while proposing his own generalisation (see \cite{Ehresmann50} also \cite{Kobayashi1957} and reference therein). We refer to \cite{Sharpe} and \cite{Cap-Slovak09} for modern in depth treatments of the subject. 

Given a principal bundle $\P(\M, H)$, and Lie$G'\!\supset$ Lie$H$, a Cartan connection is a differential $1$-form $\varpi \in \Omega^1(\P, \text{Lie}G')$ enjoying the same two defining properties of an Ehresmann connection, $\varpi_p(X_p^v)=X \in $ Lie$H$  and $R^*_h\varpi_{ph} = \Ad_{h^{-1}} \varpi_p$, but it is also required that  $\varpi_p:T_p\P \rarrow$ Lie$G'$ be a linear isomorphism $\forall p\in \P$. That is, the Cartan connection defines an absolute parallelism on $\P$. This means that dim$\M=$ dim Lie$G'\!/$Lie$H$, and more precisely that given the projection $\tau: \text{Lie}G' \rarrow \text{Lie}G'\!/ \text{Lie}H$, $\tau(\varpi)$ is a soldering form (whose local version  induces a metric structure on $\M$ if a bilinear form $\eta: \text{Lie}G'\!/ \text{Lie}H \times \text{Lie}G'\!/ \text{Lie}H \rarrow \RR$ is given). 
This is as well expressed by the fact that the Cartan connection induces a soldering, i.e. an isomorphism of vector bundles : $T\M \simeq \P\times_{\Ad(H)} \text{Lie}G'\!/\text{Lie}H$. These facts make Cartan connections especially well suited to  gauge theories of gravity. 
 
 The curvature is still given by Cartan's structure equation $\Omega=d\varpi+ \tfrac{1}{2}[\varpi, \varpi]\in \Omega^2_{\text{tens}}(\P, \text{Lie}G')$, and $\Theta\defeq\tau(\Omega)$ is the torsion. Gauge transformations are given by the action of $\Aut_v(\P, H) \simeq \H$, as it would for an Ehresmann connection: $\varpi^\gamma\defeq \Phi^*\varpi=\gamma\- \varpi \gamma + \gamma\-d\gamma$ and  $\Omega^\gamma\defeq \Phi^*\Omega=\gamma\- \Omega \gamma$.
\medskip

Since Cartan connections can be seen as special cases of Ehresmann connections, and since twisted connections are a particular generalisation of Ehresmann connections, we would like to define a special class of twisted connections that would give an acceptable generalisation of Cartan connections. We might call these twisted Cartan connections. There might be more than one clever way to define such a class. To emulate the desirable properties of Cartan connections, especially regarding the gauge treatment of gravity, the following desiderata seem sufficient. 

We consider again a principal bundle $\P(\M, H)$ and a cocycle for the group action of $H$ on $\P$, $C: \P \times H \rarrow G$. We demand that the twisted Cartan connection  be $\varpi \in \Omega^1(\P, \text{Lie}G')$, where the Lie algebra Lie$G' \supset$ Lie$G$  is such that  dim~Lie$G'\!/$Lie$G$ = dim$\M$, and satisfies:
\begin{align}
\varpi_p(X^v_p)&=\tfrac{d}{d\tau} C_p\left( e^{\tau X} \right)\big|_{\tau=0}=dC_{p|e}(X) \in \text{Lie}G, \quad \text{ for } X \in \text{Lie}H,      \tag{\Rmnum{1}} \label{1st Cond} \\[1mm]
R^*_h \varpi_{ph}&= C_p(h)\- \varpi_p\, C_p(h) + C_p(h)\-d C(h)_{|p},             \tag{\Rmnum{2}} \label{2nd Cond} \\[1mm]
\ker \varpi_p &= \emptyset, \quad \forall p\in \P, \text{and} \quad \text{Lie}G'\!/\text{Lie}G \subset \im \varpi_p.          \tag{\Rmnum{3}}  \label{Cartan Cond}
\end{align}
Thus, the first two axioms are those of a twisted connection as originally defined, but the third is a weakened version of the one satisfied by a Cartan connection: We only require  that $\varpi$ be an injection, not a linear isomorphism. Still, it is sufficient to define a soldering. 

\begin{prop} %%%%%%%%%%%%%%%%%%%%%%%%%%%%%%%%%
\label{C-soldering}
A twisted Cartan connection on $\P(\M, H)$ induces a soldering, i.e. a vector bundle isomorphism $T\M \simeq \P \times_{\Ad_{C(H)}} \text{Lie}G'\!/\text{Lie}G$.
\end{prop}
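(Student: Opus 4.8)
The plan is to build the soldering isomorphism pointwise from the projected connection and then check it is compatible with the equivalence relation defining the twisted associated bundle, exactly paralleling the classical Cartan case but carrying the $\Ad_{C(H)}$-twist. Write $\tau:\text{Lie}G' \rarrow \text{Lie}G'/\text{Lie}G$ for the canonical projection and set $\b\varpi_p \defeq \tau\circ\varpi_p : T_p\P \rarrow \text{Lie}G'/\text{Lie}G$. The first task is to promote this to a fibrewise isomorphism over $\M$. By axiom \eqref{1st Cond} one has $\varpi_p(V_p\P) \subseteq \text{Lie}G$, hence $V_p\P \subseteq \ker\b\varpi_p$. The second clause of \eqref{Cartan Cond}, read as $\tau(\im\varpi_p)=\text{Lie}G'/\text{Lie}G$ (equivalently $\im\varpi_p + \text{Lie}G = \text{Lie}G'$), makes $\b\varpi_p$ surjective, so rank--nullity together with the hypothesis $\dim \text{Lie}G'/\text{Lie}G = \dim\M$ gives $\dim\ker\b\varpi_p = \dim T_p\P - \dim\M = \dim H = \dim V_p\P$; combined with the inclusion this forces $\ker\b\varpi_p = V_p\P$. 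Since $\pi_*$ identifies $T_p\P/V_p\P$ with $T_{\pi(p)}\M$, the map $\b\varpi_p$ descends to a linear isomorphism $\underline{\varpi}_{\pi(p)}: T_{\pi(p)}\M \xrightarrow{\ \sim\ } \text{Lie}G'/\text{Lie}G$.

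Next I would assemble these fibrewise isomorphisms into a global bundle map. Define $\Xi: T\M \rarrow \P\times_{\Ad_{C(H)}}\text{Lie}G'/\text{Lie}G$ by $\Xi(X_x) \defeq [\,p,\, \b\varpi_p(\t X_p)\,]$, where $p \in \pi\-(x)$ and $\t X_p$ is any lift of $X_x$. Independence of the lift is immediate: two lifts differ by a vertical vector, which $\b\varpi_p$ annihilates by the previous step. The crux is independence of the chosen point $p$ in the fibre: replacing $p$ by $ph$ and using the lift $R_{h*}\t X_p$, axiom \eqref{2nd Cond} yields $\varpi_{ph}(R_{h*}\t X_p) = C_p(h)\-\varpi_p(\t X_p)\,C_p(h) + C_p(h)\- dC(h)_{|p}(\t X_p)$.

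The decisive observation is that the inhomogeneous term $C_p(h)\- dC(h)_{|p}(\t X_p)$ lies in $\text{Lie}G$ and is therefore annihilated by $\tau$, while the conjugation term projects as $\tau\big(\Ad_{C_p(h)\-}\varpi_p(\t X_p)\big) = \overline{\Ad}_{C_p(h)\-}\b\varpi_p(\t X_p)$, because $\Ad_{C_p(h)\-}$ (with $C_p(h)\in G\subseteq G'$) preserves $\text{Lie}G$ and hence descends to the quotient. Thus $\b\varpi_{ph}(R_{h*}\t X_p) = \overline{\Ad}_{C_p(h)\-}\,\b\varpi_p(\t X_p)$, which is exactly the relation $(ph,\, \overline{\Ad}_{C_p(h)\-}\b v)\sim(p,\b v)$ defining $\P\times_{\Ad_{C(H)}}\text{Lie}G'/\text{Lie}G$; therefore $[ph,\, \b\varpi_{ph}(R_{h*}\t X_p)] = [p,\, \b\varpi_p(\t X_p)]$ and $\Xi$ is well defined. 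Being fibrewise a linear isomorphism, covering $\id_\M$, and smooth (as $\varpi$ and the local sections are), $\Xi$ is the desired vector bundle isomorphism.

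The main obstacle is precisely this last compatibility check: one must verify that the twisted equivariance law \eqref{2nd Cond}, which mixes a $C(H)$-conjugation with an inhomogeneous $dC$ term, projects under $\tau$ onto the clean $\Ad_{C(H)}$-action used to build the twisted associated bundle. Everything hinges on the inhomogeneous term landing in $\text{Lie}G = \ker\tau$; this is the twisted counterpart of the classical fact that, for an ordinary Cartan connection, the Maurer--Cartan-type term is killed upon projecting to $\text{Lie}G'/\text{Lie}H$. The dimension count in the first step is routine but must be stated carefully, since only the surjectivity of $\b\varpi_p$ (not the full injectivity of $\varpi_p$) is actually needed to pin down $\ker\b\varpi_p = V_p\P$.
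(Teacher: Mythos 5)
Your proof is correct and takes essentially the same approach as the paper's: both project the twisted Cartan connection by $\tau$, use axioms \eqref{1st Cond} and \eqref{Cartan Cond} to obtain a $p$-dependent linear isomorphism $T_x\M \simeq \text{Lie}G'\!/\text{Lie}G$, and use \eqref{2nd Cond} --- with the inhomogeneous term $C_p(h)\-dC(h)_{|p}$ landing in $\text{Lie}G=\ker\tau$ --- to get the $\Ad_{C(H)}$-equivariance that assembles these pointwise isomorphisms into a bundle isomorphism with $\P\times_{\Ad_{C(H)}}\text{Lie}G'\!/\text{Lie}G$. The only cosmetic differences are the direction of the map (you build $T\M \rarrow \P\times_{\Ad_{C(H)}}\text{Lie}G'\!/\text{Lie}G$ directly, while the paper constructs its inverse $\iota$ from the maps $\beta_p\-$), and that you replace the paper's commutative diagram of short exact sequences by a rank--nullity count, in the process making explicit two points the paper leaves implicit: that $\tau$ annihilates the $dC$ term in \eqref{2nd Cond}, and that only the surjectivity clause of \eqref{Cartan Cond}, not injectivity of $\varpi_p$, is needed to identify $\ker(\tau\circ\varpi_p)=V_p\P$.
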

 \begin{proof} %%%%%%%%%%%%%%
Consider the following diagram 

\begin{center}
\begin{tikzcd}[column sep=large, row sep=large]

& V_p\P
 \arrow[r, "\varpi_p", "\simeq"' ] \arrow[d, ""] 
& \im\varpi_p \cap \text{Lie}G
 \arrow[d, ""] \\

&T_p\P
 \arrow[r,  "\varpi_p", "\simeq"'] \arrow[d,  "\pi_*"] 
& \im\varpi_p \cap \text{Lie}G' \arrow[d, "\tau"] \\

&T_x\M \arrow[r, "", dashrightarrow] 
& \parbox{2cm}{$\im\varpi_p~\cap~\text{Lie}G'/\text{Lie}G  \\[1mm] =~\text{Lie}G'/\text{Lie}G$}

\end{tikzcd}  
\end{center}
The first columns is clearly a short exact sequences, and the  second column is one also because of \eqref{Cartan Cond}. The maps in the two upper rows are isomorphisms due to \eqref{1st Cond} and \eqref{Cartan Cond}, so there must be an isomorphism in in the bottom row that makes the diagram commute. Such an isomorphism depends on $p \in \P$ and is given only up to the $C$-twisted adjoint action of $H$. Indeed, denote  $\beta_p : T_x\M \rarrow \text{Lie}G'/\text{Lie}G$, and consider $X_p \in T_p\P$ projecting as $\pi_*X_p=X_x \in T_x\M$. The commutativity of the bottom square means $\beta_p (X_x)=\beta_p(\pi_*X_p)=\tau \circ \varpi_p(X_p)$. But also, $\beta_{ph}(X_x)=\beta_{ph}(\pi_*R_{h*}X_p)=\tau\circ \varpi_{ph}(R_{h*}X_p)=\tau \circ \Ad_{C_p(h)\-}\varpi_p(X_p)=\Ad_{C_p(h)\-} \tau \circ \varpi_p(X_p)= \Ad_{C_p(h)\-}\beta_p(X_x)$, where \eqref{2nd Cond} is used.
Now, if one defines the map,  
\begin{align*}
\iota :\, \,  \P \times \text{Lie}G'\!/\text{Lie}G \quad &\rarrow \quad T\M  \\
            \left( p, \mathpzc{t}\right)  \qquad   &\mapsto \, \, \left(\pi(p),\, \beta_p\-(\mathpzc{t})\right),
\end{align*}
it appears that: $\iota\left(ph, \Ad_{C_p(h)\-} \mathpzc{t}\right)=\left( \pi(ph), \, \beta_{ph}\-\left( \Ad_{C_p(h)\-}\mathpzc{t} \right) \right)=\left( \pi(p), \, \beta_{p}\-(\mathpzc{t}) \right)=\iota(p, \mathpzc{t})$. Thus,  we have actually a bundle map $\iota : \P \times_{\Ad_{C(H)}} \text{Lie}G'\!/\text{Lie}G \rarrow T\M$, which is a bundle equivalence because it  covers the identity  on $\M$ and is an isomorphism of fibers. 
 \end{proof}
 
A corollary is that there is a bijective correspondance between vector fields $X\in \Gamma(T\M)$  and $\Ad_{C(H)}$-equivariant maps $\vphi: \P \rarrow \text{Lie}G'\!/\text{Lie}G$.
   
  Relatedly, it is clear that a twisted Cartan connection defines a soldering form $\theta\defeq \tau \circ \varpi \in \Omega^1(\P, \text{Lie}G'\!/\text{Lie}G)$ which is  both horizontal and
  $\Ad_{C(H)}$-equivariant: $\theta(X^v)=0$ and $R^*_h \theta= \Ad_{C(h)\-} \theta$. That is, the soldering form is tensorial, $\theta \in \Omega^1_\text{tens}\big(\P, C(H)\big)$. 
 Locally, on $\U \subset \M$ endowed with a section $\s$, its pullback is $e\defeq \s^*\theta \in \Omega^1(\U, \text{Lie}G'\!/\text{Lie}G)$. Given a coordinate system $\{x^\mu \}$ on $\U$ and a basis $\{ u_a\}$ of $\text{Lie}G'\!/\text{Lie}G$,  $e={e^a}_\mu\, dx^\mu \otimes u_a$ where %, in the parlance of physicists,
  ${e^a}_\mu$ is a vielbein field. If there is a symmetric non-degenerate bilinear form $\eta: \text{Lie}G'\!/ \text{Lie}G \times \text{Lie}G'\!/ \text{Lie}G \rarrow \RR$, a metric $g: \Gamma(T\U) \times \Gamma(T\U) \rarrow \RR$ is induced via $g\defeq \eta \circ e$. If $\eta$ is $\Ad_G$-invariant (possibly up to a non-vanishing multiplicative factor), the local metric $g$ extends to a (conformal) metric on $\M$. 
  \medskip
  
  Let $V$ be a $(\text{Lie}G', G)$-module, i.e. it supports an action of Lie$G'$ via $\rho'_*$, and an action of $G$ via $\rho$ which is such that $\rho_*=\rho'_{*|\text{Lie}G}$. The space $\Omega^\bullet_\text{tens}\big(\P, C(H)\big)$  of $V$-valued twisted tensorial forms is defined as in section \ref{Covariant differentiation}, in particular their equivariance is $R^*_h \alpha=\rho\left[C(h)\right]\-\alpha$. 
 The twisted Cartan connection, on account of \eqref{1st Cond} and \eqref{2nd Cond}, induces an exterior covariant derivative on this space defined as usual by $D\defeq d + \rho'_*(\varpi)$.

The curvature of the twisted Cartan connection is again given by $\Omega=d\varpi+ \tfrac{1}{2}[\varpi, \varpi]\in \Omega^2(\P, \text{Lie}G')$,  and the torsion is $\Theta\defeq\tau(\Omega)$.  The curvature is a $\Ad_{C(H)}$-tensorial form so $D$ acts on it, but trivially so, which gives a Bianchi identity: $D\Omega =d\Omega + [\varpi, \Omega]=0$.
\medskip
 
Gauge transformations are of course given by the action of $\Aut_v(\P, H) \simeq \H$. On account of \eqref{1st Cond} and \eqref{2nd Cond}, the twisted Cartan connection transforms as: $\varpi^\gamma\defeq \Phi^*\varpi=C(\gamma)\- \varpi\, C(\gamma) + C(\gamma)\- dC(\gamma)$. 
We already know that for  twisted tensorial forms $\alpha^\gamma\defeq \Phi^*\alpha=\rho\left[ C(\gamma)\right]\-\alpha$.  In particular, this gives the gauge transformations of the curvature, the torsion and the soldering form:
  $\Omega^\gamma=C(\gamma)\- \Omega\, C(\gamma)$, 
  $\Theta^\gamma=C(\gamma)\- \Theta\, C(\gamma)$ and 
  $\theta^\gamma=C(\gamma)\- \theta\, C(\gamma)$.

\subsection{Mixed Cartan connection}%%%%%%%%%%%%%%%%%%%%%%%%%%%%%%%%%%%%%%%%%%%%%
\label{Mixed Cartan connection} %%%%%%%%%%%%%%%%%%%%%%%%%%%%%%%%%%%%%%%%%%%%%%%%

 The previous construction easily extends to the mixed case.
Consider the principal bundle $\P(\M, H\times K)$ with a cocycle  $C: \P \times H \rarrow G$. A mixed Cartan connection  is $\varpi \in \Omega^1(\P, \text{Lie}G')$, with Lie$G' \supset$ Lie$(G\! \rtimes\! K)$   such that  dim~Lie$G'\!/$Lie$(G\! \rtimes\! K)$ = dim$\M$,  satisfying:
\begin{align}
&\varpi_p(X^v_p+ Y^v_p)=\tfrac{d}{d\tau} C_p\left( e^{\tau X} \right)\big|_{\tau=0}+ Y=dC_{p|e}(X) + Y \in \text{Lie}(G\! \rtimes\! K),      \tag{\Rmnum{1}$^\star$} \label{1st Cond Mixed}  \\[1mm]
&R^*_{hk}\varpi_p{phk}=R^*_{kh} \varpi_p{pkh} = [ C_p(h)k]\- \varpi_p\, [C_p(h)k] + [C_p(h)k]\- d [ C(h)k  ]_{|p},             \tag{\Rmnum{2}$^\star$} \label{2nd Cond Mixed}  \\[1mm]
&\ker \varpi_p = \emptyset, \quad \forall p\in \P, \text{and} \quad \text{Lie}G'\!/\text{Lie}(G\! \rtimes\! K) \subset \im \varpi_p.          \tag{\Rmnum{3}$^\star$}  \label{Cartan Cond Mixed}
\end{align}
The first two axioms are those of a mixed connection,  the third is a slightly extended version of the one satisfied by a twisted Cartan connection and ensures that, mutadis mutandis, the result of Proposition \ref{C-soldering} generalises as $T\M \simeq \P \times_{\Ad_{C(H)\rtimes K}} \text{Lie}G'\!/\text{Lie}(G\!\rtimes\! K)$. As above, the projection $\tau : \text{Lie}G' \rarrow \text{Lie}G'\!/\text{Lie}(G\!\rtimes\! K)$, allows to define the soldering form $\theta \in \Omega^1\big(\P, C(H)\!\rtimes\! K \big)$.
Given a $(\text{Lie}G', G)$-module $V$, the space $\Omega^\bullet_\text{tens}\big(\P, C(H)\!\rtimes\! K\big)$  of $V$-valued mixed tensorial forms is defined as in section \ref{Mixed vector bundles and tensorial forms}. A mixed Cartan connection, on account of \eqref{1st Cond Mixed} and \eqref{2nd Cond Mixed}, induces an exterior covariant derivative $D\defeq d + \rho'_*(\varpi)$ on this space. 
The curvature  $\Omega \in \Omega^2_\text{tens}\big(\P, C(H)\!\rtimes\! K\big)$ satisfies Bianchi, $D\Omega =0$,  and the torsion is still $\Theta\defeq\tau(\Omega)$. 
  On account of \eqref{1st Cond Mixed} and \eqref{2nd Cond Mixed}, the gauge transformation of the mixed Cartan connection under $\Aut_v(\P) \simeq \H\times \K$ is, $\varpi^{\gamma\zeta}\defeq \Psi^*\varpi=[C(\gamma)\zeta]\- \varpi\, [C(\gamma)\zeta] + [C(\gamma)\zeta]\- d[C(\gamma)\zeta]$. 
For mixed tensorial forms it is given by \eqref{H-K_GTtensorial} of Proposition \ref{mixedGTconnection},  which gives the gauge transformations of the curvature, the torsion and the soldering form as special cases.

\subsection{Reductive and parabolic mixed Cartan geometries} %%%%%%%%%%%%%%%%%%%%%%%%%%%%%%%%%%%%%%%%%%%%%
\label{Reductive and parabolic mixed Cartan geometries} %%%%%%%%%%%%%%%%%%%%%%%%%%%%%%%%%%%%%%%%%%%%%%%%

Following a well established nomenclature for Cartan geometries \cite{Sharpe, Cap-Slovak09}, we signal and briefly characterise two notable classes of mixed Cartan geometries. 

If there is an $\Ad_G$- invariant decomposition Lie$G'\!=$ Lie$G +V^n$, we  call the mixed Cartan geometry \emph{reductive}. In this case, there is clean splitting of the mixed Cartan connection $\varpi=\omega+\theta$ into a mixed connection $\omega$ and a mixed soldering form $\theta$. The curvature splits in the same way as the sum of the curvature of $\omega$ and the torsion, given by $\Theta=d\theta + [\omega, \theta]$.

If $G'$ is semi-simple and $G$ is a parabolic subgroup corresponding to a $|k|$-grading of Lie$G'\!=\!\bigoplus\limits_{-k \leq i\leq k} \LieG'_i$, s.t. $[\LieG'_i, \LieG'_j]\subset \LieG'_{i+\!j}$, the associated mixed Cartan geometry will be called \emph{parabolic}. Both the mixed Cartan connection and its curvature split along the $|k|$-grading.

\section{Twisted gauge theories} %%%%%%%%%%%%%%%%%%%%%%%%%%%%%%%%%%%%%%%%%%%%%
\label{Twisted gauge theories} %%%%%%%%%%%%%%%%%%%%%%%%%%%%%%%%%%%%%%%%%%%%%%%%

Concerning  applications of this twisted/mixed geometry to physics, the local representatives are to be regarded as generalised gauge fields: They satisfy  the gauge principle while being of a different geometric nature than standard gauge fields. Given the usual ingredients, it is easy to come up with Lagrangian functionals that specify twisted gauge theories in a way that exactly parallels the construction of Yang-Mills or gravity gauge theories. 

To give prototypical examples, consider the Killing form on Lie$G$ or Lie$G'$ - which, for special linear groups, reduces to the trace operator, $B(XY)=\Tr(XY)$ for $X, Y \in \text{Lie}G^{(\prime)}$. Suppose also that the representation maps  $\rho'_*$, $\rho$ of the $(\text{Lie}G', G)$-module $V$ are unitary for some bilinear form $\langle\, , \, \rangle$ on $V$. Finally, suppose $\M$ endowed with a metric $g$, giving a Hodge operator $* : \Omega^\bullet (\M) \rarrow \Omega^{m-\bullet}(\M)$. Then, one can write the gauge-invariant Lagrangian,
\begin{align*}
L(A, \phi)= \tfrac{1}{2} \Tr\left( F \w *F\right) + \langle D\phi, *D\phi\rangle + U\left( \langle\phi,\phi\rangle\right)*\!\1,
\end{align*} 
 where $U$ is some potential. Or, if  $\phi$ is a spinor (fermions), so that we note it $\psi$ instead, we can write: 
 \begin{align*}
L(A, \psi)= \tfrac{1}{2} \Tr\left( F \w *F\right) + \langle \psi, \slashed{D}\psi\rangle - m \langle \psi, * \psi \rangle. 
\end{align*} 
 with $\slashed D= \gamma \w * D\psi$ where $\gamma=\gamma_a\, {e^{a}}_\mu\, dx^{\,\mu} = \gamma_{\mu\,} dx^{\,\mu}$ is the Dirac gamma matrices-valued $1$-form (with ${e^a}_\mu$ the vielbein associated to the metric $g$).
 It  describes the dynamics of the twisted gauge potential $A$ coupled to a twisted Dirac field of mass $m$. Quite obviously, the field equations for $A$ and $\psi$ obtained from the action $S(A, \phi)=\int_\M L(A, \phi)$, are respectively  Yang-Mills' equation sourced by $\psi$ - $D*F=J(\psi)$ - and Dirac's equation - $(\slashed D- m)\psi=0$.

\section{Applications: Conformal tractors, twistors, and anomalies in QFT} %%%%%%%%%%%%%%%%%%%%%%%%%%%%%%%%%%%%%%%%%%%%%
\label{Applications: Conformal tractors, twistors, and anomalies in QFT} %%%%%%%%%%%%%%%%%%%%%%%%%%%%%%%%%%%%%%%%%%%%%%%%

Tractors are sections of a $n\!+\!2$-dim real vector bundle over a conformal $n$-manifold $\left(\M, [g] \right)$, the tractor bundle $\T$. These sections have a special covariance under Weyl rescaling of the metric, that is under change of metric within the conformal class.  This bundle is endowed with a covariant derivative often called the tractor connection $\nabla^\T$. As a matter of fact, $\big( \T, \nabla^\T \big)$ is the basis of a tractor calculus which is the analogue for conformal manifolds of the tensorial calculus on (pseudo) Riemannian manifolds \cite{Bailey-et-al94}. Recently, it has found some applications in physics, especially General Relativity \cite{curry_gover_2018}.

Twistors are sections of a $4$-dim complex vector bundle over a conformal $4$-manifold, the (local) twistor bundle $\mathbb{T}$. They have special covariance under Weyl rescaling, so $\mathbb{T}$ is endowed with a covariant derivative $\nabla^\mathbb{T}$ - the twistor connection - and $\big(\mathbb{T}, \nabla^\mathbb{T} \big)$ is the basis of a twistorial calculus which is the analogue of spinorial calculus on (pseudo) Riemannian manifolds \cite{Penrose-Rindler-vol1, Penrose-Rindler-vol2}. Penrose originally devised  twistors with physics purpose in mind, and nowadays they found renewed relevance in string theory and loop quantum gravity \cite{Atiyah_et_al2017}.  

Twistors are the spin version of $n=4$ tractors. Initially, and still in many standard presentations, both tractors and twistors are constructed in a similar way: bottom-up, by prolongation of  differential equations. Starting on a conformal manifold, one defines the so-called Almost Einstein (AE) equation and twistor equation, which are prolonged into closed linear systems. These are rewritten as linear operators acting on multiplets of variables (the parallel tractors and global twistors respectively). The special covariance of these multiplets under Weyl rescaling is given by definition and commute with the action of their respective  associated linear operators, which are thus called tractor and twistor connections. The multiplets are interpreted as parallel sections of vector bundles, the tractor and (local) twistor bundles. One can consult \cite{Bailey-et-al94, curry_gover_2018} for the  details of this procedure in the tractor case, and the reference text \cite{Penrose-Rindler-vol2} for the twistor case (or \cite{Bailey-Eastwood91} where the case of paraconformal (PCF) manifolds is treated in a very similar fashion).

The fact that $\T$ and $\mathbb{T}$ had a close relationship with the principal bundle of the conformal Cartan geometry \cite{Sharpe} did not go unnoticed \cite{Friedrich77, Bailey-et-al94}. In the modern treatment of Cartan geometries, the term tractor assume a  broader meaning: Bundles associated to a Cartan geometry via a restriction of a representation of $G'\!\supset\!H$ are termed  tractor bundles, and for parabolic Cartan geometries the covariant derivatives on these bundles induced by a Cartan connection are called tractor connections (see \cite{Cap-Slovak09} sections 1.5.7 and 3.1.22). But then the original meaning is only recovered via special sections of the underlying Cartan principal bundle known as Weyl structures (\cite{Cap-Slovak09}, section~5.1). 
In the cases under consideration, the bottom-up procedure via prolongation described above has been occasionally deemed more explicit \cite{Bailey-Eastwood91}, or more intuitive and direct \cite{Bailey-et-al94} than the viewpoint in terms of vector bundles associated to the conformal Cartan geometry. But this constructive approach requires a rather significant amount of computation. 

An alternative top-down approach is proposed in \cite{Attard-Francois2016_I, Attard-Francois2016_II} where $\T$ and $\mathbb{T}$  are explicitly constructed via  gauge symmetry reduction of the real and complex vector bundles naturally associated to the conformal Cartan geometry, while $\nabla^\T$ and $\nabla^\mathbb{T}$ are shown to be induced by the reduced Cartan connection. 
This alternative constructive procedure relies on a general scheme of gauge symmetry reduction known as the \emph{dressing field method} \cite{Attard_et_al2017}. It is  computationally more economical than the bottom-up approach via prolongation, and allows to generalise tractors and twistors to manifolds with torsion. It preserves the insight of the more abstract modern treatment - by articulating how $\T$ and $\mathbb{T}$ are associated to the conformal Cartan geometry - yet it is more user friendly.\footnote{For an explanation of the relation between the dressing fiel method and the notion of Weyl structure, one can consult appendix A in \cite{Francois2019}. } 
 Finally, most relevant for the present work, this method hints at the fact  that tractors and twistors can be seen as simple, and somewhat degenerated, examples of the twisted/mixed geometry elaborated above. This last point is elaborated in the next two subsections.
\medskip

After a third section where we comment briefly on conformal gravity, 
in a fourth and final subsection we show how the twisted geometry arises naturally in the very definition of anomalies in quantum gauge field theory, and  therefore ought be relevant to their study.
%\bigskip

\subsection{The case of conformal tractors}%%%%%%%%%%%%%%%%%%%%%%%%%%%%%%%%%
\label{The case of conformal tractors}%%%%%%%%%%%%%%%%%%%%%%%%%%%%%%%%%

%The claim is proven simply by display, so let us proceed. 
Consider a conformal $4$-manifold $(\M, [g])$. A tractor  is a map $\vphi: \U \subset \M \rarrow \RR^6$, $x \mapsto \vphi(x) =\begin{psmallmatrix} \rho \\ \ell \\ \s \end{psmallmatrix}$ with $\ell = \ell^a \in \RR^4$, $\rho \in \RR$ and $\s \in \RR_*$. The (generalised) tractor connection is 
$\varpi=\begin{psmallmatrix}  0 & P & 0  \\
				              e & A & P^t \\
				              0 & e^t & 0
 \end{psmallmatrix} $,
 where $A \in \Omega^1\big(\U, \so(1,3)\big)$, $e={e^a}_\mu dx^{\,\mu} \in \Omega^1\big(\U, \RR^4 \big)$ is the soldering form, and $P  \in \Omega^1\big(\U, \RR^{4*} \big)$. The operation $|^t$ is the transposition via the Minkowski metric $\eta$.  
 The tractor connection enters the definition of the tractor derivative $D\vphi = d\vphi + \varpi \vphi$. 
 The tractor curvature is 
 $\b\Omega=\begin{psmallmatrix}  f & \sC & 0  \\
				              \sT & \sW & \sC^t \\
				              0 & \sT^t & -f
 \end{psmallmatrix} $,
 where $\sW \in \Omega^2\big(\U, \so(1,3)\big)$, $\sT \in \Omega^2\big(\U, \RR^4 \big)$ is the torsion,  $\sC  \in \Omega^2\big(\U, \RR^{4*} \big)$, and $f \in \Omega^2\big(\U, \RR \big)$ is such that $f_{ab}=P_{[ab]}$.
 
 If one imposes the conditions $\text{Ricc}(\sW)\defeq{\sW^a}_{bac}=0$ and $\sT=0$, several consequences follow. First $A$ is the Lorentz/spin connection (expressed as a function of the components of $e$). Then $f=0$ (by the Bianchi identify) and $P$ is the Schouten 1-form (with components the symmetric Schouten tensor). Finally, $\sW$ and $\sC$ are the Weyl and Cotton 2-forms. In this case, $\varpi$ is the standard tractor connection. 
 \medskip
 
 The (local) gauge group of Weyl rescalings is $\W\defeq \left\{ z: \U \in \M \rarrow \RR^+_*\, \big|\, z^{z'}={z'}\-zz'=z\right\}$. The Weyl gauge transformations of the above variables are obtained via elements of the form 
 \begin{align}
 \label{tractor-cocycle}
 C(z)= \begin{pmatrix} z & \Upsilon(z) & \sfrac{z\-}{2}\,\Upsilon(z)\Upsilon(z)^t \\  0 & \1 & z\-\Upsilon(z)^t \\ 0 & 0 & z\- \end{pmatrix}, \quad \text{where} \quad \Upsilon(z)=\Upsilon(z)_a\defeq z\-\d_\mu z \,\,{e^{\,\mu}}_a  \in \RR^{4*}.
 \end{align}
% \intertext{}
 We have indeed,
 \begin{equation}
 \label{Tractor-Weyl-GT}
 \begin{split}
& \vphi^z= C(z)\- \vphi, \qquad \varpi^z=C(z)\- \varpi \, C(z) + C(z)\-dC(z),  \\[1mm]
& (D\vphi)^z=D^z\vphi^z= C(z)\- D\vphi, \qquad \b\Omega^z=C(z)\- \b\Omega\, C(z).
 \end{split}
 \end{equation}
 It is easily verified that for $z,z' \in \W$ one has $C(z)^{z'}= C(z')\- C(zz')$ (because $e^z =ze$), which is an instance of \eqref{localGTCgamma}, local version  of the gauge transformation law of a cocycle \eqref{GTCgamma}. So that  \eqref{Tractor-Weyl-GT} is indeed a special case of \eqref{localGTconnection}%and \eqref{localGTtensorial} (or 
 -\eqref{localGTothers}, local version of the gauge transformations of twisted fields as described in Proposition \ref{ConnectionGT} and \ref{TensorialGT}. We conclude that tractor variables are indeed twisted gauge fields w.r.t. the Weyl gauge group $\W$. 
 %\medskip
 
 The (local) Lorentz gauge group is $\SO\defeq \left\{ \sS = \begin{psmallmatrix} 1 & 0 & 0 \\ 0 & S & 0 \\ 0 & 0 & 1 \end{psmallmatrix},\, S:\U \rarrow S\!O(1,3)\, \big|\, \sS^{\sS'}={\sS'}\-\sS \sS' \right\}$. The Lorentz gauge transformations of the tractor variables are, 
 \begin{align*}
& \vphi^\sS= \sS\- \vphi, \qquad \varpi^\sS=\sS\- \varpi\, \sS + \sS\-d\sS,  \\[1mm]
& (D\vphi)^\sS=D^\sS\vphi^\sS= \sS\- D\vphi, \qquad \b\Omega^\sS=\sS\- \b\Omega\, \sS.
 \end{align*}
 Which means that they are standard gauge fields w.r.t. $\SO$. By the way, it is easily verified that $C(z)^\sS = \sS\- C(z)\, \sS$ (because $e^S=S\-e$), which is a special case of  \eqref{local-K-GT-Cgamma}, local version of \eqref{K-GT-Cgamma}. This ensures that the actions of $\W$ and $\SO$ on the tractor variables commute and that we have, 
  \begin{align*}
& \vphi^{\,z\sS}= [C(z)\sS]\- \vphi, \qquad \varpi^{ \,z\sS}=[C(z)\sS]\- \varpi \, [C(z)\sS] + [C(z)\sS]\-d[C(z)\sS],  \\[1mm]
& (D\vphi)^{\,z\sS}=D^{\,z\sS}\vphi^{\,z\sS}= [C(z)\sS]\- D\vphi, \qquad \b\Omega^{\,z\sS}=[C(z)\sS]\- \b\Omega\, [C(z)\sS],
 \end{align*}
 that is a special case of  \eqref{local-active-mixedGT-connection}-\eqref{activeGT-mixed-others}, local version of the gauge transformations for mixed gauge fields as given in Proposition  \ref{mixedGTconnection}. We conclude that tractor variables are mixed gauge fields w.r.t. the gauge group $\W \times \SO$.
 
 \bigskip

From the above local construct we can attempt to recover the global twisted geometry that it stems from.
Consider the bundle $\P\big(\M, \,W \times S\!O(1,3)\big)$, with $W\defeq %\bigg\{ \begin{psmallmatrix} z & 0 & 0 \\  0 & \1 & 0 \\ 0 & 0 & z\- \end{psmallmatrix}\big| \,z \in 
\RR^+_*$, %and $S\!O\defeq \left\{ \begin{psmallmatrix} 1 & 0 & 0 \\  0 & S & 0 \\ 0 & 0 & 1 \end{psmallmatrix}\big| \,S \!\in S\!O(1,3)\right\}$. 
as well as a cocycle map  $C: \P \times W \rarrow G$ where the target group is defined as,
$G= \left\{ \begin{psmallmatrix} 1 & r & \sfrac{1}{2}\,rr^t \\  0 & \1 & r^t \\ 0 & 0 & 1 \end{psmallmatrix} \rtimes \begin{psmallmatrix} z & 0 & 0 \\  0 & \1 & 0 \\ 0 & 0 & z\- \end{psmallmatrix} = \begin{psmallmatrix} z & r & \sfrac{z\-}{2}\,rr^t \\  0 & \1 & z\-r^t \\ 0 & 0 & z\- \end{psmallmatrix}\, \big| \, r\in \RR^{4*}, z\in W  \right\}^{\vphantom{|}}_{\vphantom{|}}$.
For $z \in \W$, the local cocycle  \eqref{tractor-cocycle} is a map $C(z) : \U \rarrow G$.
 Consider also the (inner) semidirect product group $G \rtimes S\!O \defeq \left\{ \begin{psmallmatrix} z & r & \sfrac{z\-}{2}\,rr^t \\  0 & \1 & z\-r^t \\ 0 & 0 & z\- \end{psmallmatrix} \rtimes \begin{psmallmatrix} 1 & 0 & 0 \\  0 & S & 0 \\ 0 & 0 & 1 \end{psmallmatrix}= \begin{psmallmatrix} z & rS & \sfrac{z\-}{2}\,rr^t \\  0 & S & z\-r^t \\ 0 & 0 & z\- \end{psmallmatrix} \,\big|\, S\in S\!O(1,3) \right\}^{\vphantom{|}}$, where the group morphism $S\!O \rarrow \Aut(G)$ defining the semidirect structure is $S \mapsto \text{Conj}(S)$. 
 Its Lie algebra is 
 Lie$(G \rtimes S\!O)= \left\{ \begin{psmallmatrix} \epsilon & \iota & 0 \\ 0 & s & \iota^t \\ 0 & 0 & -\epsilon \end{psmallmatrix} \, \big|\, \epsilon \in \RR^+_*,  s\in  \so(1,3), \iota \in \RR^{4*} \right\}$. Clearly, the tractor connection takes value in the bigger Lie algebra 
 Lie$G'=\text{Lie}(G\rtimes SO)\oplus\RR^4=\left\{ \begin{psmallmatrix} \epsilon & \iota & 0 \\ \tau & s & \iota^t \\ 0 & \tau^t & -\epsilon \end{psmallmatrix} \, \big|\, \tau \in \RR^4 \right\}^{\vphantom{|}}_{\vphantom{|}}$, and dim Lie$G'\!/\text{Lie}(G\!\rtimes\! S\!O) =\text{dim}\M$. 
 Therefore, it appears that the tractor connection $\varpi$  is a local mixed Cartan connection as defined in section \ref{Mixed Cartan connection}.
 By the way, there is a decomposition of Lie$G'$ that shows it is graded: 
 Lie$G'= \LieG'_{-1}+\LieG'_0+\LieG'_1=\left\{  \begin{psmallmatrix} 0 & 0 & 0 \\ \tau & 0 & 0 \\ 0 & \tau^t & 0 \end{psmallmatrix}  + \begin{psmallmatrix} \epsilon & 0 & 0 \\ 0 & s & 0 \\ 0 & 0 & -\epsilon \end{psmallmatrix}  +  \begin{psmallmatrix} 0 & \iota & 0 \\ 0 & 0 & \iota^t \\ 0 & 0 & 0 \end{psmallmatrix} \, \big|\, \ldots \right\}^{\vphantom{|}}_{\vphantom{|}}$, $[\LieG'_i, \LieG'_j] \in \LieG'_{i+j}$, and Lie$(G\rtimes S\!O)=\LieG'_0+\LieG'_1$ is a parabolic subalgebra.
 So tractor geometry is an instance of parabolic mixed Cartan geometry. In particular the tractor bundle is a mixed  vector bundle associated to $\P$, $\T=\P \times_{C(W)\rtimes S\!O} \RR^6$, as defined in section \ref{Mixed vector bundles and tensorial forms}.
 
 Notice however that, in view of \eqref{tractor-cocycle}, the $\P$-dependance of the cocycle map $C: \P \times W \rarrow G$  is localised in the coefficients of the soldering form (${e^a}_\mu$) in the definition of $\Upsilon(z)$, which also contains a derivative of $z$. So, for constant $z \in W$ the $\P$-dependance vanishes, and the cocycle reduces to a group morphism $C: W \rarrow G$, $z\mapsto C(z)=\begin{psmallmatrix} z & 0& 0 \\ 0 & \1&0 \\ 0 & 0 & z\- \end{psmallmatrix}$. Then, at the level of $W\times S\!O$-equivariance, the $W$-twisted side of tractor geometry  seems to degenerate into a standard non-twisted case. Only at the level of $\W \times \SO$-action/gauge transformations does the cocycle structure, and the $\W$-twisted geometry, is manifest. Because of this, the fact that tractors are mixed gauge fields  and provide an instance of a new type of geometry might go unnoticed.\footnote{From a physicist's  point of view  it is clear that something unusual is going on, because gauge elements $C(z)$ of type \eqref{tractor-cocycle} clearly cannot come from the mere gauging of a Lie group.  } The same goes for twistors, as we now show.
 
 \medskip
%{\color{blue} What about the $\SO$-reduced case ? }

 \subsection{The case of local twistors}%%%%%%%%%%%%%%%%%%%%%%%%%%%%%%%%%
\label{The case of local twistors}%%%%%%%%%%%%%%%%%%%%%%%%%%%%%%%%%
 
 We follow closely the treatment given in the previous section.  
 But first, let us start by reminding some basic results and fix our notations. 
 
Denote Minkowski space by $\sM:=(\RR^4, \eta)$, and
consider $\left\{\s_a^{AA'}\right\}_{a=0, \ldots, 3}$, a basis of $2\times2$ hermitian matrices Herm$(2, \CC)=\left\{ M \in M_2(\CC)\ | \ M^*=M  \right\}$, where $*$ denote Hermitian transposition. There is a vector space isomorphism
$
\sM \rarrow \text{Herm}(2, \CC),$  
$x=x^a \mapsto \b x=\b x^{AA'}:=x^a \s_a^{AA'} = \tfrac{1}{2}\begin{psmallmatrix} x^0+x^3 & x^1 - ix^2 \\[0.5mm] x^1 + ix^2 & x^0 - x^3 \end{psmallmatrix}_{\vphantom{|}}^{\vphantom{|}} 
$.
Upper case Latin letters are Weyl spinor indices, with values $0$ and $1$. The spacetime interval is then given by $x^T \eta x=4 \det(\b x)$. 
The isomorphism for the dual is
$
\RR^{4*} \rarrow \text{Herm}(2, \CC)$, 
$r=x^t=x^T \eta  \mapsto \b r:=x^0\s_0- x^i \s_i = \tfrac{1}{2}\begin{psmallmatrix} x^0-x^3 & -x^1 + ix^2 \\[0.5mm] -x^1 - ix^2 & x^0 + x^3 \end{psmallmatrix}^{\vphantom{|}}_{\vphantom{|}}
$.

Correspondingly, we have the double cover group morphism $S\!L(2, \CC) \rarrow S\!O(1, 3)$, $\pm \b S \mapsto S$. It is a spin representation of the Lorentz group.  So the  action $S\!O(1, 3) \times \sM  \rarrow \sM$, $(S, x) \mapsto Sx$  preserving $\eta$,  is represented by the action $S\!L(2, \CC) \times \text{Herm}(2, \CC) \rarrow \text{Herm}(2, \CC)$, $(\b S, \b x) \mapsto \b S \b x \b S^*$ preserving~$\det$.
The associated Lie algebra isomorphism $\sl(2, \CC) \rarrow \so(1,3)$, $\b s \mapsto s$, implies that the action $\so(1, 3) \times \sM  \rarrow \sM$,  $(s, x) \mapsto sx$, is represented by the action $\sl(2, \CC) \times \text{Herm}(2, \CC) \rarrow \text{Herm}(2, \CC)$,  $(\b s, \b x) \mapsto \b s \b x + \b x \b s^*$.

\medskip

On a conformal $4$-manifold $\M$, a (local) twistor is a map 
$\psi : \U \subset \M \rarrow \CC^4$, $x \mapsto \psi(x)=\begin{psmallmatrix} \pi \\ \omega \end{psmallmatrix}$ with $\pi, \omega \in \CC^2$ dual Weyl spinors. The (generalised) twistor connection is 
$\b\varpi = \begin{psmallmatrix}  -\b A^* & -i \,\b P \\[0.5mm]
						  i \,\b e & \b A   
		\end{psmallmatrix}^{\vphantom{|}}_{\vphantom{|}}$, 
with $\b A \in \Omega^1\big(\U, \sl(2, \CC)\big)$ and $\b e, \b P \in \Omega^1\big(\U, \text{Herm}(2, \CC)\big)$.
 It enters the definition of the twistor derivative (or twistor transport), $\b D\defeq d+ \b\varpi$.
 The twistor curvature is 
 $\b\Omega= \begin{psmallmatrix}  -\b \sW^*+ \sfrac{f}{2}\1_2 & -i\, \b \sC \\[0.5mm]
						      i\, \b \sT & \b \sW   -\sfrac{f}{2}\1_2
	          	\end{psmallmatrix}^{\vphantom{|}}_{\vphantom{|}} $, 
with $\b \sW \in \Omega^2\big(\U, \sl(2, \CC)\big)$, $\b \sT, \b \sC \in \Omega^2\big(\U, \text{Herm}(2, \CC)\big)$, and $f \in \Omega^2\big(\U, \RR \big)$.
Here again, imposing $\text{Ricc}(\b\sW)=0$ and $\b\sT=0$ implies that $\b A$ is the spin connection, $f=0$,  and $\b P, \b\sC, \b\sW$ are the Schouten, Cotton and Weyl tensors. So that $\b\varpi$ is the standard twistor connection. 

\medskip

The (local) Weyl gauge transformations of these twistor variables are obtained via elements of type
\begin{align}
\label{twistor-cocycle}
\b C(z) = \begin{pmatrix} z^{\sfrac{1}{2}}    &  -i\, z^{-\sfrac{1}{2}}\, \b\Upsilon(z) \\[0.5mm]
 					   0  &     z^{-\sfrac{1}{2}}
	     \end{pmatrix}, \quad \text{where} \quad \b\Upsilon(z)=\Upsilon(z)_{AA'}  \in \text{Herm}(2, \CC),
\end{align}
  so that, 
 \begin{equation}
 \label{Twistor-Weyl-GT}
 \begin{split}
& \psi^z= \b C(z)\- \psi, \qquad \b\varpi^z=\b C(z)\- \b\varpi \, \b C(z) + \b C(z)\-d\b C(z),  \\[1mm]
& (\b D\psi)^z=\b D^z\psi^z= \b C(z)\- \b D\psi, \qquad \b\Omega^z=\b C(z)\- \b\Omega\, \b C(z).
 \end{split}
 \end{equation}
One verifies again that given $z,z' \in \W$, one has $\b C(z)^{z'}= \b C(z')\- \b C(zz')$,  as instance of  \eqref{localGTCgamma} and a local version of \eqref{GTCgamma}. It follows that \eqref{Twistor-Weyl-GT} is a special case of \eqref{localGTconnection} %and \eqref{localGTtensorial} (or 
 -\eqref{localGTothers}, local version %of the  transformations of twisted gauge fields described in
  Proposition \ref{ConnectionGT} and \ref{TensorialGT}. 
 Twistor variables are therefore twisted gauge fields w.r.t. the Weyl gauge group $\W$. 
 \medskip

 The (local) spin gauge group is $\SL\defeq \left\{ \b\sS = \begin{psmallmatrix} {{\b S}}^{-1*} & 0  \\ 0 & \b S  \end{psmallmatrix},\, \b S:\U \rarrow S\!L(2,\CC)\, \big|\, \b\sS^{\b\sS'}={\b\sS}^{'-1}\b\sS \b\sS' \right\}$. The spin gauge transformations of the twistor variables are, 
 \begin{align*}
&\psi^{\b\sS}= {\b\sS}\- \psi, \qquad \b\varpi^{\b\sS}={\b\sS}\- \b\varpi\, \b\sS + {\b\sS}\-d\b\sS,  \\[1mm]
& (\b D\psi)^{\b\sS}=D^{\b\sS}\psi^{\b\sS}= {\b\sS}\- \b D\psi, \qquad \b\Omega^{\b\sS}=\b\sS\- \b\Omega\, \b\sS.
 \end{align*}
 so they are standard gauge fields w.r.t. $\SL$. Also,  one verifies that $\b C(z)^{\b\sS} = {\b\sS}\- \b C(z)\, \b\sS$,  a special case of  \eqref{local-K-GT-Cgamma} and a  local version of \eqref{K-GT-Cgamma}. So the actions of $\W$ and $\SO$ on the twistor variables commute and  we have, 
  \begin{align*}
& \psi^{\,z\b\sS}= [\b C(z)\b\sS]\- \psi, \qquad {\b\varpi}^{\,z\b\sS}=[\b C(z)\b\sS]\- \b\varpi \, [\b C(z)\b\sS] + [\b C(z)\b\sS]\-d[\b C(z)\b\sS],  \\[1mm]
& (\b D\psi)^{\,z\b\sS}={\b D}^{\,z\b\sS}\psi^{\,z\b\sS}= [\b C(z)\b\sS]\- \b D\psi, \qquad {\b\Omega}^{\,z\b\sS}=[\b C(z)\b\sS]\- \b\Omega\, [\b C(z)\b\sS],
 \end{align*}
 as special case of  \eqref{local-active-mixedGT-connection}-\eqref{activeGT-mixed-others} and local version of Proposition  \ref{mixedGTconnection}. Thus, like tractors, twistor variables are mixed gauge fields w.r.t. the gauge group $\W \times \SL$.
  
 \bigskip
 Again, we can guess the global twisted geometry from the local data.
Consider the bundle $\P\big(\M, \,W \times S\!L(2,\CC)\big)$,
and a cocycle  $C: \P \times W \rarrow \b G$ where 
$\b G= \left\{ \begin{psmallmatrix} \1 & -i\,\b r \\  0 & \1  \end{psmallmatrix} \rtimes \begin{psmallmatrix} z^{\sfrac{1}{2}} & 0 \\  0 & z^{-\sfrac{1}{2}}  \end{psmallmatrix} = \begin{psmallmatrix} z^{\sfrac{1}{2}} & -i\, z^{-\sfrac{1}{2}}\,\b r \\  0 & z^{-\sfrac{1}{2}} \end{psmallmatrix}\, \big| \, \b r\in \text{Herm}(2, \CC), z\in W  \right\}^{\vphantom{|}}_{\vphantom{|}}$.
For $z \in \W$, the local cocycle  \eqref{twistor-cocycle} is a map $\b C(z) : \U \rarrow \b G$.
 Consider also the semidirect product group
  $\b G \rtimes S\!L \defeq \left\{  \begin{psmallmatrix} z^{\sfrac{1}{2}} & -i\, z^{-\sfrac{1}{2}}\,\b r \\  0 & z^{-\sfrac{1}{2}} \end{psmallmatrix}\rtimes \begin{psmallmatrix} {\b S}^{-1*} & 0  \\  0 & \b S  \end{psmallmatrix}
  = \begin{psmallmatrix} z^{\sfrac{1}{2}}{\b S}^{-1*}  & -i\, z^{-\sfrac{1}{2}}\,\b r \b S \\  0 & z^{-\sfrac{1}{2}} \b S \end{psmallmatrix} \,\big|\, \b S\in S\!L(2,\CC) \right\}^{\vphantom{|}}_{\vphantom{|}}$, 
  where the group morphism $S\!L \rarrow \Aut(\b G)$ of the semidirect structure is $\b S \mapsto \text{Conj}(\b S)$. 
 Its Lie algebra is 
 Lie$(\b G \rtimes S\!L)= \left\{ \begin{psmallmatrix} -(\b s - \sfrac{\epsilon}{2}\1)^* & -i\, \b \iota  \\ 0 & \b s -\sfrac{\epsilon}{2}\1  \end{psmallmatrix} \, \big|\, \epsilon \in \RR^+_*,  s\in  \sl(2,\CC), \b\iota \in \text{Herm}(2, \CC)\right\}^{\vphantom{|}}_{\vphantom{|}}$. 
 The twistor connection manifestly takes value in a bigger Lie algebra, 
 Lie$\b G'=\text{Lie}(\b G\rtimes SL)\oplus \text{Herm}(2, \CC)=\left\{ \begin{psmallmatrix} -(\b s - \sfrac{\epsilon}{2}\1)^* & -i\, \b \iota  \\  i\, \b \tau & \b s -\sfrac{\epsilon}{2}\1  \end{psmallmatrix}\, \big|\, \b\tau \in \text{Herm}(2, \CC) \right\}^{\vphantom{|}}_{\vphantom{|}}$, 
 and dim Lie$\b G'\!/\text{Lie}(\b G\!\rtimes\! S\!L) =\text{dim}\M$. 
 Therefore, the twistor connection $\b\varpi$  is a local mixed Cartan connection.
Obviously,  like its real counterpart, Lie$\b G'$ is graded: 
 Lie$\b G'= \b \LieG'_{-1}+\b\LieG'_0+\b\LieG'_1=\left\{  \begin{psmallmatrix} 0 & 0  \\ -i\, \b\tau & 0 \end{psmallmatrix}  + \begin{psmallmatrix} -(\b s - \sfrac{\epsilon}{2}\1)^* & 0  \\ 0 & \b s - \sfrac{\epsilon}{2}\1 \end{psmallmatrix}  +  \begin{psmallmatrix} 0 & -i\, \b\iota \\ 0 & 0 \end{psmallmatrix} \, \big|\, \ldots \right\}^{\vphantom{|}}_{\vphantom{|}}$, $[\b \LieG'_i, \b \LieG'_j] \in \b\LieG'_{i+j}$, with Lie$(\b G\rtimes S\!L)=\b\LieG'_0+\b\LieG'_1$ a parabolic subalgebra.
So, twistor geometry is an instance of parabolic mixed Cartan geometry, and  the twistor bundle is a mixed  vector bundle $\mathbb{T}=\P \times_{\b C(W)\rtimes S\!L} \CC^4$.
 
 Notice again that, as in the tractor case,  given \eqref{tractor-cocycle}, the $\P$-dependance of the cocycle map $\b C: \P \times W \rarrow \b G$  comes from the coefficients of the soldering form (${e^a}_\mu$) entering the definition of $\b\Upsilon(z)$, containing  a derivative of $z$. For constant $z \in W$ the cocycle then reduces to a group morphism $\b C: W \rarrow \b G$, $z\mapsto \b C(z)=\begin{psmallmatrix}z^{\sfrac{1}{2}}\1 & 0 \\ 0 & z^{-\sfrac{1}{2}}\1 \end{psmallmatrix}^{\vphantom{|}}_{\vphantom{|}}$. Again, at the level of $W\times S\!O$-equivariance the $W$-twisted side of twistor geometry  is degenerated, and only at the level of $\W \times \SO$-gauge transformations does the cocycle structure and the $\W$-twisted geometry are manifest.

\subsection{Conformal gravity as a twisted gauge theory}%%%%%%%%%%%%%%%%%%%%%%%%%%%%%%%%%%
\label{Conformal gravity as a twisted gauge theory}%%%%%%%%%%%%%%%%%%%%%%%%%%%%%%%%%%%%%

An attempt to interpret local twistors and the twistor covariant derivative as gauge fields in the spirit of Yang-Mills theory, compatible with the gauge principle of field theory, was first proposed in \cite{Merkulov1984_I}. This work is cited in the  reference text of Penrose and Rindler \cite{Penrose-Rindler-vol2}.\footnote{The footnote p.133 reads, ``\emph{Local twistors [...] can under certain circonstances be thought of as defining a kind of Yang-Mills theory, cf Merkulov (1984)(Bach tensor current)}.".} 
From the above considerations, it appears clearly that actually twistors better fit in the generalised geometry developed in this paper, and are therefore not Yang-Mills gauge fields, but rather twisted/mixed gauge fields - that indeed provide a new satisfying instantiation of the gauge principle.

It was further shown in \cite{Merkulov1984_I} that the Yang-Mills equation for the twistor connection $\b\varpi$ reproduces the Bach equation of conformal (or Weyl) gravity, and that upon examination, the Yang-Mills type Lagrangian for $\b\varpi$ is indeed the Weyl tensor-squared Lagrangian of conformal gravity. %The equivalence between the Yang-Mills equation for the tractor connection $\varpi$ and  Bach equation was first noticed in \cite{Korz-Lewand-2003}. 

This is most clearly understood in the context of Section \ref{Twisted gauge theories}.
First, define the  Killing forms $\b B$ and $B$ on Lie$\b G'$ and Lie$G'$. Given $\b M, \b N \in$ Lie$\b G'$,  $\b B(\b M, \b N):=\tfrac{1}{2}\left(  \Tr(\b M\b N) + \Tr(\b N^*\b M^*)\right)$. Given $M, N \in$ Lie$G'$, $B(M, N):=\Tr(MN)$. The same formulae hold for $\sl(2, \CC)$ and $\so(1, 3)$ and define $\b B_{\sl(2, \CC)}$ and $B_{\so(1, 3)}$, which must coincide since $\sl(2,\CC) \simeq \so(1, 3)$. As a matter of fact, for $m, n \mapsto \b m, \b n$ one has  $\b B_{\sl(2, \CC)}(\b m, \b n)=B_{\so(1, 3)}(m, n)$.
The Yang-Mills Lagrangians associated to the standard tractor  and twistor connections both reproduce conformal gravity,
\begin{align*}
\left.
\begin{array}{l}
L_\text{YM}(\varpi)= \tfrac{1}{2}B(\Omega,  *\Omega)=  \tfrac{1}{2}B_{\so(1, 3)}(\sW, *\sW) \\[1mm]  
L_\text{YM}(\b\varpi)=\tfrac{1}{4}\b B(\b\Omega,  * \b\Omega)=\tfrac{1}{2} \b B_{\sl(2, \CC)}(\b \sW, *\b \sW)
\end{array}
\right\}
= \tfrac{1}{2}\Tr(\sW \w *\sW)=L_\text{Weyl}(e).
\end{align*}
It is then no surprise that the field equations obtained  by varying the action $S_\text{\!YM}(\b\varpi)$ w.r.t. $\b\varpi$, or $S_\text{\!YM}(\varpi)$ w.r.t. $\varpi$, on the one hand, and by varying the action $S_\text{\!Weyl}(e)$ w.r.t. $e$ on the other hand, should coincide. In the first case we obtain the Yang-Mills equations for the standard tractor and twistor connections, and in the second case we obtain the Bach equation:

\begin{center}
\begin{tikzcd}[column sep=tiny, row sep=normal]

\frac{\delta S_\text{\!YM}(\b\varpi)}{\delta\b\varpi}=0 \ \rarrow \ \b D*\b\Omega=0 
\ar[rr,  start anchor={[xshift=3ex]},end anchor={[xshift=-3ex]}, leftrightarrow] \arrow[dr, leftrightarrow, bend right, start anchor={[xshift= -2ex, yshift=-1ex]}, end anchor={[xshift=-3ex, yshift=+2.5ex]}]	 &	  	 &
 \frac{\delta S_\text{\!YM}(\varpi)}{\delta\varpi}=0 \ \rarrow \ D*\Omega=0
 \arrow[dl, leftrightarrow, bend left,  start anchor={[xshift= +2ex, yshift=-1ex]}, end anchor={[xshift=+3ex, yshift=+2.5ex]}] \\[1mm]
                    	 &       \frac{\delta S_\text{\!Weyl}(e)}{\delta e}=0 \ \rarrow\  B_{ab}=0     &

\end{tikzcd}  
\end{center}
where $B_{ab}$ is the Bach tensor. The equivalence on the right side of the diagram was first noticed in \cite{Korz-Lewand-2003}. From this we conclude that conformal gravity is  a mixed $\W \times \SO$ -gauge theory hiding in plain sight.

% \subsection{The case of projective tractors (?)}%%%%%%%%%%%%%%%%%%%%%%%%%%%%%%%%%
%\label{The case of projective tractors}%%%%%%%%%%%%%%%%%%%%%%%%%%%%%%%%%

\clearpage

\subsection{Application to anomalies in QFT} %%%%%%%%%%%%%%%%%%%%%%%%%%%%%%%%%%%%%%%%%%%%%
\label{Application to anomalies in QFT} %%%%%%%%%%%%%%%%%%%%%%%%%%%%%%%%%%%%%%%%%%%%%%%%

Anomalies arise in QFT when the quantization of a classical gauge theory fails to uphold the gauge invariance. First discovered through perturbative methods, (consistent) anomalies were found to be characterized by BRST cohomological methods and obtainable via Stora-Zumino descent equations. They finally came to be understood as degree $1$ elements in the cohomology of Lie$\H$ \cite{Bonora-Cotta-Ramusino}, and soon after as $\H$-1-cocycles \cite{Faddeev-Shatashvili1984, Reiman-et-al1984}, see in particular \cite{ Falqui-Reina1985}.
Interestingly, in \cite{Falqui-Reina1985} and \cite{Catenacci-et-al1986, Catenacci-Pirola1990},  an infinite dimensional  \emph{twisted line bundle} appears as a relevant object in the study of anomalies (see  also \cite{Mickelsson1986, Mickelsson1987}, or \cite{Ferreiro-Perez2018} more recently).
It comes out as follows. 
\medskip

Consider a Yang-Mills  gauge theory such  that the relevant fields space is the space $\A$  of Ehresmann connections of a $H$-principal bundle $\P$ with gauge group $\H$.
Under proper restrictions, $\A$ is itself a $\H$-principal bundle over the moduli space $\A/\H$, where one is here in the realm of  infinite dimensional Hilbert manifolds \cite{Singer, Cotta-Ramusino-Reina1984}.

A quantum (vacuum) functional is smooth map $W\!:\A\! \rarrow\CC^*$. For a gauge invariant quantized theory, $W$ is s.t.  $W(A^\gamma)\!=\!W(A)$, $\gamma\in\H$, it is therefore projectable and descends to a functional on the base $\A/\H$. But an anomalous functional is s.t. $W(A^\gamma)=C(A, \gamma)\- W(A)$, where $C\!:\A \times \H \rarrow U(1)$ is $C(A, \gamma)=\exp\{-i2\pi f(A, \gamma)\}$. The functional in the phase is the Wess-Zumino term, also called \emph{integrated anomaly} since indeed $\smash{\lim\limits_{\tau \rarrow 0} f(A, \gamma_\tau)/\tau =a(\chi ,A)}$ is the anomaly (linear in $\chi \in$ Lie$\H$).

 Now, consistency of the right action of the gauge group $W(A^{\gamma\gamma'})\!=\!W((A^\gamma)^{\gamma'})$ implies the cocycle relation $C(A, \gamma\gamma')=C(A, \gamma)\ C(A^\gamma\!, \gamma')$, which is a form of the Wess-Zumino consistency condition for the gauge anomaly. An anomalous quantum functional $W$ is then a \emph{twisted} $C$-equivariant function on $\A$, i.e. a section of the twisted associated line bundle $\L^C:=\A \times_{C(\H)} \CC^*$. 

\medskip

As far as I can tell,  the peculiar geometrical nature of  twisted line bundles like  $\L^C$  was first stressed in \cite{Blau1988, Blau1989}.\footnote{In the introduction of the latter reference we read, ``\emph{[...] recently objects (called generalized associated bundles hereafter) have appeared in the physics literature, about whose general structure little seems to be known}". And after defining the twisted line bundle with fiber $\CC$ we find, just below Eq.(2.3), the comment ``\emph{bundles of this kind have recently appeared in the physics literature (mainly in relation with anomalies). Their geometrical structure, however, was not further investigated.}" The present paper happens to  contribute to this investigation. }
These references also introduce a connection for a twisted line bundle $L^C:=\P\times_{C(H)} \CC^*$ associated to a $H$-principal bundle $\P$, with $C:\P\times H \rarrow \CC^*$, that turns out to be a special case of twisted connection defined in section \ref{Connection}. 
%a locally flat twisted connection, noted  $\Gamma$, built out of the cocycle $C$ and  a so-called (local) dressing field $u$.

Working  on a patch $\U\subset \M$, 
 %with a local section $\s:\U\rarrow \P$, one defines $C_\s:=\s^*C: \U\times H \rarrow \CC$
 consider %$C: \pi\-(\U)\times H \rarrow \CC$
  %and 
  a map $u:\pi\-(\U)\subset \P \rarrow H$ defined by the equivariance property $R^*_hu=h\-u$.\footnote{Such a map always exists on a trivial bundle, here the bundle $\P_{|\U}=\U \times H$.} This we call a - locally defined - dressing field \cite{Attard_et_al2017}.
The map $C(u):\pi\-(\U) \rarrow H$, is such that $C_{ph}(u(ph))=C_{ph}(h\-u(p))=C_{ph}(h\-)\,C_p(u(p))=C_p(h)\-C_p(u(p))$. So its equivariance is  $R^*_h C(u)=C(h)\- C(u)$ (call it a twisted dressing field, see \cite{Attard-Francois2016_I, Attard-Francois2016_II, Francois2019}). 
Built then the $1$-form $\Gamma:= C(u)dC(u)\-$ satisfying, for $X_p^v \in V_p\P$ generated by $X\in$ Lie$H$, 
\begin{align*}
\Gamma_p(X^v_p)&=C_p(u(p)) dC(u)\-_p(X_p^v)= C_p(u(p)) [X^v(C(u))]\-(p)=C_p(u(p)) \tfrac{d}{d\tau} C_{pe^{\tau X}} \left(u(pe^{\tau X})\right)\-\big|_{\tau=0}, \\
                              &= \tfrac{d}{d\tau} C_p \left(e^{\tau X}\right)\-\big|_{\tau=0} = dC_{p|e}(X),
\end{align*}
and, for $h\in H$ and $X_p \in T_p\P$,
\begin{align*}
R^*_h \Gamma_{ph}&=C_{ph}(u(ph)) dR^*_hC(u)_{|p}= C_p(h)\-C_p(u(p)) \, d \left(C(u)\-C(h) \right)_{|p}, \\
				&= C_p(h)\- \Gamma_p C_p(h) + C_p(h)\- dC(h)_{|p}.
\end{align*}
Thus, $\Gamma \in \C(\P_{|\U})^T$. Given a partition of unity $\{\delta_i\}$ subordinate to a covering $\{\U_i\}$ of $\M$,  the $\Gamma_i$'s can be glued into a non flat twisted connection $\Gamma=\sum (\pi^*\delta_i) \Gamma_i \in \C(\P)^T$, with curvature $\Omega=d\Gamma \in \Omega^2_\text{tens}(\P, C)$. It defines a covariant derivative $D=d\,+\Gamma$ on $\Omega^\bullet_\text{tens}(\P, C)$,  and on sections of $L^C$ in particular. 

Applied to the case $\P \Rightarrow \A$ and $L^C \Rightarrow \L^C$,we see that the twisted connection $\Gamma$ is given essentially by the Wess-Zumino term $f(A, \gamma)$.
\bigskip

\medskip

% \clearpage
 
\section{Conclusion} %%%%%%\%%%%%%%%%%%%%%%%%%%%%%%%%%

In this paper, we have constructed a geometry that generalises associated vector bundles $E$ built via representations $(\rho, V)$ of the structure group $H$ of a principal bundle $\P(\M, H)$. These generalised associated bundles $E^C$ are built from cocycles for the action of the structure group on the principal bundle, $C:\P\times H \rarrow G $, and representations $(\rho, V)$ of the target group $G$ instead of $H$. We therefore call these  - unimaginatively - \emph{twisted} \mbox{associated} bundles. We have also characterised the space of $V$-valued twisted tensorial forms $\Omega^\bullet_\text{tens}\big(\P, C(H) \big)$, whose subspace of degree $0$ is isomorphic with the space $\Gamma(E^C)$ of  sections of  twisted bundles. We have then defined a notion of twisted \mbox{connection} form on $\P$ that generalises Ehresmann connection $1$-forms in  providing a good exterior covariant derivative on $\Omega^\bullet_\text{tens}\big(\P, C(H) \big)$ - thus allowing to define  a parallel transport on $\Gamma(E^C)$ - and whose curvature belongs to $\Omega^2_\text{tens}\big(\P, C(H) \big)$. As usual, the gauge transformations of the twisted connections and tensorial forms are obtained from the action of  vertical automorphisms, $\Aut_v\big(\P,H\big)$, of the principal bundle. These  geometrical objects provide a new way to implement the gauge principle of physics, so that the local representatives on $\M$ can be seen as twisted gauge fields generalising Yang-Mills type gauge fields. The possibility of building twisted gauge theories straightforwardly ensue. 

As the most immediate extension of this new framework, we have considered the case of  bundles associated to, and tensorial forms on,   a principal bundle $\P(\M, H\times K)$, which behave as twisted objects w.r.t. the action of $H$, but as standard objects w.r.t. the action of $K$. For this reason we call them respectively \emph{mixed} associated bundles $\E^C$ and \emph{mixed} tensorial forms $\Omega^\bullet_\text{tens}\big(\P, C(H)\!\rtimes\! K \big)$. We then defined a corresponding notion of mixed connection, which is both a $H$-twisted connection and a $K$-standard (Ehresmann) connection. It induces a good covariant derivative on $\Omega^\bullet_\text{tens}\big(\P, C(H)\!\rtimes\! K \big)$ and $\Gamma(\E^C)$, and its curvature is mixed tensorial. 

In the same way that Cartan connection $1$-forms are a distinguished subclass of Ehresmann connection $1$-form, we have proposed a definition for a subclass of our twisted/mixed connections that may be a sensible generalisation of Cartan connections. We then call these twisted/mixed Cartan connections. 
 \medskip
 
 To convince ourselves that all this is not idle exploration, we have shown that conformal tractors and local twistors can be seen as simple and slightly degenerate instances of the general framework presented here. The tractor and twistor bundles are mixed vector bundles, while the tractor and twistor connections are mixed Cartan connections. This  clarifies and puts on firmer mathematical ground the attempt \cite{Merkulov1984_I} to interpret local twistors as gauge fields of a kind: they are indeed mixed gauge fields as defined here. We are then led to the surprising conclusion that conformal gravity is an unsuspected example of twisted/mixed  gauge theory.

At least one other example could have been added to the list: projective tractors. Like conformal tractors, these can be constructed bottom-up via prolongation \cite{Bailey-et-al94}, but they can also be more economically obtained via the dressing field method \cite{Attard_et_al2017}. In the latter case, the twisted structure is more readily and explicitly seen. We refrained from presenting this case because it would have added little to the discussion and we wanted to spare the reader an elaboration that might have felt repetitive. %{\color{gray} We may nevertheless give the full treatment of this case in a note at some futur time.}
\medskip

%On the  mathematical side, one might be interested in further developing and understanding the twisted/mixed geometry, so as to e.g. see how much of the standard notions such as holonomy, characteristic classes, Chern-Weyl theory... export to this new context. %and what is their meaning. 

%{\color{blue} Not sure about this: drop it ?}
%{\color{gray}
%%As a strictly mathematical possibility still, g
%Given that our proposal is conservative in starting with a principal bundle, one might want to examine if a well defined super-twisted/mixed geometry starting with a principal super-bundle could be defined consistently, and 
%generalise the framework developed here in the same way that differential super-geometry extends standard differential geometry (as we alluded to in the introduction).
%%(whose relevance to modern speculations in theoretical physics would need to be evaluated).
%
%An Atiyah Lie algebroid is associated to any principal bundle. The class of connections on algebroids defined in \cite{Masson-Lazz} contains Ehresmann connections as a subspace. On may wonder if  there is a natural definition of twisted connections on Atiyah Lie algebroids that contains twisted connections as a subspace. }

Keen readers will perhaps have noted that  twisted objects can be related to standard constructions by ``hiding" the cocycle structure. One can indeed build the twisted associated bundle $\Q=\P\times_{C(H)} G$, which is a $G$-principal bundle under the right action of $G$ on itself. One can show that the standard associated bundle $\Q\times_{G} V$ is isomorphic to the twisted bundle $\P\times_{C(H)} V$, and that a twisted connection on $\P$ induces an Ehresmann connection on $\Q$ (and vice-versa). 
In our view this interesting fact does not imply that the twisted geometry described here isn't worthy of further study. No more than the well-known fact that Lie$G$-valued Cartan connections on some $H$-principal bundle $\P$ induce Ehresmann connections on  the $G$-principal associated bundle $\Q=\P\times_H G$ (and vice-versa\footnote{Provided that the Ehresmann connection $\omega$ satisfies $\ker \omega \cap \phi_*T\P = \emptyset$, with the bundle map $\phi : \P \rarrow \Q$.}, see e.g.  \cite{Sharpe} Appendix A, $\S$3.) means that Cartan geometry isn't a worthy subject in its own right. 

One might therefore be interested in further developing and understanding the twisted/mixed geometry, so as to e.g. see how standard notions such as holonomy, characteristic classes, Chern-Weyl theory, etc... export to this new context. And further still, one may want to examine how this framework extends to the super-differential geometric setup, eyeing possible applications to physics and supergravity.

%{\color{gray} 
%Finally, the cocycle map that is the central object here is pervasive in \emph{ergodic theory}, which has to do with representations on Hilbert spaces. 
%Do we extend part of this to "curved" situation with our framework? Can any link can be done with quantum mechanics ?
%}
\medskip

The relevance of the twisted geometry is perhaps especially easy to argue for in view of the application to physics we have already hinted at :  Cocycles in the form of  Wess-Zumino terms appear most naturally in the study of anomalous quantum functionals on the $\H$-bundle $\A$ of Ehresmann connections of a $H$-principal bundle $\P$. These are C-equivariant functionals, i.e. section of a line twisted bundle associated to $\A$. The natural covariant differentiation of such functionals would require to endow $\A$ with a twisted connection, perhaps more general than the one  defined by the cocycle/Wess-Zumino term introduced in \cite{Blau1988, Blau1989}. 
The twisted geometry may also prove useful in relation to works on boundaries in gauge theories. For example, in \cite{Gomes-et-al2018} so-called \emph{field dependent gauge transformations} are introduced that could perhaps be better understood as $\H$-valued cocycles ($C\! : \A \times \H \rarrow \H$),  and~$\A$ is endowed with a connection that might be interpreted as a twisted connection: if their ``field dependent gauge transformation" $g$ is  indeed seen as a cocycle, compare equations (3.9a)-(3.10)/(3.9b) with (\ref{1st axiom})-(\ref{2nd axiom})/(\ref{inf-activeGT}).
 
\medskip

Finally, let us notice that a priori twisted/mixed gauge theories can be quantized following the same strategies used for standard gauge theories. There seems to be no objections to using path integration methods, and we have seen that the BRST framework is general enough to accommodate our new geometric objects. It is also conceivable to attempt canonical quantization, after all this is what Penrose proposed for twistors (see e.g. \emph{twistor quantization} on p.142 of \cite{Penrose-Rindler-vol2} and references therein). 

However a new layer of complexity may appear here. Indeed, the gauge transformations of our new gauge fields are twisted by  cocycles. Up until now, we did not articulate what is the equivalence relation on these cocycles and how is defined the associated cohomology. What, if anything, does this  cohomology add to the gauge structure of mixed gauge fields? How does it interact with the usual gauge-BRST cohomology familiar in gauge theory? How does it relate to the quantisation problem, and are there e.g. new kind of anomalies associated to it? These intriguing questions seem worthy of further investigation. 
\medskip

\section*{Acknowledgment}  %%%%%%%%%%%%%%%%%%%%%%%%%%%%%%%%%%%%%%%%%%

This work was supported by the Fonds de la Recherche Scientifique - FNRS under the grant PDR n0 T.0022.19.  The author thanks Matthias Blau (ITP, Bern University, CH) and Thierry Masson (CPT, Aix-Marseille University, FR) for pointing out the relation between twisted structures on $\P$ and standard constructions on $\Q$. He also thanks Matthias Blau for pointing out relevant contacts between elements of the literature on anomalies in QFT and the twisted geometry developed here.
\bigskip

{
%\Huge
%\huge
%\LARGE
%\Large
%\large
\normalsize %(default)
%\small
%\footnotesize
%\scriptsize
%\tiny
 \bibliography{Biblio1}
}

\end{document}